\newtheorem{theorem}{Theorem}
\newtheorem{definition}{Definition}
\newtheorem{corollary}{Corollary}
\newtheorem{proposition}{Proposition}
\newtheorem{lemma}{Lemma}
\newtheorem{observation}{Observation}
\newtheorem{claim}{Claim}
\newtheorem{intremark}{Intuitive remark}
\newcommand{\btb}{{\bf BTB}}
\newcommand{\tverts}{root vertices}
\newcommand{\tvertnot}{{\bf Roots}}
\newcommand{\leaf}{\ell}
\newcommand{\pex}{pseudoexpander}
\newcommand{\edgeProb}{n^{-4/5}}
\newcommand{\maxDegree}{3n^{1/5}}
\newcommand{\ppex}{pre-pseudoexpander}
\newcommand{\Ppex}{Pre-pseudoexpander}
\date{}
\title{Non-deterministic branching programs with logarithmic repetition cannot efficiently compute small monotone CNFs}
\author{Oded Lachish and Igor Razgon\\
        \small Department of Computer Science and Information Systems, Birkbeck, University of London\\
        \small      \{oded,igor\}@dcs.bbk.ac.uk
       }
\begin{document}
\maketitle
\begin{abstract}
In this paper we establish an exponential lower bound on the size of syntactic non-deterministic
read $d$-times branching programs for $d \leq \log n /10^5$ computing a class
of monotone CNFs with a linear number of clauses. 
This result provides the first separation
of classes NP and co-NP for syntactic branching programs with a logarithmic repetition 
and the first separation of syntactic non-deterministic branching programs  
with a logarithmic repetition from small monotone CNFs.
\end{abstract}
\section{Introduction}
We study here the complexity of syntactic non-deterministic branching programs ({\sc nbp}s) 
with bounded repetition $d$ (read-$d$-times {\sc nbp}s or simply $d$-{\sc nbp}s).
We prove an exponential
lower bound for $\log n/10^5$-{\sc nbp}s computing a class of functions that are expressible as
monotone {\sc cnf}s with a linear number of clauses. 
As the complement of a small {\sc cnf} is a small
{\sc dnf} having a linear size presentation as $1$-{\sc nbp}, our result separates 
{\sc np} and co-{\sc np} for branching programs with $\log n/10^5$ repetition.

The previous record for separating classes {\sc np} and co-{\sc np} for syntactic {\sc nbp}s was for 
$d$-{\sc nbp}s with $d$ up to $\Theta(\log n /\log \log n)$. It was achieved in 1994, by Jukna \cite{JuknaECCC94}.

To the best of our knowledge,
the paper of Jukna and Schnitger \cite{Jukna4clique} (Theorem 3.3.)
is the only result that separates small monotone {\sc cnf}s 
from $d$-{\sc nbp}s. It does so with $d=o(\log n/\log \log n)$. 
We note that they do not state this explicitly, but we can easily get to this conclusion because of the following.
Their lower bound is established 
for a function on graphs that is true if and only if the given
graph does not contain a $4$-clique. This function can be presented as a {\sc cnf}
with $6$ literals per clause, all of them negative. 
A simple reduction shows that if all these negative literals are replaced  by 
positive ones then the lower bound retains for the obtained monotone {\sc cnf}. 
We are also not aware of other results separating $d$-{\sc nbp}s, with $d>1$, from monotone functions or from 
non-monotone {\sc cnf}s.

Prior to these two results, in 1993, Borodin, Razborov and Smolensky 
proved an exponential
lower bound for $d$-{\sc nbp}s for $d$ up to $\Theta(\log n)$ computing a class of functions that is not known to be monotone or 
to have small $d$-{\sc nbp}  for its complement.
The results mentioned created  two natural open questions: 
(i) is there $d=\Theta(\log n)$ such that 
classes {\sc np}
and co-{\sc np} are different for syntactic branching programs with repetition up to $d$?; and
(ii) is it possible to separate  $d$-{\sc nbp} for \emph{some} $d=\Theta(\log n)$ 
from monotone functions or from {\sc cnf}s  or from monotone {\sc cnf}s? 
Our result resolves these problems by providing a positive answers to both.

As mentioned above,
besides \cite{Jukna4clique}, we are not aware of other super-polynomial lower bounds for $d$-{\sc nbp}s with
$d>1$ on {\sc cnf}s nor on monotone functions. 
However, such lower bounds exist for read-once branching
programs. For example, an exponential lower bound for deterministic read-once branching programs on
small monotone {\sc cnf}s was presented in \cite{GalCNF}. In \cite{RazgonLBAlgo}, the second author showed 
a parametrized $n^{\Omega(k)}$ for $1$-{\sc nbp}s lower bound computing monotone $2$-{\sc cnf}s whose underlying
graph has tree-width at most $k$. This lower bound can easily be converted into an exponential one by taking
the underlying graph to be an expander.

Currently, there are no super-polynomial lower bounds for syntactic branching
programs with super-logarithmic repetition, even deterministic ones. 
However, there is an exponential
lower bound for \emph{oblivious branching programs} with $o(\log^2 n)$ repetition \cite{OBDDLOG2}
(presented in a more general form of $o(n \log^2 n)$ trade-off). 
We note that oblivious branching programs are a special case of {\sc nbp}s 
with the same order of variable occurrences along every source-sink path. 

Researchers have also investigated branching programs for functions with non-Boolean
domains of variables. One purpose of considering this framework for {\sc nbp}s
is to obtain lower bounds for \emph{semantic} rather than syntactic {\sc nbp}s
that bound the number of variable occurrences on consistent paths only.
Such lower bounds are known for functions with non-Boolean domains \cite{Juknasemantic},
however, in the Boolean case, super-polynomial lower bounds have not yet
been established even for semantic $1$-{\sc nbp}s.

For a more detailed survey of research on branching programs, we refer the reader to
the monographs of Jukna \cite{Yukna} and Wegener \cite{WegBook}.

\paragraph{Overview of result:}
We construct a dedicated family of small monotone formulae. We then prove that 
for every member $\varphi$ of this family every  $\log n/10^5$-{\sc nbp} $Z$, where $n$ is the number of variables in $\varphi$, computing $\varphi$ has size exponential in $n$.

From a birds eye view the proof proceeds as follows.
Given an $\log n/10^5$-{\sc nbp} $Z$ computing a member $\varphi$ of the family, 
we show that the properties of $\varphi$ imply that every computational path $P$ of $Z$ contains 
a special set of $O(\log n)$ vertices, we call a \emph{determining set}. 
Then we show that $Z$ has an exponential number of distinct determining sets.
This concludes our result since there can be this many distinct determining sets 
only if the size of $Z$ is exponential in $n$.
We next describe how this is achieved.

We first reduce the problem of proving exponential lower bounds for $d$-{\sc nbp}s to the special case of proving exponential lower bounds for \emph{uniform} $d$-{\sc nbp}, 
which are {\sc nbp}s such that along every one of their source-sink paths, every variable occurs exactly $k$ times.
Then we look at an arbitrary computational path $P$ in $Z$ as a string where the variables on the path are the letters.
We use a structural theorem of Alon and Maass (Theorem 1.1. \cite{AMaass}), to deduce that there exists a set of $O(\log n)$ indices that split the string 
into a set of consecutive substrings (\emph{intervals}) each between two consecutive indices of this set.
This set of intervals has a special property that there exists two large sets of specific letters (variables), where by large we mean $n^\alpha$ for some $\alpha>0$, such that the first, which we call the \emph{odd set}, has letters that only appear in odd intervals, where the "odd" is according to their order, and the second, which we call the \emph{even set}, has letters that only appear in even intervals. 
The set of vertices on the path $P$ corresponding to the set of indices is the determining set $X$. 
We next explain how we use the special property of the intervals.

In order to utilize the properties of the odd and even sets we defined the dedicated family of {\sc cnf}s based on a special type of graphs we call \emph{\pex}s. 
This construction ensures that there is a large matching $M_X$ between members of the odd set and the even set, 
with each pair in the matching corresponding to a distinct clause in $\varphi$. 
Based on this matching we define a formula $\psi_X$, which is satisfied by the assignment of the computational path $P$. 
We construct a  probability space
on the set of satisfying assignment of $\varphi$, such that the probability that a formula $\psi_X$ is satisfied
is exponentially small in the size of $M_X$ and, since $M_X$ is large, is, in fact, exponentially small in the number of variables of $\varphi$.
Clearly the conjunction, over all the computational paths of $Z$, of formulae $\psi_X$ corresponding to the paths, is satisfied with probability $1$. 
Thus, there must be exponentially many formulae $\psi_X$ and in turn exponentially many distinct determining sets.

We remark, regarding the construction of {\sc cnf}s, that we first prove the 
existence of the considered class of {\sc cnf}s $\varphi$ non-constructively, using a
probabilistic method. In the Appendix, we show that there is a deterministic polynomial 
time procedure constructing a class of small monotone {\sc cnf}s $\varphi^*$ such 
that $\varphi$ can be obtained from $\varphi^*$ by a partial assignment to the latter. 
It can be observed that the exponential lower bound on $\log n/10^5$-{\sc nbp}s retains
for $\varphi^*$. That is, our result holds for a set of constructively created {\sc cnf}s.

\paragraph{Structure of the paper.}
The rest of the paper is structured as follows.
Section \ref{sec:prelim} contains all the preliminary definitions and notations.
Section \ref{sec:mainres} proves the main result with the proofs of the auxiliary results
postponed to the later sections.
In particular, Section \ref{sec:pe} establishes existence of  \pex s,
Section \ref{sec:cons} proves that each computational path of the considered branching programs
contains a determining set and Section \ref{sec:lbsize} proves 
the exponentially low probability claim described above.
The paper also has an appendix consisting of four sections.
In the Section A we provide a poly-size simulation of $d$-{\sc nbp}s
by uniform $d$-{\sc nbp}s. In Section B, we show
that the exponential lower bound for $O\log n/10^5$-{\sc nbp}s applies
to `constructively created' {\sc cnf}s (proving the theorem at the end of Section \ref{sec:mainres}).
In Section C we prove auxiliary lemmas for Section  \ref{sec:cons}.
In Section D we prove the validity of the definition of probability space used in Section \ref{sec:lbsize}.

\section{Preliminaries} \label{sec:prelim}
{\bf Sets of literals as assignments.}
In this paper by a \emph{set of literals} we mean one that does not
contain both an occurrence of a variable and its negation.
For a set $S$ of literals we denote by $Vars(S)$ the set of variables
whose literals occur in $S$ (the $Vars$ 
notation naturally generalizes to {\sc cnf}s, Boolean functions, and branching programs).
A set $S$ of literals
represents the truth assignment to $Vars(S)$ where variables occurring
positively in $S$ (i.e. whose literals in $S$ are positive) are assigned with $true$
and the variables occurring negatively are assigned with $false$.
For example, the assignment $\{x_1 \leftarrow true, x_2  \leftarrow true, x_3 \leftarrow false\}$
to variables $x_1,x_2,x_3$ is represented as $\{x_1,x_2,\neg x_3\}$.

{\bf Satisfying assignments.}
Let $\varphi$ be a {\sc cnf}. A set $S$ of literals \emph{satisfies} a clause $C$ of $\varphi$
if at least one literal of $C$ belongs to $S$. If all clauses of $\varphi$ are satisfied by $S$ then
$S$ \emph{satisfies} $\varphi$. If, in addition, $Vars(\varphi)=Vars(S)$ then we say that $S$ is a 
\emph{satisfying assignment} of $\varphi$. The notion of a satisfying assignment naturally extends
to Boolean functions $F$ meaning a truth assignment to $Vars(F)$ on which $F$ is true.

\begin{definition}[{\bf Non-deterministic branching programs and related notions}]
~~~~\newline
\begin{itemize}
\item A \emph{Non-deterministic branching program} ({\sc nbp}) $Z$ is a directed acyclic 
multigraph ({\sc dag}) with one source and one sink with some edges 
labelled with literals of variables. $Vars(Z)$ denotes the set of variables whose literals
occur on the edges of $Z$.
\item A directed source-sink path $P$ of $Z$ is a \emph{computational path} of $Z$ if opposite 
literals of the same variable do not occur as labels of edges of $P$.
We denote by $A(P)$ the set of literals labelling the edges of $P$.
We call $A(P)$ the assignment \emph{carried} by $P$. 
\item The function $F_Z$ \emph{computed} by $Z$ is a function on $Vars(Z)$. 
A set of literals
over $Vars(Z)$ is a satisfying assignment if and only if there is a computational path $P$ of $Z$
such that $A(P) \subseteq S$.
\item The size of $Z$, denoted by $|Z|$ is the number of its vertices. 
\end{itemize}
\end{definition}

\begin{definition}[{\bf Read $d$ times {\sc nbp}}]
A syntactic \emph{read-$d$-times} {\sc nbp} ($d$-{\sc nbp}) is an {\sc nbp} in which each variable
occurs at most $d$ times on each source-think path. A $d$-{\sc nbp} is \emph{uniform}
if each variable occurs \emph{exactly} $d$ times on each source-sink path.
\end{definition}

\begin{lemma} \label{lem:uniform}
For every  $d$-{\sc nbp} $Z$,
there exist a uniform $d$-{\sc nbp} computing
$F_Z$ and having size $O(|Z|^4d)$
\end{lemma}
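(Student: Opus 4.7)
The plan is to construct an equivalent uniform $d$-NBP $Z'$ from $Z$ via two successive padding stages. First I would make $Z$ length-uniform (every source-sink path has the same length); second I would pad the result to be read-uniform (every variable occurring exactly $d$ times on every source-sink path). The length-uniform intermediate tightly constrains path structure, which is what makes the subsequent read-uniform padding feasible without an exponential blowup in the number of variables.

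For stage one, I would form a layered product: vertices $(v, i)$ for $v \in V(Z)$ and $i \in \{0, \ldots, L\}$, where $L$ is the length of a longest source-sink path of $Z$; edges $(u, i) \to (v, i+1)$ labelled $\lambda$ for each edge $(u, v) \in E(Z)$ with label $\lambda$ and each valid $i$; and unlabelled ``wait'' edges $(v, i) \to (v, i+1)$. The source is $(s, 0)$ and the sink is $(t, L)$. A straightforward induction on $i$ shows every source-sink path in the resulting NBP $Z_1$ has length exactly $L$, while the set of computational paths---and hence $F_{Z_1} = F_Z$---is preserved. Since original edges are used unchanged, each variable still occurs at most $d$ times on every source-sink path of $Z_1$, and $|Z_1| = O(|Z|^2)$.

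For stage two I would attach padding structures that force the per-variable counts to equal $d$. Different paths of $Z_1$ read different variables different numbers of times, so the padding has to be path-sensitive. The naive approach---tracking the counts of all $n$ variables at once in the state---blows up by $(d+1)^n$, and avoiding this is the main obstacle. The plan is to never carry the full count vector: instead, take a product of $Z_1$ with a single counter of range $\{0, \ldots, d\}$ and a ``phase marker'' indexed by $V(Z)$, yielding vertices $(v, w, k)$. The marker $w$ records which variable is currently being padded and the vertex of $Z_1$ where the padding phase began; the counter $k$ tracks reads of the currently-padded variable; and the length-uniformity from stage one guarantees that the phases compose consistently into source-sink paths of the required labelled length. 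The resulting $Z_2$ has size of order $|Z_1| \cdot |Z| \cdot d = O(|Z|^4 d)$. The delicate step, and the one I expect to be the hardest, is showing that the $(v, w, k)$ indexing genuinely captures enough information to enforce exact per-variable counts while computing $F_Z$; this will require a careful bookkeeping argument using the fact that $Z$ is read-$d$-times, so the per-variable counter range $\{0, \ldots, d\}$ is never exceeded during any single padding phase.
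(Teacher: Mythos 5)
Your stage 1 (length-uniformity via a layered product with wait edges) is sound, but it is not the route the paper takes and it does not buy you what you need: length-uniformity is much weaker than count-uniformity. The gap is in stage 2, and it is precisely the step you flag as hard. Two distinct source-to-$v$ paths in $Z_1$ can arrive at the same vertex $v$ with incomparable read histories: one may have read $x$ three times and $y$ once, the other $x$ once and $y$ three times. Once they merge at $v$, any continuation must bring \emph{both} histories up to exactly $d$ reads of every variable, which is impossible from a shared suffix since the two histories demand different residual counts of $x$ and $y$. Splitting $v$ into copies $(v,w,k)$ with $w \in V(Z)$ a phase marker and $k \in \{0,\dots,d\}$ a single-variable counter does not resolve this, because the state carries one coordinate of count information while the obstruction lives in an $n$-dimensional count vector; the two paths above would be indistinguishable in the $(v,w,k)$ space whenever their histories agree on the currently-tracked variable but disagree elsewhere. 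There is no polynomial-size state that repairs this after the fact, so stage 2, as sketched, does not go through.

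The paper's construction avoids the issue by never allowing such a merge to occur. It processes the vertices of $Z$ in topological order; at each $v_i$ with in-degree greater than one, it first subdivides all labelled in-edges so that every in-edge of $v_i$ is unlabelled, and then pads each in-edge with a chain of dummy reads whose length is determined by the (already uniform, by the induction invariant) per-variable counts at the predecessor, chosen so that afterwards \emph{all} source-to-$v_i$ paths agree on every variable's count. The invariant that every already-processed vertex sees a unique count vector is exactly what makes each padding step local and deterministic, and is exactly what your layered product lacks at merge points. If you wanted to salvage your approach, you would need to install an analogous per-vertex count-uniformity invariant before the merges, which effectively reproduces the paper's topological iterative argument rather than replacing it.
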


The proof of Lemma \ref{lem:uniform} appears in the Appendix.

It follows from Lemma \ref{lem:uniform} that an exponential lower bound obtained
for a uniform $d$-{\sc nbp} applies to a $d$-{\sc nbp} in general.
Therefore, in the rest of the paper (except, obviously, for the proof of Lemma \ref{lem:uniform}),
we assume (without stating it explicitly) that all the considered $d$-{\sc nbp}s are uniform.
For the purpose of establishing the lower bound, the advantage of considering uniform {\sc nbp}s 
is that for any two paths of an {\sc nbp} having the same initial and final vertices,
the sets of variables labelling these paths are the same, as proved in the next lemma.

\begin{lemma} \label{lem:samevar}
Let $Z$ be a $d$-{\sc nbp}.
Let $P_1$ and $P_2$ be two paths of $Z$ having the same initial and final vertices.
Then a variable occurs on $P_1$ if and only if it occurs on $P_2$. 
\end{lemma}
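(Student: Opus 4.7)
The plan is to exploit uniformity by completing both paths into full source-sink paths with the same prefix and suffix, and then use the exact equality ``$d$ occurrences per variable per source-sink path'' to cancel everything outside the two paths.

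Let $u$ be the common initial vertex of $P_1$ and $P_2$ and let $v$ be their common final vertex. Without loss of generality we may assume every vertex of $Z$ lies on some source-sink path (vertices not on any such path do not influence $F_Z$ and may be removed), so there exist a directed path $Q$ from the source $s$ to $u$ and a directed path $R$ from $v$ to the sink $t$. For each $i\in\{1,2\}$ the concatenation $Q\cdot P_i\cdot R$ is a source-sink path of $Z$.

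For a variable $x\in Vars(Z)$, write $n_Q(x)$, $n_{P_i}(x)$, $n_R(x)$ for the number of edges labelled by a literal of $x$ on $Q$, $P_i$, $R$, respectively. Since $Z$ is uniform, every variable occurs exactly $d$ times on every source-sink path, so
\begin{equation*}
n_Q(x)+n_{P_1}(x)+n_R(x) \;=\; d \;=\; n_Q(x)+n_{P_2}(x)+n_R(x).
\end{equation*}
Cancelling the common terms yields $n_{P_1}(x)=n_{P_2}(x)$ for every variable $x$. In particular, $x$ occurs on $P_1$ (i.e.\ $n_{P_1}(x)\ge 1$) if and only if $x$ occurs on $P_2$, which is the claim.

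The argument is essentially a counting identity, so the only substantive step is the existence of the extending paths $Q$ and $R$; this is the ``main obstacle'' in the sense that it is the only place where one needs a structural observation beyond the definition of uniformity. It is harmless because deleting vertices that are not on any source-sink path preserves both $F_Z$ and uniformity, and hence may be assumed throughout the rest of the paper.
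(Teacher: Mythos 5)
Your proof is correct and follows essentially the same approach as the paper: extend $P_1$ and $P_2$ to source-sink paths $Q\cdot P_i\cdot R$ using common prefix $Q$ and suffix $R$, then invoke uniformity on those two source-sink paths. The only cosmetic difference is that you cancel the counts directly (yielding the slightly stronger conclusion $n_{P_1}(x)=n_{P_2}(x)$), whereas the paper argues by contradiction, combining uniformity on $Q_2$ with the read-$d$-times bound on $Q_1$; your remark about discarding vertices not lying on any source-sink path is a reasonable way to justify the existence of $Q$ and $R$, which the paper takes for granted.
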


{\bf Proof.}
Let $u$ and $v$ be the starting and ending vertices of $P_1$ and $P_2$.
Denote by $start$ and $end$ the source and sink vertices of $Z$, respectively.
Let $P_0$ be a $start \longrightarrow u$ path of $Z$ and $P_3$ be 
a $v \longrightarrow end$ path of $Z$.
Due to the acyclicity of $Z$, both $Q_1=P_0+P_1+P_3$ (the concatenation of $P_0,P_1,P_3$)
and $Q_2=P_0+P_2+P_3$ are source-sink paths of $Z$. 
Suppose there is a variable $x$ occurring on $P_1$ but not on $P_2$.
Then, to supply $d$ occurrences of $x$ on $Q_2$ required because $Z$ is uniform, 
$x$ occurs $d$ times in $P_1 \cup P_3$. It follows that on $Q_1$, $x$ occurs
at least $d+1$ times in contradiction to the definition of a $d$-{\sc nbp}.
$\blacksquare$


Now, we introduce terminology related to graphs and trees and we will need 
for definition of the {\sc cnf} class.

\begin{definition}
A rooted tree is called \emph{extended} if none of its leaves has a sibling.
\end{definition}

\begin{definition} [{\bf Binary tree based graphs} (\btb)] \label{def:btb}
A graph $H$ is a binary tree based graph if:
\begin{enumerate}
\item $H$ is an edge-disjoint union of extended trees ${\bf T}(H)=\{T_1, \dots, T_m\}$ with roots 
      $t_1, \dots, t_m$ which we call the \emph{\tverts} of $H$ and denote by $\tvertnot(H)$. The set of all leaves of 
      $T_1, \dots, T_m$ is denoted by ${\bf Leaves}(H)$.
\item Each vertex of $u \in {\bf Leaves}(H)$ is a leaf of exactly two trees of ${\bf T}(H)$.
\item Any two trees of ${\bf T}(H)$ have at most one vertex in common.
This common vertex is a leaf in both trees.
\end{enumerate}
In what follows we denote $|V(H)|$ by $n$ and $|{\bf Roots}(H)|$ by $m$.
\end{definition}

\begin{definition}[{\bf Adjacency of trees in} \btb] \label{def:adjtrees}
Let $H \in \btb$ with ${\bf T}(H)=\{T_1, \dots, T_m\}$
and ${\bf Roots}(H)=\{t_1, \dots, t_m\}$ where each $t_i$ is the root of $T_i$.
\begin{itemize}
\item $T_i$ and $T_j$ are adjacent if and only if they share a leaf in common.
The common leaf of $T_i$ and $T_j$ is denoted by $\leaf_{i,j}$.
\item The unique path connecting roots of $T_i$ and $T_j$ in $T_i \cup T_j$ is denoted by $P_{i,j}$.
\item The path connecting $t_i$ and $\leaf_{i,j}$ is in $T_i$ is denoted by $P_{i \rightarrow j}^{1/2}$
and the path connecting $t_j$ and $\leaf_{i,j}$ is in $T_j$ is denoted by $P_{j \rightarrow i}^{1/2}$.
(The $1/2$ in the subscript says that these paths are `halves' of $P_{i,j}$.)
\end{itemize}
\end{definition}

\begin{definition}[{\bf Pseudoedges and pseudomatchings}] \label{def:pseudoelem}
Let $H \in \btb$ and $U,V$ be two disjoint subsets of ${\bf Roots}(H)$.

\begin{itemize}
\item A \emph{pseudoedge} of $H$ is a pair $\{t_i,t_j\}$ of roots such that $T_i$
and $T_j$ are adjacent. The \emph{pseudodegree} of $t_i$ is the number
of pseudoedges containing $t_i$
\item A \emph{pseudomatching} of $H$ is a set of pseudoedges of $H$
that do not share common ends. 
\item A pseudomatching $M$ is \emph{between}
$U$ and $V$ if for each $e \in M$, $|e \cap U|=|e \cap V|=1$.
\end{itemize}
For a pseudomatching $M$, we denote $\bigcup_{e \in M} e$ by $\bigcup M$.
\end{definition}

\begin{definition} [{\bf Underlying graph for a graph in \btb}] \label{def:under}
The underlying graph $U(H)$ is a graph whose vertices are ${\bf Roots}(H)$ 
and two vertices in $U(H)$ are adjacent if and only
if their corresponding trees are adjacent in $H$. 
\end{definition}

\section{The main result} \label{sec:mainres}
{\bf Strategy of the proof.} 
  Fix $Z$ to be a $d$-{\sc nbp} that computes a $\varphi$.
  The goal is to show that $Z$ is exponentially large in the number $n$ of variables of $\varphi$.
  To do so we prove that every computational path $P$ of $Z$ has a set of variables $X$, of size logarithmic in $n$
  that has specific properties which enable us to prove the following: $Z$ contains an exponential in $n$ number of such distinct sets. 
   Since the size of these sets is logarithmic in $n$, the lower bound on the size of $Z$ easily follows.
   The specific properties of $X$ are listed next.
   \begin{enumerate}
   \item There is a special formula $\psi_X$ defined on the variables of $\varphi$
   such that for any path $Q$ containing $X$, the assignment $A(Q)$ carried by $Q$ satisfies $\psi_X$.
   \item There is a probability
   space on the set of satisfying assignments of $\varphi$, 
   so that the probability that an assignment drawn from the space satisfies $\psi_X$
   is exponentially small in $n$.
   \end{enumerate}
   
   Now, since every computational path contains a set $X$, every satisfying assignment $S$ of $\varphi$
   satisfies some $\psi_X$. 
   This, together with the last statement above, implies that the number of
   formulae $\psi_X$ must be exponentially large in $n$, and hence also the number of sets $X$.

The {\sc cnf} for which we prove the lower bound is based on a special class of graphs we call \pex s
and defined next and afterwards we define the actual {\sc cnfs}. 
Recall that $m$ denotes the number of roots of $H \in \btb$. 

\begin{definition}[{\bf Pseudoexpanders}]\label{def:pseudoexp}
$H \in \btb$ 
is a \emph{pseudoexpander} if the following two conditions hold.
\begin{enumerate}
\item \emph{Small height property.} The height (the largest number of vertices of a root-leaf path) 
of each $T_i$ is at most $(\log m)/4.9+3$.
\item \emph{Large pseudomatching property.} For any two disjoint subsets $U,V$ of $\tvertnot(H)$ of size at least $m^{0.999}$ each,
there is a pseudomatching of $H$ between $U$ and $V$ of size at least $m^{0.999}/3$.
\end{enumerate}
The the class of all pseudoexpanders is denoted by ${\bf PE}$.
\end{definition}

\begin{definition} \label{def:phipe}
For $H$, ${\bf T}(H)$ and $\tvertnot(H)$ as in Definition \ref{def:adjtrees},
a CNF $\phi_H$ is defined as follows.

The variables of $\phi_H$ are the vertices of $H$.
The clauses $C_{i,j}$ of $\phi_H$ are in bijective correspondence with the
pseudoedges of $H$ and the literals of $C_{i,j}$ are $V(P_{i,j})$.

The class $\{\phi_H|H \in {\bf PE}\}$ is denoted by ${\bf \Phi}({\bf PE})$.
\end{definition}


We are going to prove that $O(\log n)$-{\sc nbp}s require an exponential
size to compute the class ${\bf \Phi}({\bf PE})$.
Note that Definition \ref{def:pseudoexp} does not obviously imply that even one
pseudoexpander exists, while, for the purpose of the proof we need an infinite number
of them. This is established in the following theorem.

\begin{theorem} \label{peinfprel}
For each sufficiently large $m$ there is a \pex with $m$ roots and max-degree at most 
$m^{1/4.9}$. That is, there are infinitely many $n$ for which there is a \pex with $n$ 
vertices and the pseudodegree of each root being at most the number of roots to the power
of ${1/4.9}$. 
\end{theorem}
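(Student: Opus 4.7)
My plan is to construct the pseudoexpander probabilistically, using a sparse Erd\H{o}s--R\'enyi random graph as the underlying graph $U(H)$. Fix $p = c \cdot m^{1/4.9 - 1}$ for a small constant $c \in (0,1)$ and sample a random graph $G$ on vertex set $\{t_1, \dots, t_m\}$ with each potential edge present independently with probability $p$. I then build $H$ as follows: for every edge $\{t_i, t_j\}$ of $G$ introduce a fresh shared vertex $\leaf_{i,j}$, and for every $i$ take $T_i$ to be a minimum-height \emph{extended} binary tree rooted at $t_i$ whose leaves are exactly $\{\leaf_{i,j} : j \sim_G i\}$, with all internal vertices of $T_i$ private to $T_i$. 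By construction each $\leaf_{i,j}$ lies in exactly the two trees $T_i$ and $T_j$, any two trees share at most a single leaf, and the trees are edge-disjoint, so $H \in \btb$ with $U(H) = G$.

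I would then verify the small-height and pseudodegree conditions via a Chernoff-plus-union-bound. The expected degree of each vertex of $G$ is $\Theta(m^{1/4.9}) \gg \log m$, so with probability $1 - o(1)$ the maximum degree of $G$ is at most $m^{1/4.9}$ and no vertex is isolated. Since the number of leaves of $T_i$ equals $\deg_G(t_i) \le m^{1/4.9}$, the minimum-height extended binary tree on that many leaves has height at most $\lceil (\log_2 m)/4.9 \rceil + 2 \le (\log m)/4.9 + 3$, giving the small-height property.

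The only delicate step is the large-pseudomatching property. Fix disjoint $U, V \subseteq \tvertnot(H)$ with $|U| = |V| = \lceil m^{0.999} \rceil$ (it suffices to handle the smallest size, since larger sets only help). By K\"onig's theorem, $G[U, V]$ fails to contain a matching of size $m^{0.999}/3$ iff there exist $U' \subseteq U$, $V' \subseteq V$ with $|U'| + |V'| < m^{0.999}/3$ such that no $G$-edge joins $U \setminus U'$ to $V \setminus V'$. These remaining sides each have size at least $2m^{0.999}/3$, so for any fixed $(U', V')$ the probability of this no-edge event is at most $(1-p)^{(2m^{0.999}/3)^2} \le \exp(-\Omega(p \cdot m^{1.998})) = \exp(-\Omega(m^{1.2}))$. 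A union bound over the at most $2^{2m^{0.999}}$ choices of $(U', V')$ and the at most $\exp(O(m^{0.999} \log m))$ choices of $(U, V)$ leaves $\exp(-\Omega(m^{1.2})) = o(1)$.

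Intersecting these high-probability events produces an $H$ satisfying all clauses of Definition~\ref{def:pseudoexp}, so such an $H$ exists for every sufficiently large $m$; letting $m \to \infty$ yields infinitely many $n = |V(H)|$ witnessing the second statement. The main obstacle is the tight calibration in the third paragraph: the exponent $1/4.9$ is tuned precisely so that $p \cdot m^{1.998}$ dominates both the $m^{0.999}$-sized vertex-cover enumeration and the $m^{0.999} \log m$-sized enumeration of $(U, V)$ pairs, while simultaneously keeping the pseudodegree and the tree heights within their stated tolerances; any substantially smaller exponent would break the matching union bound, and any substantially larger one would break the height bound.
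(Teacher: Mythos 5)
Your proposal is correct and follows essentially the same strategy as the paper: build $U(H)$ as a sparse Erd\H{o}s--R\'enyi random graph with edge probability $\Theta(m^{1/4.9-1})$, attach to each root $t_i$ a minimum-height extended binary tree with one leaf per incident edge, and verify the degree and small-height bounds via Chernoff and the matching bound via a union bound over exponentially many events whose individual probability is $\exp(-\Omega(m^{1.2}))$. The one genuine difference is how the large-pseudomatching property is derived from the random graph: you invoke K\"onig's theorem, characterizing failure as the existence of a small bipartite vertex cover $(U',V')$ and union-bounding over the at most $2^{2\lceil m^{0.999}\rceil}$ such pairs, whereas the paper splits this into two steps (Lemma~\ref{lem:PEreduction} and Lemma~\ref{lemma:AlmostPEExist}): first show by a direct union bound that any two disjoint sets of size $\geq m^{0.999}/2$ are joined by an edge, then observe that a maximal matching leaving $m^{0.999}/2$ vertices exposed on each side would contradict this. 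The two are equivalent in strength --- both reduce to the same estimate $\exp(-\Omega(p\,m^{1.998}))$ dominating $\exp(O(m^{0.999}\log m))$ --- but the paper's greedy argument avoids K\"onig at the cost of an extra lemma, while yours carries a slightly heavier union bound (over covers as well as set pairs) in a single shot. Your explicit note that w.h.p.\ no vertex of $G$ is isolated is a useful detail the paper's proof glosses over, since an isolated root would produce a degenerate single-vertex tree whose root is a leaf belonging to only one tree, violating condition~2 of Definition~\ref{def:btb}.
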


Theorem \ref{peinfprel} is proved in Section \ref{sec:pe} using probabilistic method.

\begin{corollary} \label{peinf}
The elements of ${\bf \Phi}({\bf PE})$ are monotone CNFs with linear number of clauses
and there is an unbounded number of them.
In particular, there is an unbounded number of $\phi_H$ where $H$ is a \pex with $m$
roots such that the pseudodegree of each root is at most $m^{1/4.9}$.
\end{corollary}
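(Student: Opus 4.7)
The plan is to read off each claim of the corollary from the definitions and from Theorem \ref{peinfprel}, so the proof is really just bookkeeping rather than a new argument.

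First, I would verify monotonicity. By Definition \ref{def:phipe}, the "literals" of each clause $C_{i,j}$ are the vertices along the path $V(P_{i,j})$; since the variables of $\phi_H$ are the vertices of $H$, each such entry is to be interpreted as the positive literal of that variable. Hence every clause is a disjunction of positive literals, so $\phi_H$ is monotone.

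Next, I would bound the number of clauses by $n$. Clauses are in bijection with pseudoedges, and by Definition \ref{def:btb} any two trees of ${\bf T}(H)$ share at most one vertex, which must be a leaf. Conversely, each leaf in ${\bf Leaves}(H)$ lies in exactly two trees and therefore determines a unique pseudoedge. So the number of clauses equals $|{\bf Leaves}(H)|$, which is at most $|V(H)|=n$, giving a linear (indeed, at most $n$) number of clauses.

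Finally, for the unboundedness, I would invoke Theorem \ref{peinfprel}: for every sufficiently large $m$ there exists a pseudoexpander $H_m$ with $m$ roots whose maximum pseudodegree is at most $m^{1/4.9}$. Since distinct values of $m$ produce pseudoexpanders on distinct numbers of vertices (and hence distinct $\phi_{H_m}$), the family $\{\phi_{H_m}\}$ is infinite, and the pseudodegree bound is inherited verbatim. There is no real obstacle here beyond checking the bijection between leaves and pseudoedges used to obtain the linear clause count; everything else is a direct translation of Definitions \ref{def:btb}, \ref{def:pseudoelem}, \ref{def:phipe} together with Theorem \ref{peinfprel}.
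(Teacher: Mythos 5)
Your proposal is correct and matches the argument that the paper leaves implicit (the corollary is stated without a written proof, as an immediate consequence of Theorem \ref{peinfprel} and the definitions). The bijection you set up between leaves and pseudoedges — each leaf lies in exactly two trees and so names a unique pseudoedge, and any two adjacent trees share exactly one leaf — is exactly the observation needed to get the linear clause count, and the remaining claims are read off from Definition \ref{def:phipe} and Theorem \ref{peinfprel} just as you say.
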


We next define the formulas that will be associated with the computational paths as described in the overall intuition.

\begin{definition} [{\bf Matching CNF}] \label{def:mcnf}
Let $M$ be a pseudomatching of $H \in \btb$. 
A matching CNF w.r.t. $M$ consists of clauses $C^{1/2}_{i,j}$ for each $\{t_i,t_j\} \in M$,
where $C^{1/2}_{i,j} \in \{V(P^{1/2}_{i \rightarrow j}),V(P^{1/2}_{j \rightarrow i})\}$. 

The set of all matching CNFs w.r.t. $M$ is denoted by ${\bf CNF}(M)$.
\end{definition}


\begin{definition} [{\bf Matching OR-CNF}] \label{def:morcnf}
An matching OR-CNF w.r.t. a pseudomatching $M$ of $H \in \btb$ has the
form $Conj(S_1) \wedge \phi_1 \vee \dots \vee Conj(S_q) \wedge \phi_q$,
where $\{S_1, \dots, S_q\}$ is the family of all sets of literals 
over $V(H) \setminus \bigcup M$, $Conj(S_i)$ is the conjunction
of literals of $S_i$
and $\phi_i \in {\bf CNF}(M)$ for each $1 \leq i \leq q$.
The set of all matching OR-CNFs w.r.t. $M$ is denoted by ${\bf ORCNF}(M)$.
\end{definition}

We now define the special set of variables that appears in every computational path.
Recall first that $n$ and $m$ denote the number of vertices of roots of $H$, respectively.

\begin{definition} [{\bf Determining set}] \label{def:detset}
Let $Z$ be an NBP implementing $\phi_H$ for $H \in \btb$.
A set $X \subseteq V(Z)$ is called an $(a,b)$-\emph{determining set}
if $|X| \leq a$ and there are a pseudomatching $M$ of $H$ of size at
least $b$ and $\psi_X \in {\bf ORCNF}(M)$ such that for any computational
path $Q$ of $Z$ containing $X$, $A(Q)$ satisfies $\psi_X$.
We call $\psi_X$ a $b$-\emph{witnessing formula} for $X$. 
\end{definition}


\begin{theorem} \label{lbcons}
Let $Z$ be an $d$-NBP implementing $\phi(H)$ for $H \in {\bf PE}$ 
with $d \leq \log n/10^5$.
Then each computational path of $Z$ contains a $(\log n/4000,m^{0.999}/3)$-determining set.
\end{theorem}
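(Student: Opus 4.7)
The plan is to build the determining set $X$ by applying the Alon-Maass structural theorem (Theorem 1.1 of \cite{AMaass}) to the computational path $P$ viewed as a string of variable labels, then harvest its output through the pseudoexpander property of $H$ to produce the matching $M$ and the witnessing matching OR-CNF $\psi_X$. Viewing $P$ as a sequence whose letters are the variables labelling its edges, uniformity of $Z$ gives every letter multiplicity exactly $d \leq \log n/10^5$. Alon-Maass then yields a set $I$ of at most $\log n/4000$ split indices partitioning $P$ into consecutive intervals, together with two disjoint sets $U_0, V_0 \subseteq Vars(Z)$ of polynomially large size, such that every occurrence of a variable in $U_0$ (respectively $V_0$) is confined to odd-numbered (respectively even-numbered) intervals. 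Defining $X$ to be the set of vertices of $P$ sitting at the positions in $I$ makes the size bound $|X|\leq \log n/4000$ automatic.

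The next step is to transfer the setup from the variables of $Z$ to the root vertices of $H$, where the pseudoexpander property can bite. Setting $U = U_0 \cap \tvertnot(H)$ and $V = V_0 \cap \tvertnot(H)$, the small-height bound on the extended trees forces $m$ to be polynomially large in $n$, and a routine bookkeeping of the Alon-Maass parameters yields $|U|,|V|\geq m^{0.999}$. The large-pseudomatching item of Definition \ref{def:pseudoexp} then produces a pseudomatching $M$ between $U$ and $V$ of size at least $m^{0.999}/3$, matching the second parameter of the theorem. For any computational path $Q$ containing $X$, Lemma \ref{lem:samevar} applied segment-by-segment to the pieces of $Q$ between consecutive vertices of $X$ shows that each such piece carries exactly the same variable set as the corresponding interval of $P$, so the odd/even confinement of $U,V$ carries over from $P$ to $Q$ verbatim.

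It remains to define $\psi_X \in {\bf ORCNF}(M)$ and to verify that $A(Q)$ satisfies it for every such $Q$. The disjuncts are indexed by assignments $S_i$ to the non-matching vertices $V(H)\setminus \bigcup M$; for each $S_i$ I pick the matching CNF $\phi_i$ by selecting, for every $\{t_i,t_j\}\in M$, one of the two halves $V(P^{1/2}_{i\to j})$ or $V(P^{1/2}_{j\to i})$ as follows. If $S_i$ positively assigns some non-root vertex on one of the halves, take that half; otherwise clause $C_{i,j}$ of $\phi_H$ can only be satisfied through $t_i$ or $t_j$ being positive, and I break the tie by a rule keyed to the odd/even colouring of the intervals through which $t_i$, $t_j$ and their shared leaf $\ell_{i,j}$ are threaded on $P$. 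For $Q$ with $A(Q)\cap (V(H)\setminus \bigcup M) = S_i$, the covered case delivers a satisfied half-clause via the positive literal already supplied by $S_i\subseteq A(Q)$; the tie-breaking case leverages monotonicity of $C_{i,j}$ together with the odd/even structure to argue that $A(Q)$'s positive choice among $\{t_i,t_j\}$ must lie on the half the rule selected.

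The principal obstacle I anticipate is formalising the tie-breaking rule and proving its correctness: it is the one place where the odd/even colouring is exploited beyond a bookkeeping role. I expect the argument to rest on a combinatorial lemma showing that a $Q$ whose positive choice among $\{t_i,t_j\}$ lies on the half not selected by the rule would force $t_i$ and $t_j$ (and often $\ell_{i,j}$) to co-occur in a single interval of $Q$, contradicting their placement in disjoint odd/even classes. This is likely the content of the Section C auxiliary lemmas, which I would cite at this step while keeping the overall pipeline intact.
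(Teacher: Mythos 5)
Your pipeline matches the paper's at the structural level: Alon--Maass yields $X$ and separated root subsets, the large-pseudomatching property gives $M$, the disjuncts of $\psi_X$ are indexed by assignments over $V(H)\setminus\bigcup M$, and Lemma~\ref{lem:samevar} carries the odd/even confinement from $P$ to any $Q$ containing $X$. But the crucial step --- determining, for each $S_i$ and each $\{t_i,t_j\}\in M$, which half $V(P^{1/2}_{i\to j})$ or $V(P^{1/2}_{j\to i})$ goes into $\phi_i$ --- is where your proposal diverges from the paper and contains a genuine gap.

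You propose a precomputed \emph{tie-breaking rule} keyed to the odd/even colouring of the intervals threading $t_i$, $t_j$, and $\ell_{i,j}$, and you anticipate a supporting lemma showing that a $Q$ deviating from the rule would force $t_i$ and $t_j$ to co-occur in one interval. Neither holds. First, $\ell_{i,j}$ is a non-root variable, hence not constrained by the odd/even separation at all; it can appear in any interval. Second, which of $t_i,t_j$ is forced positive across all paths containing $X$ and $S_i$ is not a combinatorial function of the colouring --- it depends on the global structure of $Z$. The paper's Lemma~\ref{lem:fool} establishes only the \emph{existence} of a good half-clause, non-constructively and data-dependently, by contradiction: if some path $Q'$ containing $X$ and $S_i$ falsifies $C^{1/2}_{i\to j}$ and another path $Q''$ falsifies $C^{1/2}_{j\to i}$, then (since $S_i$ already assigns all non-root variables of $C_{i,j}$ negatively) $Q'$ assigns $t_i$ negatively and $Q''$ assigns $t_j$ negatively. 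The content of the Section C lemma you gesture at (Lemma~\ref{lem:allneg}) is a \emph{path-splicing} statement: because $t_i\in Y_1$ lives only on odd subpaths and $t_j\in Y_2$ only on even ones, one can concatenate the odd subpaths of $Q'$ with the even subpaths of $Q''$ to obtain a legal computational path $Q^*$ containing $X$ with $S_i\subseteq A(Q^*)$ that assigns \emph{both} roots negatively, falsifying $C_{i,j}$ and contradicting that $Z$ computes $\phi_H$. No ``co-occurrence in a single interval'' is involved; the odd/even separation is used to make the spliced path consistent, not to derive a placement contradiction. To repair your argument you should drop the rule and instead argue existentially: for each $S_i$ and each matched pair, \emph{some} half is satisfied by every admissible $Q$, because the alternative produces, via splicing, a falsifying computational path.
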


Theorems \ref{lbcons} is proved in Section \ref{sec:cons}. 
The following theorem is based on the low probability of satisfaction of $\psi_X$
for a $(\log n/4000,m^{0.999}/3)$-determining set $X$ 
as described in the strategy of the proof. 


\begin{theorem} \label{lbsize}
There are constants $\alpha>1$ and $0<\gamma<1$ such that the following holds. 
Let $H$ be a \pex\ with a sufficiently large number of vertices
and let $\psi_1 \dots, \psi_q$ be OR-CNFs for which the following properties hold.
\begin{itemize}
\item For each $\psi_i$, there is a pseudomatching $M_i$ of $H$ of size at least $m^{0.999}/3$
such that $\psi_i \in {\bf ORCNF}(M_i)$. 
\item Every satisfying assignment of $\phi_H$ satisfies $\psi_1, \dots, \psi_q$. 
\end{itemize}
Then $q \geq \alpha^{n^{\gamma}}$.
\end{theorem}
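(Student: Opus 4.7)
My plan is to follow the strategy sketched in the ``Strategy of the proof'' subsection: construct a probability distribution $\mu$ on the set of satisfying assignments of $\phi_H$, and show that for any fixed matching OR-CNF $\psi \in \mathbf{ORCNF}(M)$ with $|M| \geq m^{0.999}/3$, a $\mu$-random satisfying assignment satisfies $\psi$ with probability at most $\beta^{|M|}$ for some absolute constant $\beta \in (0,1)$. Since every satisfying assignment of $\phi_H$ satisfies each of $\psi_1, \dots, \psi_q$ (so the probability of satisfying the conjunction is $1$), a union bound then gives $q \cdot \beta^{m^{0.999}/3} \geq 1$, which, together with $n = \Theta(m)$ (a consequence of the small-height property combined with the max-degree bound from Theorem \ref{peinfprel}), yields $q \geq \alpha^{n^{\gamma}}$ for suitable $\alpha > 1$ and $\gamma \in (0,1)$.

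A natural candidate for $\mu$ is supported on ``root-only'' satisfying assignments: set every non-root variable to false, include each root independently in the ``true'' set with an appropriately chosen probability $p$, and condition on the resulting set being a vertex cover of the underlying graph $U(H)$. The large-pseudomatching property of pseudoexpanders is what guarantees that this conditioning is non-degenerate (so that $\mu$ is well defined and has enough entropy). Under $\mu$, for any pseudoedge $\{t_i,t_j\} \in M$, satisfaction of either half-clause $V(P^{1/2}_{i\rightarrow j})$ or $V(P^{1/2}_{j\rightarrow i})$ collapses to a \emph{single-root} event ($t_i$ true or $t_j$ true, respectively), because all non-root vertices on the half-path are forced to false. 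Moreover the small-height property ensures each half-clause has at most $O(\log m)$ literals, so that $p$ can be chosen small enough that each individual single-root event has probability bounded away from $1$, yet large enough that the vertex-cover conditioning retains significant mass.

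To bound $\Pr_\mu[A \text{ satisfies } \psi]$ uniformly in $\psi$, decompose the event by the restriction $S$ of the sampled assignment $A$ to $V(H)\setminus\bigcup M$: by the structure of $\mathbf{ORCNF}(M)$, each such $S$ selects a unique matching CNF $\phi_S \in \mathbf{CNF}(M)$ that $A$ must satisfy on $\bigcup M$. Taking expectation over $S$, it suffices to show that conditionally on any $\mu$-compatible $S$, the probability that $A|_{\bigcup M}$ satisfies the prescribed $\phi_S$ is at most $\beta^{|M|}$. Since $M$ is a pseudomatching, the $2|M|$ endpoints of $M$ are pairwise distinct, so in an ``independence'' world the per-edge single-root probability would multiply cleanly across $M$ to yield the desired bound.

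The main obstacle is that even after fixing the restriction outside $\bigcup M$, the remaining vertex-cover constraints couple the roots in $\bigcup M$ through pseudoedges of $H$ that are not in $M$. I would handle this by arguing that, given $S$, the residual constraint graph on $\bigcup M$ has bounded structure (its maximum pseudodegree is at most $m^{1/4.9}$ by Theorem \ref{peinfprel}, and the only remaining clauses to be satisfied on $\bigcup M$ are those pseudoedges of $H$ both of whose endpoints are in $\bigcup M$ and which are not already covered by $S$), and use either an FKG-type correlation inequality, or a direct component-wise analysis, to upper-bound the conditional joint probability by a product of near-marginals, each of which is at most $1 - \Omega(1)$. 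The technical heart of the proof is therefore this correlation bound; as the authors indicate, the detailed construction of $\mu$ and verification of its properties are deferred to Section D of the appendix. The point here is that the combination of small height (short half-clauses) and large pseudomatching (robust vertex-cover conditioning) is precisely what makes the above probabilistic estimate go through.
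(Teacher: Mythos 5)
Your outer structure---design a distribution $\mu$ on satisfying assignments, prove a per-$\psi$ bound $\Pr_\mu[\psi] \leq \beta^{|M|}$, then union-bound---matches the paper, and your decomposition of $\psi \in \mathbf{ORCNF}(M)$ by the restriction outside $\bigcup M$ is also the right move. But the distribution you choose, and the way you propose to control correlations, do not work, and they differ from the paper's in exactly the place that matters.

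The paper's distribution is \emph{not} supported on root-only assignments. It samples every non-leaf variable (internal and root) uniformly, and samples each leaf $\leaf_{i,j}$ uniformly unless it is forced to be true because the rest of $C_{i,j}$ came out negative; equivalently, $\Pr(\{S\}) = (1/2)^{|S\setminus Fix(S)|}$. The point of giving internal variables genuine randomness is the ``comfortable'' mechanism (Definitions \ref{def:resp} and \ref{def:etacomf}): a restriction $S$ outside $\bigcup M$ is useful precisely when, for each $\{t_i,t_j\}\in M$, the non-root variables of $C_{i,j}$ are negative \emph{and the siblings of its internal variables are positive}. Lemma \ref{lem:allpos} then shows that every other clause $C_{i,j'}$ touching $t_i$ is already satisfied by one of those positive siblings, so given a comfortable $S$, the only residual constraint on the roots in $\bigcup M$ is ``cover each $M$-edge.'' This makes the $3^{|M|}$ extensions exactly equiprobable (Lemma \ref{lem:sameprob}), of which $2^{|M|}$ satisfy $\varphi$ (Lemma \ref{lem:almcomf}), giving the clean $(2/3)^{|M|}$ conditional bound, and a Chernoff argument via the independent indicators $X_{i,j}$ shows $\eta$-comfortability holds with all but exponentially small probability. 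Your distribution sets every non-root variable to false, so this shield is gone: after conditioning on vertex cover and on $S$, the roots of $\bigcup M$ are still coupled through all pseudoedges of $U(H)$ not in $M$, and some of them can be \emph{forced} true by roots outside $\bigcup M$ that $S$ set to false---in which case the adversary picks exactly that root as $root(i,j)$ and the per-edge factor degenerates to $1$.

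The specific repair you propose does not repair it. The events ``$root(i,j)$ true'' and ``the sample is a vertex cover'' are all increasing, so FKG gives $\Pr[\bigcap_e \{root(e)\text{ true}\}\mid \text{VC}] \geq \prod_e \Pr[root(e)\text{ true}\mid \text{VC}]$, a \emph{lower} bound on the joint probability---exactly the wrong direction for what you need. To get a product \emph{upper} bound you would need a negative-correlation statement, which is not supplied and is false in general under vertex-cover conditioning (e.g., if $\bigcup M$ induces a dense subgraph of $U(H)$, the conditional probability of a fixed transversal being all-true can be constant, not exponentially small; the pseudodegree bound $m^{1/4.9}$ does not rule out a subgraph on $\sim m^{0.999}$ roots with $\Theta(m^{0.999+1/4.9})$ edges). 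Also a small point: $n = \Theta(m)$ is not right---trees can have up to $m^{1/4.9}$ leaves, so $n$ can be superlinear in $m$; the paper only uses $n \leq 3m^2$, which is still enough to convert $m^{0.999}$ into $n^{\gamma}$.
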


Theorem \ref{lbsize} is proved in Section \ref{sec:lbsize}.

\begin{theorem} \label{lbmain}
There exist constants $\alpha>1$ and $0<\beta<1$ such that 
the following holds:
if $H$ is a sufficiently large \pex\  and $Z$ a $d$-{\sc nbp} 
computing $\phi_H$ with $d \leq  \log n/10^5$,
then $|Z| \geq \alpha^{n^{\beta}}$.
\end{theorem}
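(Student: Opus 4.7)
My plan is to combine Theorem \ref{lbcons} and Theorem \ref{lbsize} via a simple subset-counting argument on $V(Z)$.

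First I would associate, to each satisfying assignment $S$ of $\phi_H$, a computational path $P_S$ of $Z$ with $A(P_S) \subseteq S$; such a path exists because $Z$ computes $\phi_H$. Theorem \ref{lbcons} then hands me a $(\log n/4000, m^{0.999}/3)$-determining set $X_S \subseteq V(P_S)$, together with a witnessing formula $\psi_{X_S} \in {\bf ORCNF}(M_{X_S})$ for some pseudomatching $M_{X_S}$ of $H$ of size at least $m^{0.999}/3$. Since $P_S$ contains $X_S$, the definition of a determining set guarantees that $A(P_S)$, and therefore $S$ itself, satisfies $\psi_{X_S}$.

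Next I would set $\mathcal{X} = \{X_S : S \text{ is a satisfying assignment of } \phi_H\}$ and feed the family $\{\psi_X : X \in \mathcal{X}\}$ into Theorem \ref{lbsize}. By construction every satisfying assignment of $\phi_H$ satisfies at least one member of this family, and each $\psi_X$ lies in ${\bf ORCNF}(M_X)$ with $|M_X| \geq m^{0.999}/3$, so the hypotheses of Theorem \ref{lbsize} are in force. That theorem yields constants $\alpha_0 > 1$ and $\gamma \in (0,1)$ with $|\mathcal{X}| \geq \alpha_0^{n^{\gamma}}$ for all sufficiently large $H$.

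Finally, since each $X \in \mathcal{X}$ is a subset of $V(Z)$ of cardinality at most $\log n/4000$, I would bound
\[
|\mathcal{X}| \;\leq\; \sum_{k \leq \log n/4000} \binom{|Z|}{k} \;\leq\; |Z|^{\log n/4000 + 1}.
\]
Combining this with $|\mathcal{X}| \geq \alpha_0^{n^{\gamma}}$ and taking logarithms gives $|Z| \geq \alpha^{n^{\beta}}$ for any $\beta \in (0,\gamma)$, a suitable $\alpha > 1$, and $n$ large enough. Since the trees of a \pex\ have height $O(\log m)$, the parameters $m$ and $n$ are polynomially related, so rephrasing quantities polynomial in $m$ as polynomial in $n$ is harmless.

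The main obstacle is not in this final assembly, where the two heavy theorems do all the real work, but in making sure their hypotheses dock cleanly. In particular, the clause ``every satisfying assignment of $\phi_H$ satisfies $\psi_1,\dots,\psi_q$'' in the statement of Theorem \ref{lbsize} must be read disjunctively, which the introductory strategy paragraph confirms is intended. The only remaining delicate point is exponent calibration: inverting the subset count loses a $\log n$ factor in the exponent, which is precisely why $\beta$ has to be taken strictly below $\gamma$ rather than equal to it.
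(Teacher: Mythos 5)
Your proposal is correct and follows essentially the same route as the paper's own proof: invoke Theorem~\ref{lbcons} to extract a small determining set and a witnessing formula from a computational path covering each satisfying assignment, verify the disjunctive hypothesis of Theorem~\ref{lbsize} to force an exponential number of distinct determining sets, and then invert the subset-count bound on $V(Z)$. The only cosmetic differences are that you index the determining sets by satisfying assignments rather than by all computational paths (a subset of what the paper counts, which is fine since Theorem~\ref{lbsize} only needs coverage), and your counting bound $|Z|^{\log n/4000+1}$ is marginally sharper than the paper's $|Z|^{\log n}$; neither changes the argument.
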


{\bf Proof.}
By Theorem \ref{lbcons}, each computational path $P$ contains
$X(P)$ which is a $(\log n/4000,m^{0.999}/3)$-determining set.
Let $\{X_1, \dots, X_q\}$ be all such sets $X(P)$
and let $\psi_1, \dots, \psi_q$ be their respective $m^{0.999}/3$-witnessing
formulas. That is there are pseudomatchings $M_1, \dots M_q$ of $H$
of size at least $m^{0.999}/3$ each such that $\psi_i \in {\bf ORCNF}(M_i)$
for $1 \leq i \leq q$. 

Next, we show that the disjunction of $\psi_i$ is satisfied
by every satisfying assignment of $\phi_H$.
Indeed, let $S$ be a satisfying assignment of $\phi_H$.
Then there is a computational path $P$ with $S=A(P)$.
By the definition of $\{X_1, \dots, X_q\}$, there is some $X_i$ contained in $P$.
It follows from Definition \ref{def:detset} that $S$ satisfies $\psi_i$
and hence the disjunction of $\psi_1, \dots, \psi_q$.

Thus both conditions of Theorem \ref{lbsize} are satisfied and it
follows that $q \geq \alpha^{n^{\gamma}}$.
for some $\alpha>1$ and $0<\gamma<1$.
As each $X_i$ is of size at most $\log n/4000$, $q \leq |Z|^{\log n}$
and thus $|Z|^{\log n} \geq \alpha^{n^{\gamma}}$.
Therefore, $|Z| \geq \alpha^{n^{\beta}}$ where $\beta=0.99*\gamma$.
$\blacksquare$

The class of {\sc cnf}s considered in Theorem \ref{lbmain} is non-constructive,
its existence is established using the probabilistic method.
In the following theorem, we prove existence of a poly-time deterministic
procedure creating a class of monotone small {\sc cnf}s such that 
for each sufficiently large {\sc cnf} $\varphi$ of this class there is a partial
assignment transforming $\varphi$ into $\phi_H$ for some \pex\ $H$ of essentially
the same size. Thus the lower bound of Theorem \ref{lbmain} applies to this
constructively created class of {\sc cnf}s.

\begin{theorem} \label{nbpfinal}
There is a class of {\sc cnf}s produced by a polynomial time deterministic procedure
that takes an exponential size to compute by $d$-{\sc nbp} for $d \leq  \log n/10^5$
\end{theorem}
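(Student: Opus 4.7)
The plan is to reduce the constructive statement to Theorem~\ref{lbmain} by means of partial assignments. The first observation is that if a monotone {\sc cnf} $\varphi^*$ can be turned into $\phi_H$ by fixing the values of some of its variables according to a partial assignment $\rho$, then any $d$-{\sc nbp} $Z$ computing $\varphi^*$ can be converted --- by deleting edges whose labels are falsified by $\rho$, contracting edges whose labels are satisfied by $\rho$, and pruning vertices no longer on any source-sink path --- into a $d$-{\sc nbp} $Z|_\rho$ computing $\phi_H$ with $|Z|_\rho| \leq |Z|$. Hence, if $|Z|$ were subexponential then so would be $|Z|_\rho|$, contradicting Theorem~\ref{lbmain}. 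Thus it suffices to exhibit a deterministic polynomial-time procedure producing a family $\{\varphi^*_N\}$ of monotone {\sc cnf}s such that each $\varphi^*_N$ restricts, via some partial assignment, to $\phi_H$ for some \pex\ $H$ whose vertex set has size $\Omega(N)$; the lower bound will then transfer from $\phi_H$ to $\varphi^*_N$ up to a harmless constant factor in the exponents $\alpha$ and $\beta$.

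For the construction, the natural route is to derandomize the probabilistic proof of Theorem~\ref{peinfprel}. I would proceed by first isolating within that proof an efficiently checkable sufficient condition implying the large-pseudomatching property of Definition~\ref{def:pseudoexp} --- for instance a spectral or edge-expansion condition on the underlying graph $U(H)$ of Definition~\ref{def:under}, or a combinatorial density condition on neighbourhoods --- and then exhibiting a polynomial-time construction that meets it by plugging an explicit expander family in as $U(H)$ and attaching extended trees of balanced height to its edges in a canonical deterministic manner. The resulting {\sc cnf} can then be taken as $\varphi^*_N$ directly, with $\rho$ empty; alternatively, to absorb the mismatch between what an explicit construction delivers and what the \pex\ definition demands, one can build a slightly padded {\sc cnf} whose various partial assignments realize any one of a deterministically enumerated list of candidate subgraphs, with the guarantee that at least one of them is a \pex.

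The main obstacle is establishing that a polynomial-time verifiable expansion property on $U(H)$ really does imply the large-pseudomatching property: the latter asserts, for any two disjoint subsets of roots of size at least $m^{0.999}$, the existence of a pseudomatching of size at least $m^{0.999}/3$ between them, which is a Hall-type statement and should be derivable from strong enough edge expansion of $U(H)$ combined with the fact, implicit in Definitions~\ref{def:pseudoelem} and~\ref{def:under}, that pseudoedges of $H$ are in bijection with edges of $U(H)$. Once this implication is in hand, the remaining ingredients --- explicit expander constructions, canonical attachment of extended trees respecting the small-height property, and verification that the partial assignment acts cleanly on a monotone {\sc cnf} (fixing a variable to $\mathit{true}$ removes clauses using it, fixing it to $\mathit{false}$ shortens them) --- are routine, and the polynomial-time bound follows because each step runs in $\mathrm{poly}(N)$ and the final branching-program restriction argument does not blow up the exponent.
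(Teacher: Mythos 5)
Your first paragraph --- the observation that restricting a $d$-{\sc nbp} by a partial assignment (dropping labels that $\rho$ satisfies, deleting edges whose labels $\rho$ falsifies, then pruning) yields a no-larger $d$-{\sc nbp} for the restricted function, so that a lower bound for $\phi_H$ transfers to any $\varphi^*$ having $\phi_H$ as a restriction --- is correct and is exactly the reduction the paper runs in Section B.3.

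Where you diverge is in the construction itself, and there is a genuine gap in the route you chose as primary. You propose to derandomize Theorem~\ref{peinfprel} by finding a polynomial-time verifiable expansion criterion on $U(H)$ that implies the large-pseudomatching property, and then instantiating with an explicit expander. You correctly identify the implication ``verifiable expansion $\Rightarrow$ large pseudomatching'' as the main obstacle, but you do not discharge it, and it is not routine: the large-pseudomatching property quantifies over \emph{all} disjoint root-subsets of size $m^{0.999}$, and extracting a matching of size $m^{0.999}/3$ between two such sets from an explicit spectral guarantee forces the degree of $U(H)$ to grow polynomially with $m$ (a short expander-mixing-lemma calculation shows constant degree cannot deliver any edge between two sets of size $m^{0.999}$ in a graph on $m$ vertices), while simultaneously staying below the $m^{1/4.9}$ cap needed for the small-height property. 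An explicit family in that degree regime with the required expansion is plausible but is an additional theorem, not a plug-in.

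The paper avoids derandomization entirely, and its route is close to the \emph{alternative} you mention in passing and then drop. It defines, purely deterministically, a \emph{universal} graph $G_m$: take $m$ copies of a maximal extended tree of height $\lesssim \log m/4.9$ with $\lfloor m^{1/4.9}\rfloor$ leaves, and for \emph{every} pair of leaves in distinct trees attach a fresh ``subdivision vertex'' joining them; the {\sc cnf} $\varphi(G_m)$ has one clause per subdivision vertex. This $\varphi(G_m)$ is computable in time polynomial in its number of variables. Lemma~\ref{lem:redpe} then shows that for \emph{any} \pex\ $H$ with $m$ roots and pseudodegree at most $m^{1/4.9}$, setting the ``non-meaningful'' subdivision vertices to true restricts $\varphi(G_m)$ exactly to $\phi_{H'}$ for a \pex\ $H'$ with $U(H')\cong U(H)$. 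The existence of at least one such $H$ is still supplied non-constructively by Theorem~\ref{peinfprel}, but only existence is needed, since the hard formula that is output is $\varphi(G_m)$, not $\phi_H$. This sidesteps both the degree-tuning problem and the expansion-to-matching implication you would otherwise have to prove. If you rewrite your proof around your ``padded {\sc cnf}'' alternative and make the universal construction precise --- in particular showing, as Lemma~\ref{lem:redpe} does, that the restriction lands exactly on $\phi_{H'}$ --- you will have the paper's argument.
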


The proof of Theorem \ref{nbpfinal} appears in the Appendix.

\section{Proof of Theorem \ref{peinfprel}} \label{sec:pe}
In order to prove Theorem \ref{peinfprel}, 
we introduce a new type of graphs we call \emph{\ppex}, which we define next.
Afterwards we show that for each sufficiently large $m$ there is a \ppex\ with $m$ vertices.
Then we prove Theorem \ref{peinfprel} by showing that for a sufficiently large $m$ there is a \ppex\ with $m$
vertices that is the underlying graph of \pex\ with $m$ roots.

\begin{definition}\textbf{[\Ppex]}\label{def:PPE}
An $m$ vertex graph $G=(V,E)$ is a \emph{\ppex} if it satisfies the following:
\begin{enumerate}
\item\label{cond:PERank}
 the maximal degree of a vertex in $G$ does not exceed $m^{1/4.9}$, and
 \item\label{cond:PEMatching}
 for every disjoint $V_1,V_2 \subseteq V$ such that $|V_1|,|V_2| \geq n^{0.999}$, 
there exists a matching in $G$, of size at least $m^{0.999}/2$, between $V_1$ and $V_2$.
\end{enumerate}
\end{definition}

The proof that for each sufficiently large $m$ there is a \ppex\ with $m$ vertices 
consists of two parts: in the first part, we show that the existence of a \ppex\ 
follows from the existence of a graph with the small degree property (Condition~\ref{cond:PERank}) 
and a relaxation of the second property (Condition~\ref{cond:PEMatching}) from matching to single edges; 
and in the second part we use the probabilistic method to prove that such graphs with $m$
vertices exist for each sufficiently large $m$. 

\begin{lemma}\label{lem:PEreduction}
Let $G=(V,E)$ be a graph of $m$ vertices.
If for every  disjoint $U_1,U_2 \subseteq V$ such that $|U_1|,|U_2| \geq m^{0.999}/2$,
there exists an edge $\{u,v\}$ such that $u\in U_1$ and $v\in U_2$, then $G$ satisfies 
Condition~\ref{cond:PEMatching} of the definition of a \ppex.
\end{lemma}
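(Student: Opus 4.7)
The plan is to argue by contradiction using maximality of a matching, exploiting the hypothesis in exactly the way the König-type arguments do.

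First, fix disjoint $V_1, V_2 \subseteq V$ with $|V_1|, |V_2| \geq m^{0.999}$, and let $M$ be a maximum matching in $G$ between $V_1$ and $V_2$. I would then suppose for contradiction that $|M| < m^{0.999}/2$, and examine the ``leftover'' sets $U_1 = V_1 \setminus V(M)$ and $U_2 = V_2 \setminus V(M)$, where $V(M)$ denotes the set of endpoints of the edges of $M$.

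Next, the key size estimate: each edge of $M$ contributes exactly one vertex to $V_1$ and one to $V_2$ (since $M$ is between $V_1$ and $V_2$), so $|V(M) \cap V_i| = |M| < m^{0.999}/2$ for $i = 1,2$. Hence $|U_i| \geq m^{0.999} - m^{0.999}/2 = m^{0.999}/2$. The sets $U_1$ and $U_2$ are disjoint because $V_1$ and $V_2$ are, so they satisfy the hypothesis of the lemma, yielding an edge $\{u,v\}$ with $u \in U_1$ and $v \in U_2$.

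Finally, this edge is disjoint from all edges in $M$ (its endpoints lie outside $V(M)$), and both its endpoints lie in $V_1 \cup V_2$ with one in each; so $M \cup \{\{u,v\}\}$ is a strictly larger matching between $V_1$ and $V_2$, contradicting the maximality of $M$. Therefore $|M| \geq m^{0.999}/2$, establishing Condition~\ref{cond:PEMatching}. The proof is essentially immediate once the halving in the hypothesis is lined up with the loss of vertices absorbed by a would-be small maximum matching; I do not anticipate any real obstacle here, only the need to be careful that $U_1$ and $U_2$ remain disjoint and large enough to invoke the assumption.
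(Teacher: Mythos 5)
Your proof is correct and is essentially identical to the one in the paper: both argue by contradiction, take a maximum matching between the two large sets, observe that removing the matched vertices leaves leftover sets of size at least $m^{0.999}/2$, apply the hypothesis to extract a fresh edge, and conclude the matching was not maximal.
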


\begin{proof}
Assume for the sake of contradiction that there exist disjoint subsets 
$U_1,U_2\subseteq V$, each of size at least $m^{0.999}$,
such that every matching between them is of size less than $m^{0.999}/2$.
Let $W$ be the edges of a maximum matching between $U_1$ and $U_2$.
Since the number of edges in $W$ is less than $m^{0.999}/2$,
$|U_i\setminus V(W)|>m^{0.999}-m^{0.999}/2=m^{0.999}/2$ for $i \in \{1,2\}$.
Consequently, there is an edge $\{u,v\}$ such that $u\in U_1\setminus V(W)$ and $v\in U_2\setminus V(W)$.
Thus, $W\cup \{u,v\}$ is also a matching between $U_1$ and $U_2$.
Since obviously, $W\cup \{u,v\}$ is larger than $W$ we get a contradiction to 
$W$ being a maximum matching between $U_1$ and $U_2$. 
\end{proof}

\begin{lemma}\label{lemma:AlmostPEExist}
There exists a constant $c$ such that, for every $m$ larger than $c$, there exists a $m$ vertex graph $G=(V,E)$ that satisfies the following two conditions:   
\begin{enumerate}
\item the maximal degree of a vertex in $G$ does not exceed $m^{1/4.9}$, and
\item for every  disjoint $U_1,U_2 \subseteq V$ such that $|U_1|,|U_2| \geq m^{0.999}/2$,
there exists an edge $\{u,v\}$ such that $u\in U_1$ and $v\in U_2$.
\end{enumerate}
\end{lemma}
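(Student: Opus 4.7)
The plan is to apply the probabilistic method to the Erdős–Rényi random graph $G(m,p)$ with edge probability $p = m^{-4/5}$, and to show that with positive probability the resulting graph simultaneously satisfies the degree bound and the edge-between-large-sets property.

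First I would handle the degree condition. For each vertex $v$, its degree is distributed as $\mathrm{Bin}(m-1,p)$ with expected value $\approx m^{1/5}$. Since $1/4.9 > 1/5$, a standard Chernoff bound gives
\[
\Pr[\deg(v) > m^{1/4.9}] \leq \exp(-\Omega(m^{1/4.9})).
\]
A union bound over the $m$ vertices shows that the probability some vertex has degree exceeding $m^{1/4.9}$ is at most $m \cdot \exp(-\Omega(m^{1/4.9})) = o(1)$.

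Next I would establish the edge condition. Fix disjoint $U_1, U_2 \subseteq V$ with $|U_1|, |U_2| \geq m^{0.999}/2$. The probability that no edge of $G$ joins $U_1$ to $U_2$ is
\[
(1-p)^{|U_1|\cdot|U_2|} \leq \exp\!\left(-p \cdot \tfrac{m^{1.998}}{4}\right) = \exp\!\left(-\tfrac{m^{1.198}}{4}\right).
\]
The number of ordered pairs of disjoint subsets of $V$ is at most $4^m$, so a union bound over all such pairs gives that the probability that the edge condition fails for some $(U_1,U_2)$ is at most $4^m \cdot \exp(-m^{1.198}/4)$, which tends to $0$ because $m^{1.198}$ eventually dominates $m \log 4$. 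Hence for all sufficiently large $m$ both bad events have probability less than $1/2$, so with positive probability $G(m,p)$ satisfies both conditions, proving the existence of the desired graph.

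The only real obstacle is calibrating the exponent of $p$ so that the two conditions are compatible: we need $p$ small enough that $mp$ stays below $m^{1/4.9}$, and large enough that $p \cdot (m^{0.999}/2)^2$ beats the $m \log 4$ coming from the union bound. The choice $p = m^{-4/5}$ works with room to spare on both sides, since $1 - 4/5 = 1/5 < 1/4.9$ and $2 \cdot 0.999 - 4/5 = 1.198 > 1$. Everything else is a routine Chernoff-plus-union-bound computation.
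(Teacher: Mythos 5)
Your proposal is correct and takes essentially the same approach as the paper: both sample $G(m, m^{-4/5})$, bound the maximum degree via Chernoff plus a union bound over vertices, and rule out edgeless large bipartite pairs via a union bound (you use the cruder count $4^m$ of disjoint-pair choices where the paper restricts to pairs of size exactly $\lceil m^{0.999}/2\rceil$, but since $\exp(-m^{1.198})$ dominates $\exp(O(m))$, this makes no difference).
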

\begin{proof}
We construct a random graph and prove that with strictly positive probability it has the property that, for every
 pair of subsets $U_1,U_2 \subseteq V$ such that $|U_1|,|U_2|= \lceil m^{0.999}/2 \rceil$, there exists an edge 
$\{u,v\}$ such that $u\in U_1$ and $v\in U_2$. Thus, by the probabilistic method a graph with such a property exists.  
The statement of the lemma follows, because every set of size 
at least $m^{0.999}/2$ is a superset of a set of size exactly $\lceil m^{0.999}/2\rceil$.

Let $G = (V,E)$ be a random graph of $n$ vertices such that, for every distinct $u,v\in V$, the edge $\{u,v\}$ is in $E$ independently with probability $\edgeProb$.
Let $|U_1|,|U_2| = \lceil m^{0.999}/2\rceil$. 

There are $|U_1|*|U_2| \geq m^{1.998}/9$ pairs $\{u,v\}$ such that $u\in U_1$ and $v\in U_2$ and hence, with probability at least 
$$(1-\edgeProb)^{m^{1.998}/9} \leq e^{-\edgeProb*m^{1.998}/9}=e^{-m^{1.198}/9} \leq e^{-m^{1.197}},$$ 
there does not exist a single edge in $E$ between a vertex in $U_1$ and a vertex in $U_2$.
The number of pairs of disjoint subsets of $V$ each of size exactly $\lceil m^{0.999}/2\rceil$ is at most:
$${m \choose m^{0.999}}*2^{m^{0.999}} \leq
m^{2*m^{0.999}}*2^{m^{0.999}} \leq 2^{\log m*2*m^{0.999}}*2^{m^{0.999}} \leq 2^{m^{0.9991}} \leq
e^{m^{0.9991}}.$$

Thus, by the union bound, with probability at least $1-e^{m^{0.9991}}\cdot e^{-m^{1.197}}>2/3$, for every
pair of subsets $U_1,U_2 \subseteq V$ such that $|U_1|,|U_2| = \lceil m^{0.999}/2\rceil$, 
there exists an edge $\{u,v\}$ such that $u\in U_1$ and $v\in U_2$.
 
Let $v$ be an arbitrary vertex in $G$.
By the Chernoff's bound, the probability that the degree of $v$ is greater than $\maxDegree$ is at most $2e^{-m^{1/5}}$.
Thus, by the union bound, with probability at least $1-m*2e^{-m^{1/5}}>2/3$, for every $u\in V$, the degree of $U$ is at most $\maxDegree$.

Finally, again by the union bound, with probability strictly greater than $1/3$, the random graph we defined has both properties required in the statement of the lemma.
\end{proof}

{\bf Proof of Theorem \ref{peinfprel}}
According to Lemma \ref{lemma:AlmostPEExist},
for each sufficiently large $m$, there is a graph $G'_m$
with $m$ vertices, max-degree $m^{1/4.9}$ and with an edge between
any two subsets of its vertices of size at least $m^{0.999}/2$.
According to Lemma \ref{lem:PEreduction}, $G'_m$ is in fact \ppex.

We are going to construct a graph $H_m \in \btb$ with $m$ roots such
that $G'_m$ is isomorphic to $U(H_m)$, the underlying graph of $H_m$
and with the height of each tree of ${\bf T}(H_m)$ being at most
$\log m/4.9+3$. The construction is as follows. 
Enumerate the vertices of $G'_m$ by $1, \dots, m$.
Fix extended trees $T_1, \dots, T_m$ such that the number of leaves of $T_i$
is exactly as $d_i$, the degree of vertex $i$ in $G'_m$ and the height of $T'_i$ is at
most $\log m/4.9+3$. It is not hard to see that such trees exist.
Indeed, take a complete binary tree $T'$ with at most $2*m^{1/4.9}$ leaves,
fix an arbitrary set $L$ of $d_i \leq m^{1/4.9}$ leaves and let 
$T''$ be the rooted tree obtained by the union of all root-leaf paths of $T'$ ending at $L$.
The height of $T'_i$ is at most $\log m/4.9+2$. By adding one leaf to each leaf
of $T''$ we obtain an extended tree with $d_i$ leaves and height at most $\log m/4.9+3$.
Next for each $1 \leq i \leq m$ mark the leaves of $T_i$ with the numbers assigned to
the neighbour of vertex $i$ in $G'_m$ (each leaf is assigned with a unique number).
Then for those pairs $\{i,j\}$ where $T_i$ has a leaf $\ell_1$ marked with $i$ and
$T_j$ has a leaf $\ell_2$ marked with $j$, identify $\ell_1$ and $\ell_2$.
Let $H_m$ be the resulting graph. A direct inspection shows that $U(H_m)$ is isomorphic
to $G'_m$.

We next show that $H_m$ will satisfy the large pseudomatching
property. Indeed, let $U_1,U_2$ be two disjoint subsets of roots of
$H_m$ of size at least $m^{0.999}$ each. By the property of \ppex s,
there is a matching of size at least $m^{0.999}/2 \geq m^{0.999}/3$ between
$U_1$ and $U_2$. As in $G'_m$ there is an edge between $u \in U_1$ and $v \in U_2$
if and only if in $H_m$ there is a pseudoedge between these vertices in $H_m$,
there is a pseudomatching of size at least $m^{0.999}/3$ between $U_1$ and $U_2$ in $H_m$.
We conclude that $H_m$ is a \pex\ with maximal pseudodegree at most $m^{1/4.9}$,
thus implying the theorem.
$\blacksquare$

\section{Proof of Theorem \ref{lbcons}} \label{sec:cons}
In this section, we prove that each computational path 
of a $d$-{\sc nbp} $Z$ with $d \leq \log n/10^5$ computing
$\phi_H$ for $H \in {\bf PE}$,
contains a $(\log n/4000,m^{0.999}/3)$-determining set $X$.
We first show, using the structural theorem of Alon and Maass  
(Theorem 1.1. of \cite{AMaass}), that on each computational path $P$,
there is a set $X$ of size at most $\log n/4000$ that \emph{separates}
two large disjoint sets $Y_1$ and $Y_2$ of roots of $H$ in the following
sense. We call $Y_1$ the \emph{odd} set and
$Y_2$ the \emph{even} set. The set $X$ naturally partitions $P$
into subpaths enumerated along $P$ so that the elements
of the $Y_1$ occur as labels only on edges of odd numbered 
subpaths and elements of $Y_2$ occur only on even numbered subpaths.
We then show that $X$ is in fact a $(\log n/4000,m^{0.999}/3)$-determining set.

That is, we show the existence of a formula 
$\psi_X \in {\bf ORCNF}(M)$, where $M$ is a pseudomatching of size at 
least $m^{0.999}/3$ such that every assignment carried by a computational
path containing $X$ satisfies $\psi_X$.

We fix a pseudomatching $M$ between $Y_1$ and $Y_2$ such 
that $|M| \geq m^{0.999}/3$. The existence of $M$ is guaranteed by the 
large pseudomatching property of pseudoexpanders.
We consider an arbitrary assignment $S$
to all the variables of $\phi_H$ except $\bigcup M$. We show that for
each $\{t_i,t_j \} \in M$, there is a `half clause' 
$C'_{i,j} \in \{V(P^{1/2}_{i \rightarrow j}), V(P^{1/2}_{j \rightarrow i})\}$
satisfied by the assignment carried by every computational path $P$
containing $X$ with $S \subseteq A(P)$. We prove this by showing that otherwise
there are computational paths $Q'$ and $Q''$ both containing $X$ and with 
$S \subseteq A(Q') \cap A(Q'')$ such that $Q'$ falsifies one of these clauses
and $Q''$ falsifies the other one. Then, because $Z$ is a uniform $d$-{\sc nbp}, 
$Q'$ and $Q''$ can be `combined' into a computational path $Q^*$ 
falsifying $C_{i,j}$, in contradiction to the definition of $Z$. 
It follows that every assignment carried by a computational path $P$
containing $X$ with $S \subseteq A(P)$ satisfies $\varphi \in {\bf CNF}(M)$
consisting of the half clauses $C'_{i,j}$ for $\{t_i,t_j\} \in M$.

In the final stage of the proof, we set
$\psi_X=Conj(S_1) \wedge \varphi_1 \vee \dots \vee Conj(S_q) \wedge \varphi_q$,
where $S_1, \dots, S_q$ are all possible sets of literals assigning all the variables
except $\bigcup M$ and $\varphi \in {\bf CNF}(M)$ is determined regarding $S_i$
as specified in the previous paragraph. We show that $\psi_X$ is a
$m^{0.999}/3$-witnessing formula for $X$. Indeed, let $P$ be a path containing $X$.
Due to the uniformity, there is $S_i \in A(P)$ for some $1 \leq i \leq q$ (that is $A(P)$
satisfies $Conj(S_i)$.
Then, by definition of $\varphi_i$, $A(P)$ satisfies $\varphi_i$ 
and hence $A(P)$ satisfies $\psi_X$, concluding the proof. 

\begin{definition} [{\bf Partition of a path and a generating set}] \label{def:part}
Let $P$ be a path of $Z$.
A sequence $P_1, \dots, P_c$ of subpaths of $P$ is a \emph{partition} of $P$ if $P_1$ is a prefix of $P$,
for each $1<i \leq c$, the first vertex of $P_i$ is the last vertex of $P_{i-1}$, and $P_c$ is a suffix
of $P$.

We say that the set $X$ of ends of $P_1, \dots, P_{c-1}$ \emph{generates} $P_1,\dots, P_c$ on $P$.
\end{definition}

\begin{definition} [{\bf Variable separators}] \label{def:separ}
Let $X \subseteq V(Z)$
not containing the source nor the sink vertex of $Z$.

We say that $X$
\emph{separates} two disjoint subsets $Y_1$ and $Y_2$ of $Vars(Z)$ 
if there is a computational path $P$ of
$Z$ passing through all the vertices of $X$ such that one of the following
two statements is true regarding the partition $P_1, \dots, P_{|X|+1}$
generated by $X$ on $P$.
\begin{enumerate}
\item Elements of $Y_1$ occur only on paths $P_i$ with odd $i$ and elements of $Y_2$ 
occur only on paths $P_i$ with even $i$.
\item Elements of $Y_1$ occur only on paths $P_i$ with even $i$ and elements of $Y_2$ 
occur only on paths $P_i$ with odd $i$.
\end{enumerate}  
\end{definition}

The proofs of the following two lemmas is provided in the Appendix.

\begin{lemma} \label{lem:exsep}
On each computational path $P$ of $Z$ there is a set $X$ of vertices 
size at most $\log n/4000$ that separates two disjoint 
subsets of ${\bf Roots}(H)$ of size at least $m^{0.999}$.
\end{lemma}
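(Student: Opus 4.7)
The plan is to view the computational path $P$ as a string over the variable alphabet and apply the structural theorem of Alon and Maass (Theorem 1.1 of \cite{AMaass}) to extract a short set of cut positions that separates large odd/even sets of root variables. The desired set $X\subseteq V(Z)$ will then be obtained by lifting these cut positions back to vertices of $Z$ lying on $P$.

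Concretely, I would associate with $P$ the sub-sequence $s_P$ of edge-labels along $P$ whose underlying variables lie in $\tvertnot(H)$. Because $Z$ is uniform, each root variable occurs exactly $d$ times in $s_P$, so $s_P$ is a word of length $md$ over the alphabet $\tvertnot(H)$ in which every letter has exactly $d$ occurrences. Applying Alon--Maass to $s_P$, with parameters tuned so that the number of cuts is at most $\log n/4000$ and the guaranteed concentrated letter sets have size at least $m^{0.999}$, produces positions $i_1<\cdots<i_k$ partitioning $s_P$ into $k+1\leq \log n/4000+1$ intervals together with disjoint $Y_1,Y_2\subseteq\tvertnot(H)$, each of size at least $m^{0.999}$, such that every letter of $Y_1$ (resp.\ $Y_2$) occurs only in odd-indexed (resp.\ even-indexed) intervals.

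To finish, I lift the positions $i_j$ to vertices of $Z$ on $P$: define $v_j$ to be the tail of the edge of $P$ that carries the $i_j$-th root-variable occurrence, and set $X=\{v_1,\ldots,v_k\}$ (discarding $v_1$ or $v_k$ if they coincide with the source or sink of $Z$, which changes $|X|$ by at most a constant that is easily absorbed). Since $v_1,\ldots,v_k$ appear in path-order on $P$, they generate a partition $P_1,\ldots,P_{k+1}$ of $P$, and by construction the root-labelled edges of $P_{j+1}$ are in bijection with the root occurrences of the $(j+1)$-th interval of $s_P$. Non-root edges on $P$ carry variables in $V(H)\setminus\tvertnot(H)$ and so are irrelevant to the property being proved. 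It follows that each $y\in Y_1$ labels edges only of odd-numbered sub-paths and each $y\in Y_2$ only of even-numbered sub-paths, so $X$ separates $Y_1$ and $Y_2$ in the sense of Definition \ref{def:separ}.

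The main obstacle is the quantitative step: getting the Alon--Maass trade-off to yield simultaneously the $\log n/4000$ bound on the number of cuts and the $m^{0.999}$ lower bound on the odd/even sets from the hypothesis $d\leq \log n/10^5$. This uses the relation $n=\Theta(m^{1+1/4.9})$ inherited from the height bound on trees of pseudoexpanders (so that $\log n$ and $\log m$ differ only by a constant factor), together with a careful choice of the internal parameters in the Alon--Maass statement; the constants $10^5$ and $4000$ in the hypothesis and conclusion are tailored precisely to make this balancing work. The remaining ingredients---that $s_P$ is well defined, that the position-to-vertex lift preserves the odd/even pattern, and that $|X|$ meets the bound---are essentially bookkeeping.
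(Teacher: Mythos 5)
Your plan is essentially the paper's: regard the path as a string, restrict attention to the sub-sequence of root-variable occurrences, invoke Alon--Maass to obtain a small cut set that confines two large disjoint subsets of $\tvertnot(H)$ to alternating intervals, and lift the cut positions back to vertices of $Z$ to get $X$ (using $n\leq 3m^2$ so that $\log n$ and $\log m$ differ only by a constant factor). The one place where your sketch is thinner than the paper is the sentence asserting that ``applying Alon--Maass to $s_P$ $\dots$ produces positions $i_1<\cdots<i_k$ $\dots$ together with disjoint $Y_1,Y_2$.'' Alon--Maass (Theorem~1.1 in \cite{AMaass}) is a \emph{length lower bound} given a lower bound on the number of alternations between every pair of large sets; it does not directly hand you the cut positions and the two sets. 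The paper isolates this in Claim~\ref{cl:meander}: one argues by contradiction that some pair $U_1,U_2$ of size $\lfloor m/2^{s+1}\rfloor\geq m^{0.999}$ admits fewer than $s$ \emph{links}, and then explicitly builds the interval partition from a maximal set of links, checking by an inductive argument that within each interval only one of $U_1,U_2$ can occur. That extraction is the real combinatorial content and more than ``bookkeeping,'' so your proof would need to spell it out; the rest of your plan (the quantitative balancing via $d\leq\log n/10^5$ and the lift from sequence positions to edge tails on $P$) matches the paper.
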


\begin{lemma} \label{lem:allneg}
Let $Z$ be a $d$-{\sc nbp} and let $X$ be a subset of its vertices
separating two disjoint subsets $Y_1$ and $Y_2$ of $Vars(Z)$.
Let $Y'_1 \subseteq Y_1$ and $Y'_2 \subseteq Y_2$.
Let $S$ be an assignment to $Vars(Z) \setminus (Y_1 \cup Y_2)$
and $Q'$ and $Q''$ be two computational paths of $Z$ containing $X$ such
that $S \subseteq A(Q') \cap A(Q'')$, $Q'$ assigns negatively all the variables
of $Y'_1$ and $Q''$ assigns negatively all the variables of $Y'_2$.
Then there is a computational path $Q^*$ of $Z$ containing $X$
with $S \subseteq A(Q^*)$ that assigns negatively all the variables of $Y'_1 \cup Y'_2$.
\end{lemma}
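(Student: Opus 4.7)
The plan is to construct $Q^*$ by splicing $Q'$ and $Q''$ along the vertex set $X$. Because $Z$ is a DAG, any path containing $X$ visits its vertices in the same topological order; hence the partitions of $Q'$ and $Q''$ generated by $X$, call them $Q'_1,\dots,Q'_{|X|+1}$ and $Q''_1,\dots,Q''_{|X|+1}$, have the property that $Q'_i$ and $Q''_i$ share their first and last vertices for each $i$. Applying Lemma \ref{lem:samevar} index by index, $Q'_i$ and $Q''_i$ use exactly the same set of variables. This is the step that lets me transfer the separation property, which Definition \ref{def:separ} only guarantees for one specific witness path, over to both $Q'$ and $Q''$: on each of them, every variable of $Y_1$ appears only on subpaths of one fixed parity and every variable of $Y_2$ only on the other.

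Assume without loss of generality that $Y_1$ appears only on odd-indexed subpaths and $Y_2$ only on even-indexed ones (the other case is symmetric). I would then define
\[
Q^* \;=\; Q'_1 \,+\, Q''_2 \,+\, Q'_3 \,+\, Q''_4 \,+\, \cdots,
\]
i.e. take odd subpaths from $Q'$ and even subpaths from $Q''$. Consecutive pieces meet at the required vertex of $X$ (or at the source/sink), so $Q^*$ is a legitimate source-sink walk in $Z$ that contains $X$ by construction.

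The key verification is that $Q^*$ is a computational path, i.e.\ no variable is labelled both positively and negatively along it. For $x \in Y_1$, every occurrence of $x$ in $Q^*$ lies on an odd subpath inherited from $Q'$, hence all occurrences agree because $Q'$ is computational; symmetrically for $x\in Y_2$ using $Q''$. For $x \notin Y_1 \cup Y_2$, the literal of $x$ agrees with $S$ in both $A(Q')$ and $A(Q'')$, so every occurrence on any spliced subpath agrees with $S$. This also gives $S \subseteq A(Q^*)$. By uniformity, every variable of $Y'_1$ actually occurs on $Q^*$ and inherits its negative literal from $Q'$, and similarly every variable of $Y'_2$ inherits its negative literal from $Q''$, yielding the conclusion.

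The main obstacle I expect is the first step: articulating cleanly that the parity-based separation, stated existentially in Definition \ref{def:separ} over a single witnessing path, actually holds for the arbitrary paths $Q'$ and $Q''$ through $X$. This is exactly where uniformity via Lemma \ref{lem:samevar} is essential; once that transfer is in place, the rest of the argument is bookkeeping of literal occurrences along the spliced walk.
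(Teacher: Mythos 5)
Your proposal is correct and follows essentially the same route as the paper: transferring the parity-separation property from the witness path to arbitrary paths through $X$ via Lemma \ref{lem:samevar} (the paper isolates this as Claim \ref{cl:unicon}), then splicing $Q'$ and $Q''$ by parity and verifying the spliced walk is a computational path through the same three-way case analysis on variable membership.
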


\begin{lemma} \label{lem:fool}
Let $Z$ be a $d$-{\sc nbp} computing $\phi_H$.
Let $X \subseteq V(Z)$ be a subset of vertices of $Z$ separating
subsets $Y_1$ and $Y_2$ of ${\bf Roots}(H)$ and
let $M$ be a pseudomatching of $H$ between $Y_1$ and $Y_2$. 

Then for each set $S$ of literals over $V(H) \setminus \bigcup M$
there is $\varphi \in {\bf CNF}(M)$ such that for each path $P$
containing $X$ and with $S \subseteq A(P)$, $\varphi$ is satisfied by 
$A(P)$.
\end{lemma}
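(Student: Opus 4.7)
The plan is to construct the desired $\varphi \in {\bf CNF}(M)$ clause by clause: for each pseudoedge $\{t_i,t_j\} \in M$ I will choose either $C'_{i,j} = V(P^{1/2}_{i \rightarrow j})$ or $C'_{i,j} = V(P^{1/2}_{j \rightarrow i})$, and take $\varphi$ to be the conjunction of these half-clauses. The work is entirely in showing that at least one of the two choices is \emph{safe}, i.e.\ satisfied by $A(P)$ for every computational path $P$ containing $X$ with $S \subseteq A(P)$.

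To do this, I would fix $\{t_i,t_j\} \in M$ and argue by contradiction. First, observe that the two half-paths $P^{1/2}_{i \rightarrow j}$ and $P^{1/2}_{j \rightarrow i}$ meet at the shared leaf $\ell_{i,j}$, and their only vertices that are roots of $H$ are $t_i$ and $t_j$; all other vertices on these paths are internal nodes or the shared leaf, and in particular none of them lie in $\bigcup M$. Hence their values are already fixed by $S$. If neither half-clause is safe, there are computational paths $Q'$ and $Q''$ through $X$, both extending $S$, with $A(Q')$ falsifying $V(P^{1/2}_{i \rightarrow j})$ and $A(Q'')$ falsifying $V(P^{1/2}_{j \rightarrow i})$. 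Since $A(Q')$ already forces every non-root vertex of $P^{1/2}_{i \rightarrow j}$ to be false, and the same holds for $A(Q'')$ on the other half, the assignment $S$ itself must falsify every literal of $V(P_{i,j}) \setminus \{t_i,t_j\}$.

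Now I would invoke Lemma \ref{lem:allneg} with $Y'_1 = \{t_i\} \subseteq Y_1$, $Y'_2 = \{t_j\} \subseteq Y_2$, and with the restriction of $S$ to $Vars(Z) \setminus (Y_1 \cup Y_2)$ playing the role of the lemma's $S$. This produces a computational path $Q^*$ containing $X$ whose assignment extends this restriction and which assigns $t_i$ and $t_j$ negatively. Because the non-root vertices of $P_{i,j}$ lie in $Vars(Z) \setminus (Y_1 \cup Y_2)$, they are still set to false on $Q^*$; together with $t_i, t_j$ being false, $A(Q^*)$ falsifies every literal of the clause $C_{i,j}$ of $\phi_H$. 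Since $Z$ is uniform, $Q^*$ is a source-sink path that assigns every variable, so $A(Q^*)$ is a satisfying assignment of $F_Z = \phi_H$, contradicting that $C_{i,j}$ is falsified.

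The main obstacle, and the place I would spend the most care, is the bookkeeping around the two slightly different notions of ``assignment $S$'' in play: the one in the statement (over $Vars(Z)\setminus \bigcup M$) and the one required by Lemma \ref{lem:allneg} (over $Vars(Z) \setminus (Y_1 \cup Y_2)$). The key observation making this go through is that $\bigcup M \subseteq Y_1 \cup Y_2 \subseteq {\bf Roots}(H)$, so all non-root vertices of $H$, in particular the internal vertices and shared leaves of the half-paths, are fixed by the \emph{smaller} assignment, which is exactly what is needed to convert a forced negation of $\{t_i,t_j\}$ into a full falsification of $C_{i,j}$. Once this is verified, the construction of $\varphi$ as the conjunction of the safe half-clauses is immediate and yields a member of ${\bf CNF}(M)$ with the required property.
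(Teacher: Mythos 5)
Your proof is correct and follows the same strategy as the paper's: pick one safe half-clause per pseudoedge by showing that if neither half is safe, two witness paths $Q'$ and $Q''$ force all non-root variables of $C_{i,j}$ to be negative under $S$, and then Lemma \ref{lem:allneg} (applied with $Y'_1 = \{t_i\}$, $Y'_2 = \{t_j\}$) yields a path $Q^*$ falsifying $C_{i,j}$, contradicting that $Z$ computes $\phi_H$. In fact you are a bit more careful than the paper's own write-up on the domain mismatch between the $S$ of the statement (over $V(H) \setminus \bigcup M$) and the $S$ expected by Lemma \ref{lem:allneg} (over $Vars(Z) \setminus (Y_1 \cup Y_2)$); your observation that $\bigcup M \subseteq Y_1 \cup Y_2 \subseteq {\bf Roots}(H)$ so all non-root variables of $C_{i,j}$ survive the restriction is exactly the point the paper implicitly uses without spelling out.
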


{\bf Proof.}
Fix a set $S$ of literals over $V(H) \setminus \bigcup M$.
For $\{t_i,t_j\} \in M$, let $C^{1/2}_{i \rightarrow j}$
be clauses whose sets of literals are, respectively,
$V(P^{1/2}_{i \rightarrow j})$ and $V(P^{1/2}_{j \rightarrow i})$. 

We first show that
for each $\{t_i,t_j\} \in M$, there is
$C'_{i,j} \in \{C^{1/2}_{i \rightarrow j},C^{1/2}_{j \rightarrow i})$
such that for each path $P$ containing $X$ with $S \subseteq A(P)$,
$C'_{i,j}$ is satisfied by $A(P)$.

Indeed, suppose this is not true.
This means that there are two computational paths $Q'$ and $Q''$
containing $X$ with $S \subseteq A(Q') \cap A(Q'')$ such that 
$Q'$ falsifies $C^{1/2}_{i \rightarrow j}$ and 
$Q''$ falsifies $C^{1/2}_{j \rightarrow i}$ (it is necessary that $Q' \neq Q''$
because otherwise $Q'$ falsifies both $C^{1/2}_{i \rightarrow j}$ and $C^{1/2}_{i \rightarrow j}$
and hence the whole $C_{i,j}$, a contradiction).
All the non-root variables of $C_{i,j}$ belong to 
$V(H) \setminus \bigcup M$ and hence occur in $S$.
As $Q'_1$ falsifies $C^{1/2}_{i \rightarrow j}$,
all the non-root variables of $C^{1/2}_{i \rightarrow j}$
occur negatively in $S$. Symmetrically,
all the non-root variables of $C^{1/2}_{j \rightarrow i}$
occur negatively in $S$. Therefore, all the non-root variables
of $C_{i,j}$ occur negatively in $S$.
Assume w.l.o.g. that $t_i \in Y_1$ and $t_j \in Y_2$.
Then, by Lemma \ref{lem:allneg}, there is a computational
path $Q^*$ of $Z$ passing through $X$ with $S \subseteq A(Q)$
and both $t_i$ and $t_j$ occurring negatively.
That is $Q^*$ falsifies $C_{i,j}$ in contradiction to the definition of $Z$
computing $\phi$. Thus $C'_{i,j}$, as specified in the previous paragraph indeed
exists. 

It follows that each computational path $P$ of $Z$ containing 
$X$ and with $S \subseteq A(P)$ satisfies the {\sc cnf} $\varphi$
whose clauses are $C'_{i,j}$, for every $\{t_i,t_j\} \in M$.
Clearly $\varphi \in {\bf CNF}(M)$.
$\blacksquare$

{\bf Proof of Theorem \ref{lbcons}}
Let $P$ be a computational path of $Z$. 
By Lemma \ref{lem:exsep}, $P$ contains a set $X$ of vertices of size at most
$\log n/4000$ that separates two disjoint subsets $Y_1$ and $Y_2$
of $Roots(H)$ of size at least $m^{0.999}$ each.

By the large pseudomatching property of pseudoexpanders,
there is a pseudomatching $M$ of $H$ between $Y_1$ and $Y_2$
with $|M| \geq m^{0.999}/3$.

Let $S_1, \dots, S_q$ be all sets of literals over $V(H) \setminus \bigcup M$.
By Lemma \ref{lem:fool}, for each $S_i$ there is $\varphi_i \in {\bf CNF}(M)$
that is satisfied by $A(Q)$ for each computational path $Q$ containing
$X$ provided that $S_i \subseteq A(Q)$.

Let $\psi=(Conj(S_1) \wedge \varphi_1) \vee \dots \vee (Conj(S_q) \wedge \varphi_q)$.
Clearly $\psi \in {\bf ORCNF}(M)$. It remains to show that $\psi$ is satisfied
by the assignment carried by each computational path $Q$ containing $X$.
For this notice that $A(Q)$ necessarily contains some $S_i$ (since $Z$ is a uniform $d$-{\sc nbp},
all the variables occur in $A(Q)$, hence this $S_i$ is just the projection of
$A(Q)$ to $V(H) \setminus \bigcup M$). Therefore, by the previous paragraph, $A(Q)$
satisfies $\varphi_i$ and hence also $Conj(S_i \wedge \varphi_i)$. 
Consequently, $A(Q)$ satisfies $\psi$.
$\blacksquare$

\section{Proof of Theorem \ref{lbsize}} \label{sec:lbsize}
Recall that Theorem \ref{lbsize} states that 
if $M_1, \dots, M_r$ are \emph{large} pseudomatchings
of $H \in {\bf PE}$ and $\psi_1, \dots, \psi_r$ are formulae
such that $\psi_i \in {\bf ORCNF}(M_i)$ for $1 \leq i \leq r$
and every satisfying assignment of $\phi_H$ satisfies one of 
$\psi_1, \dots, \psi_r$, then $r$ is exponentially large.
To prove this, we define a probability space over the set of satisfying
assignments of $\phi_H$. Then we fix a large matching $M$ and
$\psi \in {\bf ORCNF}(M)$ and show that the probability of satisfaction
of $\psi$ is exponentially small in $|M|$. By the union bound, 
Theorem \ref{lbsize} follows.

In order to prove the small probability of satisfaction of $\psi$
we introduce a special property of sets of literals over $V(H) \setminus M$
and show that: (i) the probability of the satisfaction of $\psi$ conditioned
on this property is exponentially small in $|M|$ and (ii) the probability that this
property is \emph{not} satisfied is exponentially small in $|M|$. 
Thus the probability of satisfaction of $\psi$ is upper bounded by the 
sum of these two exponentially small quantities and hence is exponentially small itself.
We provide further informal explanation of the use of the above special property
after it has been defined.

In this section we refer to the variables of $\phi_H$ that are 
root, leaf, and the rest of vertices of $V(H)$ as
the \emph{root}, \emph{leaf}, and \emph{internal} variables of $\phi_H$, respectively.
All sets of literals considered in this section are over subsets of 
variables of $\phi_H$. 

\begin{definition} [{\bf Fixed set of literals.}] \label{def:fixset}
A literal $\leaf_{i,j}$ (that is, the positive literal of the variable $\leaf_{i,j}$
which is the joint leaf of trees $T_i$ and $T_j$)
is \emph{fixed} w.r.t. a set $S$ of literals if the following two conditions are true.
\begin{itemize}
\item $\leaf_{i,j} \in S$;
\item the rest of variables of $C_{i,j}$ occur negatively in $S$.
\end{itemize}

We denote the set of fixed literals of $S$ by $Fix(S)$.
\end{definition}

\paragraph{Probability space for the set of satisfying assignments of $H$.}
Denote by ${\bf SAT}(H)$ the set of satisfying assignments of $\phi_H$,
which is the family of sets $S$ of literals with $Var(S)=V(H)$ that satisfy $\phi_H$.
In this section we use the probability space with the sample space ${\bf SAT}(H)$ and
for each $S \in {\bf SAT}(H)$, the probability of the elementary event $\{S\}$ is 
$(1/2)^{|S \setminus Fix(S)|}=(1/2)^{|V(H) \setminus Fix(S)|}$. In order to ensure that this definition is valid,
we prove that the probabilities of elementary events sum up to $1$.

\begin{proposition} [{\bf Validity of the probability space}] 
\label{prop:validsp}
$\sum_{S \in {\bf SAT}(H)} (1/2)^{|S \setminus Fix(S)|}=1$
\end{proposition}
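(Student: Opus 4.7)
The plan is to exhibit a randomized sampling procedure whose output is always an element of ${\bf SAT}(H)$ and whose probability of producing any particular $S \in {\bf SAT}(H)$ equals $(1/2)^{|S \setminus Fix(S)|}$. Since the procedure halts with probability $1$ and always outputs some satisfying assignment, summing these probabilities over ${\bf SAT}(H)$ yields $1$, which is precisely the statement of the proposition.

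The procedure is as follows. First, flip an independent fair coin for every non-leaf variable of $\phi_H$ (i.e.\ every root variable and every internal variable) to obtain a partial assignment $\sigma$. Second, process each leaf $\leaf_{i,j}$ independently: inspect the unique clause $C_{i,j}$ containing $\leaf_{i,j}$; if $\sigma$ assigns every non-leaf variable of $C_{i,j}$ negatively, set $\leaf_{i,j} \leftarrow true$ (a \emph{forced} assignment); otherwise flip an independent fair coin for $\leaf_{i,j}$. Let $S$ be the resulting assignment to all of $V(H)$.

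Three facts then need to be checked. First, the procedure is well-defined: by Definition \ref{def:btb} each leaf belongs to exactly two trees of ${\bf T}(H)$ and hence to exactly one pseudoedge of $H$, so it appears in exactly one clause of $\phi_H$ and the forcing rule for each leaf is unambiguous. Second, the output $S$ always satisfies $\phi_H$: for any clause $C_{i,j}$, either $\sigma$ already assigns some non-leaf variable of $C_{i,j}$ positively, or else $\leaf_{i,j}$ is forced to $true$; in either case $C_{i,j}$ is satisfied. Third, for each $S \in {\bf SAT}(H)$ there is exactly one configuration of coin flips that produces $S$, and the number of fair coins flipped is $|V(H)| - |Fix(S)|$: every non-leaf variable contributes one coin, while a leaf variable $\leaf_{i,j}$ contributes a coin precisely when the clause $C_{i,j}$ has at least one non-leaf literal positive in $S$.

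The main subtlety is the identification, in this third point, of the leaves \emph{forced} by the procedure with $Fix(S)$. A leaf is forced exactly when, in the produced $S$, every non-leaf variable of its clause is negative, in which case the procedure sets the leaf to $true$; so the forced leaves are precisely the positive leaf literals of $S$ whose clause has all other variables negatively assigned, which matches Definition \ref{def:fixset} verbatim. Once this identification is made, the probability of the procedure outputting any fixed $S \in {\bf SAT}(H)$ is $(1/2)^{|V(H)| - |Fix(S)|} = (1/2)^{|S \setminus Fix(S)|}$, and summing over $S$ gives $1$, as required.
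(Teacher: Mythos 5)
Your proposal is correct and follows essentially the same approach as the paper: both define a randomized sampling procedure that flips fair coins for all non-leaf variables and then processes each leaf by either forcing it to $true$ (when all non-leaf variables of its unique clause are negative) or flipping a fair coin, then identify forced leaves with $Fix(S)$ to see that each $S \in {\bf SAT}(H)$ is produced with probability $(1/2)^{|S\setminus Fix(S)|}$. The only cosmetic difference is that you conclude by noting the procedure always halts in ${\bf SAT}(H)$, while the paper phrases the same fact as a decision-tree induction showing the total weight is $1$ and separately notes non-satisfying assignments receive probability $0$.
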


The proof is postponed to the appendix.

The following is the main statement to be proved in this section.

\begin{theorem} \label{theor:mainprob}
There are constants $0<\lambda<1$ and $0<\mu<1$ such that for any sufficiently large $H \in {\bf PE}$
with $m=|{\bf Roots}(H)|$, any pseudomatching $M$ of $H$ of size at least $m^{0.999}/3$ and any
$\psi \in {\bf ORCNF}(M)$, the probability that a satisfying assignment of $\phi_H$ satisfies $\psi$ 
is at most $\lambda^{m^{\mu}}$.
\end{theorem}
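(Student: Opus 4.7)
The plan is to decompose any $\psi\in{\bf ORCNF}(M)$ according to the restriction of the sampled assignment to the non-matched variables. For a satisfying assignment $S$, let $\sigma$ be its restriction to $V(H)\setminus \bigcup M$. Since the disjuncts $Conj(S_i)\wedge\phi_i$ of $\psi$ range over every possible $\sigma$, $S$ satisfies $\psi$ if and only if the matched-root portion of $S$ satisfies the specific $\phi_\sigma \in {\bf CNF}(M)$ that $\psi$ pairs with $\sigma$. Each clause of $\phi_\sigma$ is one of the half-clauses $C^{1/2}_{i\to j}$ or $C^{1/2}_{j\to i}$ for a pseudoedge $\{t_i,t_j\}\in M$; once $\sigma$ is fixed, this clause is either already satisfied by a non-root literal of $\sigma$ or it \emph{demands} that its designated root be set positively. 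Call $\{t_i,t_j\}\in M$ a \emph{$\sigma$-demanding} pseudoedge in the latter case.

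The special property I would attach to $\sigma$ is that the number of $\sigma$-demanding pseudoedges in $M$ is at least $|M|/c$ for a suitable constant $c$. To prepare for the two bounds, I would first pass from $M$ to a sub-pseudomatching $M'\subseteq M$ of pseudoedges whose incident trees are pairwise vertex-disjoint; this is possible while retaining $|M'|=\Omega(|M|)$ because the pseudodegree of $H$ is bounded by $m^{1/4.9}$ (Theorem \ref{peinfprel}) and every tree has only $O(\log m)$ leaves (small-height property), so each pseudoedge conflicts with only polylogarithmically many others. Working on $M'$ is crucial because the \pex-structure then turns events indexed by $M'$ into near-independent events under the measure of Proposition \ref{prop:validsp}.

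For bound (i), conditioned on $\sigma$ having the good property, satisfaction of $\phi_\sigma$ forces $\Omega(m^{0.999})$ specific roots among $\bigcup M$ (in particular $\Omega(|M'|)$ roots among $\bigcup M'$) to take prescribed values. Using the vertex-disjointness of the trees in $M'$, I would argue that conditionally on the leaves in $\bigcup M'$ \emph{not} being fixed, the matched roots behave like independent fair coins, so the conditional probability of hitting the required pattern is at most $\lambda_1^{|M'|}\le\lambda_1^{\Omega(m^{0.999})}$. For bound (ii), each half-clause $C^{1/2}_{i\to j}$ has only $O(\log m)$ non-root variables by the small-height property, so the marginal probability (under Proposition \ref{prop:validsp}) that a given pseudoedge is \emph{non}-$\sigma$-demanding is at most $1-m^{-O(1)}$, strictly bounded away from $1$. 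Restricting again to the disjoint sub-matching $M'$ for independence and setting $c$ so that the expectation of non-demanding edges is at most $|M'|/(2c)$, a Chernoff-type tail bound yields $\Pr[\sigma \text{ fails the good property}]\le \lambda_2^{\Omega(m^{0.999})}$. Combining the two bounds via the union bound, and choosing $\lambda=\max(\lambda_1,\lambda_2)^{1/2}$ and any $\mu<0.999$, gives Theorem \ref{theor:mainprob}.

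The main obstacle will be executing the two tail arguments under the non-uniform measure $(1/2)^{|V(H)\setminus Fix(S)|}$, since the weighting couples each leaf $\ell_{i,j}$ with the other variables of its clause $C_{i,j}$. I expect the cleanest route is to reinterpret this measure as a two-stage sampling process — first decide the set of fixed leaves, then assign the remaining variables by independent fair coin flips subject to $\phi_H$ being satisfied — so that the events indexed by the disjoint sub-matching $M'$ become genuinely independent conditionally on the fixed-leaf pattern. Once this reformulation is in place, the probabilistic estimates above reduce to standard Chernoff and coin-flipping arguments, and the proof assembles in the order described.
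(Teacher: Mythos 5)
Your high-level strategy coincides with the paper's: decompose $\psi$ along the restriction $\sigma$ of the satisfying assignment to $V(H)\setminus\bigcup M$, identify a good property of $\sigma$ involving a subset of the pseudoedges of $M$, and split into a conditional bound (good $\sigma$) plus a tail bound (bad $\sigma$). The paper calls the good property ``$\eta$-comfortable'' (Definition \ref{def:etacomf}) and uses Theorem \ref{theor:boundcomf} together with Theorem \ref{theor:indi} and Lemma \ref{lem:chern} exactly for the two halves you describe. However, there are two concrete gaps.

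First, a quantitative error: you claim the probability that a fixed pseudoedge is non-$\sigma$-demanding is ``at most $1-m^{-O(1)}$, strictly bounded away from $1$'', and propose a \emph{constant} $c$ so that at least $|M|/c$ pseudoedges are demanding with high probability. But $1-m^{-O(1)}\to 1$; by the small-height property each half-clause has $\Theta(\log m)$ non-root variables that must all be set negatively, so the probability of being demanding is $m^{-\Theta(1)}$, not $\Omega(1)$. The expected number of demanding pseudoedges is therefore $|M|\cdot m^{-\Theta(1)}$, and no constant $c$ works. The paper correctly sets the threshold $\eta=\Theta(m^{-4/4.9})$, which is why the final exponent is $\mu=0.999-4/4.9$ rather than a value close to $0.999$. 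Your constant $c$ must be replaced by $m^{\Theta(1)}$ and the arithmetic redone.

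Second, and more seriously, your notion of ``$\sigma$-demanding'' (non-roots of the chosen half-clause negative) omits the ``siblings positive'' requirement in the paper's Definition \ref{def:resp}. That requirement is exactly what makes the conditional calculation go through: via Lemma \ref{lem:allpos}, it guarantees that if $\sigma$ respects $\{t_i,t_j\}$ then for every \emph{other} pseudoedge $\{t_i,t_{j'}\}$ of $H$ the clause $C_{i,j'}$ already has a positive internal variable, so assigning $t_i$ negatively cannot falsify $\phi_H$. Without this, the conditional distribution of the matched roots is constrained in uncontrolled ways by clauses outside $M$, and the $3^{|M|}$ versus $2^{|M|}$ count of Lemma \ref{lem:almcomf} fails. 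Your proposed remedy---passing to a vertex-disjoint sub-pseudomatching $M'$---does not address this, because the problematic clause $C_{i,j'}$ involves a tree $T_{j'}$ outside $M'$ entirely; disjointness among trees of $M'$ says nothing about it. (Incidentally, the vertex-disjointness step is also unnecessary for independence: for a pseudomatching the witness variable sets $S_{i,j}$ are already pairwise disjoint, which is what Theorem \ref{theor:indi} and Lemma \ref{lem:guardtransf} actually rely on.) The two-stage sampling reformulation you sketch at the end is in spirit what the paper's ``guarded sets'' machinery (Section \ref{sec:guard}) accomplishes, but that machinery is precisely where the siblings-positive condition gets used, so it cannot be dropped.
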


Assuming that Theorem \ref{theor:mainprob} holds, we can now prove Theorem \ref{lbsize}.

{\bf Proof of Theorem \ref{lbsize}.}
Let $n=|V(H)|$. It is not hard to observe that $n \leq 3m^2$. Therefore, there is a constant $0<\gamma<1$
such that $\lambda^{m^{\mu}} \leq \lambda^{n^{\gamma}}$ and, hence, according to Theorem \ref{theor:mainprob},
the probability that a satisfying assignment of $\phi_H$ satisfies $\psi$ 
at most $\lambda^{n^{\gamma}}$. 

Let $\alpha=1/\lambda$ and assume by contradiction that $|{\bf \Psi}| < \alpha^{n^{\gamma}}$.
By the union bound, the probability that an assignment of $\phi_H$ satisfies one of the elements
of ${\bf \Psi}$ is at most $|{\bf \Psi}|*\lambda^{n^{\gamma}}<1$, the inequality follows from the assumption
that there is a satisfying assignment of $\phi_H$ that does not satisfy any element of ${\bf \Psi}$,
in contradiction to its definition.
$\blacksquare$

In the rest of this section we prove Theorem \ref{theor:mainprob}.

\paragraph{Events.}
The events are subsets of ${\bf SAT}(H)$.
\begin{itemize}
\item {\bf Containment event:} for a set $S$ of literals, ${\bf EC}(S)$ denotes the set of all elements 
$S' \in {\bf SAT}(H)$ such that $S \subseteq S'$.
\item {\bf Satisfiability event:} for a formula $\varphi$, ${\bf ES}(\varphi)$ denotes the set of all elements of
$S' \in {\bf SAT}(H)$ that satisfy $\varphi$. 
\end{itemize}

The following lemma will allow us to move back and forth
between the probability of the union of events and the sum of their probabilities.

\begin{lemma} \label{lem:disjcont}
Let $S_1$ and $S_2$ be two distinct sets of literals
such that $Var(S_1)=Var(S_2)$. Then 
${\bf EC}(S_1) \cap {\bf EC}(S_2)=\emptyset$ 
\end{lemma}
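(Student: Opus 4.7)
The plan is to use the definition of a set of literals (which by the Preliminaries cannot contain both a variable and its negation) together with the hypothesis $Var(S_1)=Var(S_2)$ to extract a direct contradiction from any hypothetical $S'\in {\bf EC}(S_1)\cap {\bf EC}(S_2)$.

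First I would unpack what it means for $S_1$ and $S_2$ to be distinct while $Var(S_1)=Var(S_2)$: since the variable sets coincide, the two sets of literals describe truth assignments to exactly the same set of variables, and distinctness then forces the existence of some variable $x\in Var(S_1)=Var(S_2)$ on which the two assignments disagree. Without loss of generality, $x\in S_1$ (positive literal) and $\neg x\in S_2$ (negative literal), since if for every common variable the literals agreed, the sets would be equal.

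Next I would suppose for contradiction that there exists $S'\in {\bf EC}(S_1)\cap {\bf EC}(S_2)$. By definition of the containment event, $S_1\subseteq S'$ and $S_2\subseteq S'$, so $x\in S'$ and $\neg x\in S'$. But $S'\in {\bf SAT}(H)$ is in particular a set of literals in the sense fixed in the Preliminaries, and such a set is by definition forbidden from containing both an occurrence of a variable and its negation. This contradicts $x,\neg x\in S'$, so no such $S'$ exists and ${\bf EC}(S_1)\cap {\bf EC}(S_2)=\emptyset$.

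There is no real obstacle here; the only point to be careful about is that the conclusion crucially uses $Var(S_1)=Var(S_2)$, since without this hypothesis the two sets could differ only by one properly extending the other on fresh variables, and then their containment events would easily intersect. The write-up should therefore highlight that distinctness together with equal variable sets is exactly what forces a clashing pair of opposite literals.
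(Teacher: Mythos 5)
Your proof is correct and is essentially the same argument as the paper's: the paper observes directly that the projection of any $S'\in{\bf EC}(S_1)\cap{\bf EC}(S_2)$ to $Var(S_1)=Var(S_2)$ would have to equal both $S_1$ and $S_2$, while you unpack this one step further to a specific disagreeing variable $x$ and the resulting clash $x,\neg x\in S'$; both are the same contradiction.
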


{\bf Proof.}
Suppose $S \in {\bf EC}(S_1) \cap {\bf EC}(S_2)$.
Then the projection of $S$ to $Var(S_1)$ is both
$S_1$ and $S_2$, which is impossible since $S_1 \neq S_2$.
$\blacksquare$

\begin{definition}[{\bf Siblings in $H$}]
Two internal vertices $u$ and $v$ of $H$ are \emph{siblings} if they belong to the same tree $T_i$ and
if $u$ and $v$ are siblings in $T_i$. 
\end{definition}
(This definition is unambiguous because each internal vertex belongs
to exactly one tree of ${\bf T}(H)$.

\begin{definition}[{\bf Set of literals respecting a pseudoedge}] \label{def:resp}
A set $S$ of literals \emph{respects} a pseudoedge $\{t_i,t_j\}$
if the following two conditions hold.
\begin{enumerate}
\item All non-root variables of $C_{i,j}$ occur negatively in $S$.
\item The siblings of all internal variables of $C_{i,j}$ occur positively
in $S$.
\end{enumerate}
\end{definition}

\begin{definition}[{\bf $\eta$-comfortable set of literals}] \label{def:etacomf}
For $0 \leq \eta \leq 1$, a set $S$ of literals is $\eta$-\emph{comfortable} 
w.r.t. a pseudomatching $M$ if the following two conditions hold.
\begin{enumerate}
\item $S$ does not falsify any clause of $\phi_H$
\item $S$ respects at least $\eta*|M|$ pseudoedges of $M$.
\end{enumerate}
\end{definition}

We continue the informal explanation started in the beginning of this section.
The special property on sets of assignments $S$ over $V(H) \setminus \bigcup M$
mentioned there is that $S$ is $\eta$-comfortable w.r.t. $M$ for $\eta=c*m^{-4/4.9}$
for a specially chosen constant $c$. 
We prove that (i) if $S$ is $\eta$-comfortable w.r.t. $M$ then for any
$\varphi \in {\bf CNF}(M)$, $Pr({\bf ES}(\varphi)|{\bf EC}(S)) \leq (2/3)^{\eta*|M|}$
and (ii) the probability that an element of ${\bf SAT}(H)$ is not $\eta$-comfortable
is exponentially small in $|M|$. 

The proof of (i) is, essentially, a reduction from the following: 
(iii) if $S$ is an assignment of $V(H) \setminus M$ that is $1$-comfortable
w.r.t. $M$ then $Pr({\bf ES}(\varphi)|{\bf EC}(S)) \leq (2/3)^{|M|}$.
To prove (iii), we note $S$ assigns all the variables of $\phi_H$ except
the root variables of $C_{i,j}$ for $\{t_i,t_j\} \in M$. Moreover, 
by Definition \ref{def:resp}, all the non-root variables of each such $C_{i,j}$
are assigned negatively. Therefore, for each such $C_{i,j}$, either $t_i$
or $t_j$ is assigned positively, thus making $3$ choices per clause and 
the total number of possible elements of ${\bf EC}(S)$ being at most $3^{|M|}$.
It may seem that some choices are not available because negative assignment
to some $t_i$, together with $S$, could falsify clauses corresponding to
pseudoedges that are not in $M$. However, this does not happen since 
positive assignments to siblings of internal variables of 
$C_{i,j}$ eliminate such a possibility, so the number of elements 
of ${\bf EC}(S)$ is indeed $3^{|M|}$.

Moreover, we show that each element of ${\bf EC}(S)$ has the same set of fixed
literals and has the same probability. Therefore, proving (iii)  amounts to
proving that at most $2^{|M|}$ elements of ${\bf EC}(S)$ satisfy $\varphi$.

In order to show this, 
recall that $\varphi \in {\bf CNF}(M)$ consists of `halves' of clauses
$C_{i,j}$ for $\{t_i,t_j\} \in M$. Hence every element of ${\bf EC}(S)$ satisfying
$\varphi$ must satisfy all the root variables of these halves, leaving $2$ choices
per clause (either positive or negative assignment of the other root variable)
and the total number of choices is at most $2^{|M|}$ as required.

To prove statement (ii) we define for each $\{t_i,t_j\} \in M$
a $1-0$ random variable $X_{i,j}$ which is $1$ exactly on those 
elements of ${\bf SAT}(H)$ that respect $\{t_i,t_j\}$.
We then prove that (iv) the probability that $X_{i,j}=1$ is at most $c_1*m^{-4/4.9}$
for some constant $c_1$ and that (v) the variables $X_{i,j}$ are mutually
independent. By Chernnoff's bound, with high probability, an element
of ${\bf SAT}(H)$ respects at least $c*m^{-4/4.9}*|M|$ edges of $M$ where
$c$ is a constant dependent on $c_1$ (and we choose $\eta=c*m^{-4/4.9}$), thus
implying (ii).

For statement (iv), note that because of the small height property
of pseudoexpanders, each clause $C_{i,j}$ of $\phi_H$ has about $2\log m/4.9$
variables. Together with about the same number of siblings of internal variables
of $C_{i,j}$, this means that assignments of about $4 \log m/4.9$ variables
of $\phi_H$ must be fixed in an element of ${\bf SAT}(H)$ in order to respect
$\{t_i,t_j\}$. Although, due to `non-uniform' character of the probability space, some
careful calculation is required, this is, essentially the reason why
the probability of respecting $t_{i,j}$ is $O(2^{-4 \log m/4.9})=m^{-4/4.9}$.

The underlying reason for statement (v) is that respecting of 
two distinct pseudoedges of a pseudomatching requires fixing assignments 
on \emph{disjoint} sets of variables of $\phi_H$ (though, again, in the 
considered probability space, this implication is not trivial and requires
a careful calculation). 

The statement (i) is Theorem \ref{theor:boundcomf}.
and it is proved in Subsection \ref{sec:comf}.
The statements (iv) and (v) are Theorem \ref{theor:indi}.
whose proof is provided in Subsection \ref{subsec:indi}.
We now show how Theorem \ref{theor:mainprob} follows
from (i) and (ii). 

The proof of Theorem \ref{theor:mainprob} amounts to showing that 
$\sum_{i=1}^q Pr({\bf ES}(\varphi_i) \cap {\bf EC}(S_i))$ is exponentially
small, where $S_1, \dots, S_q$ are all possible assignments over $V(H) \setminus \bigcup M$
and $\varphi_1, \dots, \varphi_q$ are elements of ${\bf CNF}(M)$ such that
$\psi=\bigvee_{i=1}^q (Conj(S_i) \wedge \varphi_i)$.
Because of the possibility of rearrangement, we can assume w.l.o.g. 
that for some $r \leq q$, $S_1, \dots, S_r$ are $\eta$-comfortable w.r.t. $M$
while $S_{r+1}, \dots, S_q$ are not. Then $\sum_{i=r+1}^q Pr({\bf ES}(\varphi_i) \cap {\bf EC}(S_i))$
is exponentially small by (ii) 
and  $\sum_{i=1}^r Pr({\bf ES}(\varphi_i) \cap {\bf EC}(S_i))=
      \sum_{i=1}^r Pr({\bf ES}(\varphi_i)|{\bf EC}(S_i))*Pr({\bf EC}(S_i))$.
All the conditional probabilities are exponentially small according to (i), therefore,
the last quantity can be upper bounded by a number exponentially small in $|M|$
multiplied by $\sum_{i=1}^r Pr({\bf EC}(S_i)) \leq \sum_{i=1}^q Pr({\bf EC}(S_i))=1$.
Thus as stated in the beginning of this section, $Pr({\bf ES}(\psi))$ can be upper bounded
by the sum of two exponentially small quantities.

\begin{lemma} \label{lem:projection}
A satisfying assignment $S$ of $\phi_H$ is $\eta$-comfortable
w.r.t. a pseudomatching $M$ of $H$ if and only if the projection $S'$ of $S$ to $V(H) \setminus \bigcup M$ is 
$\eta$-comfortable w.r.t. $M$.
\end{lemma}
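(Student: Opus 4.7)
The plan is to verify separately, for each of the two conditions defining $\eta$-comfortability (Definition \ref{def:etacomf}), that it holds for $S$ if and only if it holds for $S'$. The key structural observation is that both parts of the ``respects'' condition (Definition \ref{def:resp}) mention only non-root variables of $H$: the non-root variables of $C_{i,j}$ are, tautologically, not roots; and the siblings of internal variables of $C_{i,j}$ are also non-roots, since the root of a tree has no parent and hence no sibling. Because every pseudoedge of $M$ is a pair of roots, $\bigcup M \subseteq {\bf Roots}(H)$, and therefore every variable named in Definition \ref{def:resp} lies in $V(H) \setminus \bigcup M$.

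Next I would exploit this observation as follows. Since $S'$ is the projection of $S$ to $V(H) \setminus \bigcup M$, $S$ and $S'$ agree on every variable relevant to whether a given pseudoedge is respected. Consequently $S$ respects exactly the same pseudoedges of $M$ as $S'$ does, which immediately gives the equivalence of the second comfortability condition (respecting at least $\eta |M|$ pseudoedges of $M$) for $S$ and $S'$.

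For the first comfortability condition (not falsifying any clause of $\phi_H$), the forward direction is immediate: since $S' \subseteq S$ as sets of literals, any clause falsified by $S'$ would have every opposite literal also present in $S$, so $S$ would falsify it too. The reverse direction uses the running hypothesis that $S$ is a satisfying assignment of $\phi_H$: then $S$ satisfies every clause and in particular falsifies none, independently of what $S'$ does. Combining the two equivalences yields the biconditional.

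The whole content of the lemma is that $\eta$-comfortability depends only on the projection to $V(H) \setminus \bigcup M$, together with the trivial fact that a satisfying assignment cannot falsify any clause, so I do not anticipate a genuine obstacle. The only point requiring a moment of care is the sibling argument, i.e., verifying that siblings of internal variables are necessarily non-roots; this is a one-line appeal to the fact that the root of a rooted tree has no parent and hence no sibling.
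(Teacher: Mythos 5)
Your proposal is correct and takes essentially the same route as the paper: both arguments hinge on the observation that the variables named in the ``respects'' condition are all non-root (hence lie in $V(H)\setminus\bigcup M$ and survive the projection), and both dispatch the ``does not falsify'' condition using $S'\subseteq S$ together with the hypothesis that $S$ is a satisfying assignment. The only cosmetic difference is that you spell out the sibling-is-non-root check explicitly, which the paper leaves implicit.
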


{\bf Proof.}
Assume $S$ is $\eta$-comfortable w.r.t. $M$.
By definition, this means that $S$ contains particular occurrences of variables \emph{all of which are non-root ones}. Since $\bigcup M$ consists of root variables only,
these occurrences are preserved in $S'$. Therefore,
$S'$ is $\eta$-comfortable w.r.t. $M$.

Conversely, if $S'$ is $\eta$-comfortable w.r.t. $M$ then the witnessing occurrences of variables remain in any superset of $S'$, in particular in $S$.
As $S$ does not falsify any clause, we conclude that $S$ is $\eta$-comfortable w.r.t. $M$. $\blacksquare$

\begin{theorem} \label{theor:boundcomf}
Let $M$ be a pseudomatching of $H$, $0 \leq \eta \leq 1$.
Then for any set $S$ of literals over $(V(H) \setminus \bigcup M)$, which is $\eta$-comfortable w.r.t. to 
$M$ and for any $\varphi \in {\bf CNF}(M)$,
$Pr({\bf ES}(\varphi)|{\bf EC}(S)) \leq (2/3)^{\eta*|M|}$.
\end{theorem}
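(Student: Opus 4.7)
The plan is to reduce the $\eta$-comfortable case to essentially the $1$-comfortable case by conditioning on the non-respected part of $M$. Let $M'\subseteq M$ be the subset of pseudoedges that $S$ respects, so $|M'|\ge\eta|M|$. For each assignment $R$ to the roots lying in $\bigcup M\setminus\bigcup M'$, set $S'':=S\cup R$. By the law of total probability it suffices to show $\Pr({\bf ES}(\varphi)\mid{\bf EC}(S''))\le (2/3)^{|M'|}$ uniformly in $R$, since averaging preserves the bound and $(2/3)^{|M'|}\le(2/3)^{\eta|M|}$.

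So fix such an $R$ with ${\bf EC}(S'')$ nonempty; the only unassigned variables are the roots in $\bigcup M'$. The key combinatorial fact is that for each $t_i\in\bigcup M'$ (with its unique $M'$-partner $t_j$), every other pseudoedge $\{t_i,t_k\}$ has its clause $C_{i,k}$ already satisfied by $S$. Indeed, let $w$ be the first vertex at which the paths $P^{1/2}_{i\to j}$ and $P^{1/2}_{i\to k}$ inside $T_i$ diverge: then $w$ has two children $u_j,u_k$, both of which must be internal because a leaf in an extended tree has no sibling; $u_j$ is an internal variable of $C_{i,j}$, so by the sibling-positivity clause of the respecting definition its sibling $u_k$ is positive in $S$, and since $u_k\in V(P^{1/2}_{i\to k})$ this positive literal lies in $C_{i,k}$.

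This sibling argument yields two consequences. First, the extensions of $S''$ are parameterized independently by $M'$: for each $\{t_i,t_j\}\in M'$ the only live constraint is $C_{i,j}$, whose non-root variables are all negative in $S$ by the first clause of respecting, so the three valid choices for $(t_i,t_j)$ are $(+,+),(+,-),(-,+)$ and $|{\bf EC}(S'')|=3^{|M'|}$. Second, $|Fix(S')|$ is constant as $S'$ ranges over ${\bf EC}(S'')$: a leaf $\leaf_{k,l}$ whose clause contains a free root $t_k\in\bigcup M'$ is either paired with $\{t_k,t_l\}\in M'$ (in which case $\leaf_{k,l}$ is negative in $S$ and thus never fixed), or else has $\{t_k,t_l\}\notin M'$ (in which case the sibling argument applied to the unique pseudoedge of $M'$ containing $t_k$ puts a positive non-root literal in $C_{k,l}$, again blocking fixedness). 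In every remaining case the fixedness indicator is determined by $S''$. Writing $F$ for this common value, each of the $3^{|M'|}$ extensions carries weight $(1/2)^{|V(H)|-F}$.

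It remains to count extensions satisfying $\varphi\in{\bf CNF}(M)$. The half-clauses indexed by pseudoedges in $M\setminus M'$ are already fully evaluated by $S''$; if any is false the conditional probability is zero and the bound is trivial. Otherwise, for each $\{t_i,t_j\}\in M'$ the chosen half-clause, say $C^{1/2}_{i\to j}$, has all non-root variables negative in $S$, so it forces $t_i$ positive and cuts the three choices to two. Hence exactly $2^{|M'|}$ extensions satisfy $\varphi$, all with the same weight, and the conditional probability is $(2/3)^{|M'|}$ as required. The main obstacle is the constancy of $|Fix|$ in the second consequence above: this is the one place where the sibling-positivity clause of the respecting definition, and not merely the first clause, is indispensable.
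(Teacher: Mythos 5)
Your proof is correct and follows essentially the same route as the paper's: you condition on extensions of $S$ to $V(H)\setminus\bigcup M'$ (equivalently the paper's set ${\bf S}$), and then, inside each conditioning event, you re-derive inline what the paper states as Lemmas \ref{lem:allpos} (the sibling argument), \ref{lem:sameprob} (constancy of $Fix$), and \ref{lem:almcomf} (the $3^{|M'|}$ versus $2^{|M'|}$ count giving $(2/3)^{|M'|}$). The only cosmetic differences are that you take $M'$ to be the full set of respected pseudoedges rather than a subset of the right size, and that you handle the full $\varphi$ via a case split on half-clauses over $M\setminus M'$ rather than passing to the sub-CNF $\varphi'$; the mathematical content is identical.
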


The proof is provided in Subsection \ref{sec:comf}

\paragraph{Indicator variables for pseudoedges and their sums.}
Let $\{t_i,t_j\}$ be a pseudoedge of $H$. The indicator variable $X_{i,j}$ is a $1-0$ variable
such that for $S \in {\bf SAT}(H)$, $X_{i,j}(S)=1$ if and only if $S$ respects $\{t_i,t_j\}$.

For a pseudomatching $M$ of $H$, let
$X_M=\sum_{\{t_i,t_j\} \in M} X_{i,j}$.

\begin{theorem}[{\bf Statements about indicator variables.}]
\label{theor:indi}
\begin{enumerate}
\item For any pseudoedge $\{t_i,t_j\}$ of $H$, $Pr(X_{i,j}=1) \geq 3/128*m^{-4/4.9}$.
\item For any pseudomatching $M$, the variables $\{X_{i,j}|\{t_i,t_j\} \in M\}$ are mutually
independent. 
\end{enumerate}
\end{theorem}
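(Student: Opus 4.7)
The plan is to introduce an equivalent two-stage random process for sampling from the distribution on ${\bf SAT}(H)$ and then read off both parts of the theorem from this description. The process: (a)~for every non-leaf variable $v$, independently flip a fair coin to decide the sign of $v$; (b)~for every leaf $\ell_{a,b}$, inspect the values assigned in step~(a) to $t_a$, $t_b$, and the internal vertices of $P_{a,b}$ --- if all are negative, declare $\ell_{a,b}$ positive (forced); otherwise, flip an independent fair coin for $\ell_{a,b}$. A direct check shows that every outcome satisfies $\phi_H$ and that each $S \in {\bf SAT}(H)$ receives probability $(1/2)^{|V(H)| - |Fix(S)|}$, agreeing with the distribution defined in this section; this is essentially the content of Proposition~\ref{prop:validsp}.

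For part~1, let $I$ denote the internal variables of $C_{i,j}$ and $L$ the siblings (in $T_i$ or $T_j$) of elements of $I$. The small-height property gives $|P_{i,j}| \leq 2(\log m/4.9 + 3) - 1$ and hence $|I|, |L| \leq 2\log m/4.9 + 2$, so $|I|+|L| \leq 4\log m/4.9 + 4$. The sets $I$, $L$, $\{t_i, t_j\}$, $\{\ell_{i,j}\}$ are pairwise disjoint. The event $X_{i,j}=1$ asks that every variable in $I$ is negative, every variable in $L$ is positive, and $\ell_{i,j}$ is negative; under the first condition, the third condition forces at least one of $t_i, t_j$ to be positive, since otherwise step~(b) would set $\ell_{i,j}$ positive. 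Using independence of the coin flips in~(a) together with the independent $\ell_{i,j}$-coin in~(b), the probability evaluates to $(1/2)^{|I|} \cdot (1/2)^{|L|} \cdot (3/4) \cdot (1/2) = (3/8)\,(1/2)^{|I|+|L|} \geq (3/128)\, m^{-4/4.9}$.

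For part~2, take two distinct edges $\{t_i, t_j\}$ and $\{t_{i'}, t_{j'}\}$ of $M$, so the four indices are pairwise distinct. Under the random process, $X_{i,j}$ is a function of the coins at the variables $I \cup L \cup \{t_i, t_j, \ell_{i,j}\}$. The roots $t_i, t_j$ are obviously distinct from $t_{i'}, t_{j'}$; the leaf $\ell_{i,j}$ lies only in $T_i$ and $T_j$, so it differs from $\ell_{i',j'}$; every internal vertex of $P_{i,j}$ belongs (internally) to $T_i$ or $T_j$ and hence, by condition~3 of Definition~\ref{def:btb}, lies in no other tree. Furthermore a sibling of such an internal vertex is itself internal (since $T_i, T_j$ are extended, a leaf cannot be a sibling) and lies in the same tree as its partner, so it too is disjoint from the analogous objects for the other edge. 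Consequently $X_{i,j}$ and $X_{i',j'}$ are functions of disjoint sets of independent fair coins, yielding mutual independence of $\{X_{i,j} : \{t_i, t_j\} \in M\}$.

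The main obstacle is the disjointness bookkeeping in part~2: one must carefully track which variables belong to which trees across $T_i, T_j, T_{i'}, T_{j'}$, and the argument relies both on condition~3 of the \btb\ definition and on the ``leaves have no siblings'' property of extended trees (which is needed to rule out the potentially awkward case where a sibling is a leaf shared with some $T_{i'}$ or $T_{j'}$). Once these identifications are in place, both parts collapse to elementary fair-coin computations; verifying the random-process description itself is a routine bookkeeping exercise parallel to that underlying Proposition~\ref{prop:validsp}.
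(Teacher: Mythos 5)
Your proof is correct, and it takes a genuinely different route from the paper's. The paper never applies the explicit two-stage sampling process outside of the appendix proof of Proposition~\ref{prop:validsp}; for Theorem~\ref{theor:indi} it instead constructs the whole ``guarded sets'' apparatus of Subsection~\ref{sec:guard} (Definition~\ref{def:guard}, Lemmas~\ref{lem:fixfacts}, \ref{lem:guardprob}, \ref{lem:guardcompose}, \ref{lem:guardtransf}, together with Lemmas~\ref{lem:pos} and \ref{lem:neg}) in order to compute $Pr({\bf EC}(S))$ algebraically from the weight formula $(1/2)^{|S \setminus Fix(S)|}$ and to get a product rule for intersections. You shortcut all of that by sampling from the same distribution via independent fair coins --- one per non-leaf vertex, plus one latent coin per leaf used only when the leaf is not forced --- and then reading off $X_{i,j}$ as a deterministic function of the coins at $I \cup L \cup \{t_i, t_j, \ell_{i,j}\}$. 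Part~1 then reduces to a direct $(1/2)^{|I|+|L|} \cdot (3/4) \cdot (1/2)$ computation, which matches the paper's $3 \cdot (1/2)^{|S_{i,j}|+2}$ once you unwind $|S_{i,j}| = |I|+|L|+1$. Part~2 reduces to noting that the coin sets for distinct pseudomatching edges are pairwise disjoint, which you justify correctly: roots and leaves of disjoint tree pairs are disjoint, internal vertices of a tree lie in no other tree by condition~3 of Definition~\ref{def:btb}, and the extended-tree property guarantees that siblings of internal vertices are themselves internal (not shared leaves). Functions of disjoint blocks of mutually independent coins are mutually independent, and you are done. The trade-off is clear: the paper's guarded-set lemmas are reusable building blocks whose statements are purely about the probability space, while your coin-flip argument is shorter, more transparent, and localizes the structural facts it needs (extended trees, condition~3) precisely where they are used. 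Your approach does implicitly re-prove the equivalence of the process with the distribution --- essentially repeating the verification underlying Proposition~\ref{prop:validsp} --- but that cost is small compared to what it saves.
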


The proof appears in Subsection~\ref{subsec:indi}.

\begin{lemma} \label{lem:chern}
Let $M$ be a pseudomatching of $H$ and $\eta=3/12800*m^{-4/4.9}$.
Then $Pr(X_M<\eta*|M|) \leq 0.4^{100*\eta*|M|}$.
\end{lemma}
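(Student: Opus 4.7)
The plan is to apply a multiplicative Chernoff lower-tail bound to $X_M$, using the two conclusions of Theorem \ref{theor:indi}. By linearity of expectation together with part~1 of Theorem \ref{theor:indi},
\[ E[X_M] \;=\; \sum_{\{t_i,t_j\}\in M} E[X_{i,j}] \;\geq\; |M|\cdot \tfrac{3}{128}\,m^{-4/4.9} \;=\; 100\,\eta\,|M|, \]
so the target threshold $\eta|M|$ is at most $E[X_M]/100$. Part~2 of Theorem \ref{theor:indi} states that the $\{0,1\}$-valued summands $X_{i,j}$, for $\{t_i,t_j\}\in M$, are mutually independent in the probability space of Proposition \ref{prop:validsp}, which is exactly the hypothesis needed to invoke a Chernoff tail estimate for their sum.

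Write $\mu = E[X_M]$ and $\delta = 99/100$. Since $\eta|M| \leq \mu/100 = (1-\delta)\mu$, the inclusion $\{X_M<\eta|M|\}\subseteq\{X_M\leq(1-\delta)\mu\}$ reduces the task to bounding the lower tail at a $1/100$-fraction of the mean. I would then invoke the sharp multiplicative Chernoff bound
\[ Pr\bigl(X_M \leq (1-\delta)\mu\bigr) \;\leq\; \left(\frac{e^{-\delta}}{(1-\delta)^{1-\delta}}\right)^{\mu}. \]
A direct numerical check with $\delta = 99/100$ shows the base equals $e^{-0.99}/(0.01)^{0.01}$, which is approximately $0.389 < 0.4$. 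Hence $Pr(X_M<\eta|M|)\leq 0.4^{\mu}$, and since $\mu \geq 100\,\eta\,|M|$ and $0.4<1$, the exponent can be weakened to $100\,\eta\,|M|$, yielding the claimed bound.

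The main (minor) point to watch is the choice of Chernoff variant: the commonly cited weak form $Pr(X\leq(1-\delta)\mu)\leq e^{-\delta^2\mu/2}$ only yields a base of about $e^{-0.49}\approx 0.61$, which is too large to produce the constant $0.4$. The sharp form above (equivalently $Pr(X\leq k)\leq e^{-\mu}(e\mu/k)^{k}$) is what makes the target constant fit comfortably. No further structural properties of $H$ or of the class ${\bf PE}$ are required here; everything substantive has already been packaged into Theorem \ref{theor:indi}, so the present lemma is essentially a bookkeeping consequence.
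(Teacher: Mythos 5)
Your proof is correct and follows essentially the same route as the paper: compute $E[X_M]\geq 100\,\eta\,|M|$ from part~1 of Theorem~\ref{theor:indi}, invoke mutual independence from part~2 to justify a Chernoff bound, apply the sharp multiplicative lower-tail form with $\delta=99/100$, and verify numerically that $e^{-0.99}/0.01^{0.01}<0.4$. Your remark that the weaker $e^{-\delta^2\mu/2}$ form is insufficient here is a useful sanity check but does not change the substance; the paper also cites the same sharp inequality (4.4) of Theorem 4.5 in \cite{MUbook}.
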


{\bf Proof.}
By Theorem \ref{theor:indi}, the variables
$\{X_{i,j}|\{t_i,t_j\} \in M\}$ are mutually independent, which allows us to use Chernoff's bounds.

For this, we first calculate a lower bound on the expected value of $X_M$. 

\begin{equation} \label{eq:exp}
E[X_M]=\sum_{\{t_i,t_j\} \in M} E[X_{i,j}]=
\sum_{\{t_i,t_j\} \in M} Pr(X_{i,j}=1) \geq 100*\eta*|M|
\end{equation}

where the last inequality follows from Theorem \ref{theor:indi}
stating that $Pr(X_{i,j}=1) \geq 100*\eta$. 

To apply Chernoff's bounds,
we use inequality (4.4.) of Theorem 4.5. of \cite{MUbook}. We provide it
here for convenience, adapted to our notation

\begin{equation} \label{eq:chern}
Pr(X_M \leq (1-\delta)*E[X_M]) \leq \left(\frac{e^{-\delta}}{(1-\delta)^{1-\delta}}\right)^{E[X_M]}
\end{equation}
for any $0<\delta<1$.
\sloppy
Now, we obtain the following.
\begin{multline}
Pr(X_M<\eta*|M|) \leq Pr(X_M \leq E[X_M]/100)=\\
Pr(X_M \leq (1-99/100)*E[X_M]) \leq 0.4^{E[X_M]} \leq 0.4^{100*\eta*|M|}
\end{multline}
where the first inequality follows from \eqref{eq:exp}
and the obvious fact that 
$Pr(X_M< E[X_M]/100) \leq Pr(X_M \leq E[X_M]/100)$, 
the second inequality follows
from substituting $0.99$ to $\delta$ in \eqref{eq:chern} and  the inequality
$\frac{e^{-0.99}}{0.01^{0.01}} \leq 0.4$ verified by a straightforward calculation, the third inequality follows from \eqref{eq:exp}.
$\blacksquare$

{\bf Proof of Theorem \ref{theor:mainprob}.}
By definition $\psi=Conj(S_1) \wedge \varphi_1 \vee \dots \vee Conj(S_q) \wedge \varphi_q$, where
$S_1, \dots, S_q$ are sets of literals over $(V(H) \setminus \bigcup M)$ 
and $\varphi_i \in {\bf CNF}(M)$
for $1 \leq i \leq q$.

Observe that
\begin{equation} \label{eq:satpsi}
{\bf ES}(\psi)=\bigcup_{i=1}^q [{\bf ES}(\varphi_i) \cap {\bf EC}(S_i)]
\end{equation}

Indeed, assume $S \in {\bf ES}(\psi)$. Then $S$ satisfies one of the conjuncts of
$\psi$. Hence, there is $1 \leq i \leq q$ such that
$S \in {\bf ES}(Conj(S_i) \wedge \varphi_i)$. 
Therefore, $S$ satisfies \emph{both}
$Conj(S_i)$ and $\varphi_i$.
Thus, $S \in {\bf ES}(Conj(S_i)) \cap {\bf ES}(\varphi_i)$.
To satisfy a conjunction of literals, $S$ must contain all these literals, as a result
$S \in {\bf EC}(S_i)$. Consequently, $S \in  {\bf ES}(\varphi_i) \cap {\bf EC}(S_i)$
and hence $S \in \bigcup_{i=1}^q [{\bf ES}(\varphi_i) \cap {\bf EC}(S_i)]$.

Conversely, assume that 
$S \in \bigcup_{i=1}^q [{\bf ES}(\varphi_i) \cap {\bf EC}(S_i)]$.
Then $S \in {\bf ES}(\varphi_i) \cap {\bf EC}(S_i)$ for some $1 \leq i \leq q$.
Since $S$ contains all the literals of $S_i$, $S$ satisfies the conjunction of
these literals. Therefore, $S \in {\bf ES}(Conj(S_i))$. Since $S$ satisfies
both $\varphi_i$ and $Conj(S_i)$, $S$ also satisfies their conjunction.
Consequently, $S \in {\bf ES}(Conj(S_i) \wedge \varphi_i)$. Finally, since 
$Conj(S_i) \wedge \varphi_i$ is a disjunct of $\psi$, we conclude that $S \in {\bf ES}(\psi)$.

Set $\eta=3/12800*m^{-4/4.9}$. Assume w.l.o.g., because of the possibility of rearrangement,
that $S_1, \dots, S_r$ are the $\eta$-comfortable w.r.t. $M$ 
and $S_{r+1}, \dots, S_q$ are not $\eta$-comfortable.

Then
\begin{multline} \label{eq:firstred}
Pr({\bf EC}(\psi))=Pr(\bigcup_{i=1}^q [{\bf ES}(\varphi_i) \cap {\bf EC}(S_i)])=\sum_{i=1}^q Pr({\bf ES}(\varphi_i) \cap {\bf EC}(S_i))=\\
\sum_{i=1}^r Pr({\bf ES}(\varphi_i) \cap {\bf EC}(S_i))+\sum_{i=r+1}^q Pr({\bf ES}(\varphi_i) \cap {\bf EC}(S_i))
\end{multline}

where the first equality follows from \eqref{eq:satpsi}. For the second equality, notice that
by Lemma \ref{lem:disjcont}, ${\bf EC}(S_i)$ for $1 \leq i \leq q$ are mutually disjoint.
Thereofre, ${\bf ES}(\varphi_i) \cap {\bf EC}(S_i))$ for $1 \leq i \leq q$ are also mutually
disjoint and hence the probability of their union can be replaced by the sum of their probabilities.
The third equality is correct because the right-hand part of its is a regrouping of items in
the left-hand part.

We are now, going to show that both $\sum_{i=1}^r Pr({\bf ES}(\varphi_i) \cap {\bf EC}(S_i))$
and $\sum_{i=r+1}^q Pr({\bf ES}(\varphi_i) \cap {\bf EC}(S_i))$ are exponentially small from where
the theorem will immediately follow. 

For the former, 
\begin{multline} \label{eq:comf}
\sum_{i=1}^r Pr({\bf ES}(\varphi_i) \cap {\bf EC}(S_i))=\sum_{i=1}^r [Pr({\bf ES}(\varphi_i)|{\bf EC}(S_i))*Pr({\bf EC}(S_i)] \leq \\
\sum_{i=1}^r [(2/3)^{\eta*|M|}*Pr({\bf EC}(S_i)]=(2/3)^{\eta*|M|}*Pr(\bigcup_{i=1}^r {\bf EC}(S_i)) \leq (2/3)^{\eta*|M|}
\end{multline}

where the first equality follows from the definition of conditional probability and the first 
inequality follows from Theorem \ref{theor:boundcomf}. The second equality is the result of moving
$(2/3)^{\eta*|M|}$ out of the brackets and replacing the sum of probabilities by the probability of union,
the replacement enabled by Lemma \ref{lem:disjcont} (see the explanation to \eqref{eq:firstred}
for the detailed justification). Finally, the second inequality holds because 
$Pr(\bigcup_{i=1}^r {\bf EC}(S_i)) \leq 1$.

In order to establish an exponentially small lower bound on 
$\sum_{i=r+1}^q Pr({\bf ES}(\varphi_i) \cap {\bf EC}(S_i))$, the key observation is the following.

\begin{equation} \label{eq:leadchern}
\bigcup_{i=r+1}^q {\bf EC}(S_i)=\{S'| (S' \in {\bf SAT}(H)) \wedge (X_M(S') < \eta*|M|)\}
\end{equation}

Indeed, assume that $S \in \bigcup_{i=r+1}^q {\bf EC}(S_i)$.
Then, $S_i \subseteq S$ for some $r+1 \leq i \leq q$.
By our assumption, $S_i$ is not $\eta$-comfortable w.r.t. $M$.
Hence, as $S$ is a satisfying assignment of $\phi_H$ 
$S$ is not $\eta$-comfortable w.r.t. $M$ according 
to Lemma \ref{lem:projection}. 

As $S$ is a satisfying assignment of $\phi_H$, 
the only way for $S$ to be not $\eta$-comfortable is to
respect less than $\eta*|M|$ pseudoedges $\{t_i,t_j\}$ of $M$.
Consequently, less than $\eta*|M|$ corresponding 
variables $X_{i,j}$ equal $1$
on $S$ and hence $X_M(S) <\eta*|M|$. That is, $S \in 
\{S'| (S' \in {\bf SAT}(H)) \wedge (X_M(S') < \eta*|M|)\}$.

Conversely, assume that 
$S \in \{S'| (S' \in {\bf SAT}(H)) \wedge (X_M(S') < \eta*|M|)\}$
That is, $X_M(S)<\eta*|M|$. Hence less than $\eta*|M|$ variables
$\{X_{i,j}|\{t_i,t_j\} \in M\}$ are $1$ on $S$, that is, in turn,
$S$ respects less than $\eta*|M|$ pseudoedges of $M$, hence
$S$ is not $\eta$-comfortable w.r.t. $M$. As $S$ is a satisfying assignment
of $\phi_H$, the projection $S^*$ of $S$ to $V(H) \setminus \bigcup M$
is not $\eta$-comfortable by Lemma \ref{lem:projection}.
By assumption, there is $r+1 \leq i \leq q$ such that
$S_i=S^*$. Hence, $S \in {\bf EC}(S_i)$ as required.

Now, we obtain the following:
\begin{multline} \label{eq:finchern}
\sum_{i=r+1}^q Pr({\bf ES}(\varphi_i) \cap {\bf EC}(S_i)) \leq \sum_{i=r+1}^q Pr({\bf EC}(S_i))=\\
Pr(\bigcup_{i=r+1}^q {\bf EC}(S_i))=Pr(\{S'| (S' \in {\bf SAT}(H)) \wedge (X_M(S') < \eta*|M|)\})=\\
Pr(X_M<\eta*|M|) \leq 0.4^{100*\eta*|M|}
\end{multline}

For the first inequality, note that for each $r+1 \leq i \leq q$, 
${\bf ES}(\varphi_i) \cap {\bf EC}(S_i) \subseteq {\bf EC}(S_i)$, hence the probability of the 
event on the left-hand side does not exceed the probability of the event on the right-hand side.
The first equality follows from Lemma \ref{lem:disjcont} (see the explanation to \eqref{eq:firstred}
for the detailed justification). The second equality is just \eqref{eq:leadchern}.
The third equality is just an effect of changing notation for the probability of the same event. 
The fourth inequality follows from Lemma \ref{lem:chern}.

By substituing \eqref{eq:comf} and \eqref{eq:finchern} into the last item of \eqref{eq:firstred}
and then, by substituting the value of $\eta$ and the assumed lower bound of $|M|$, we obtain
the following.
\begin{multline}
Pr({\bf EC}(\psi)) \leq (2/3)^{\eta*|M|}+0.4^{100*\eta*|M|} \leq\\ 
(2/3)^{3/12800*m^{-4/4.9}*m^{0.999}/3}+(0.4)^{100*3/12800*m^{-4/4.9}*m^{0.999}/3}
\end{multline}

Let $\mu=0.999-4/4.9$. Clearly, $0<\mu<1$ and there is $\lambda$ such that, for a sufficiently large $m$,
$Pr({\bf EC}(\psi)) \leq \lambda^{m^{\mu}}$. 
$\blacksquare$

\subsection{Proof of Theorem \ref{theor:boundcomf}} \label{sec:comf}
Referring to the informal explanation provided after Definition \ref{def:etacomf},
Theorem \ref{theor:boundcomf} is statement (i). The proof strategy outlined there
is implemented in this section as follows. 
 
Using the next lemma we will be able to conclude that
if $S$ is a partial assignment on $V(H) \setminus \bigcup M$
that is $1$-comfortable for $M$, then all $3^{|M|}$ elements of ${\bf EC}(S)$
are satisfying assignments of $\phi_H$. Next, Lemma \ref{lem:sameprob} states that all
these elements have the same probability. After that, in Lemma \ref{lem:almcomf},
we establish the claim of Theorem \ref{theor:boundcomf} for $\eta=1$ and then,
in the actual proof of Theorem \ref{theor:boundcomf}, we do `reduction' to an 
arbitrary $\eta$. 

\begin{lemma} \label{lem:allpos}
Let $M$ be a pseudomatching and let $S$ be a partial assignment on $V(H) \setminus \bigcup M$
that is $1$-comfortable for $M$ (that is, $S$ does not falsify any clause of $\phi_H$
and respects all the pseudoedges of $M$).
Let $\{t_i,t_j\} \in M$ and let $\{t_i,t_{j'}\} \neq \{t_i,t_j\}$ be another pseudoedge of 
$H$ (not contained in $M$ due to having a joint end with $\{t_i,t_j\}$). 
Then $S$ assigns positively at least one internal variable of $C_{i,j'}$.
\end{lemma}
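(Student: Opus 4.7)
The plan is to exploit the shared tree structure of the two clauses $C_{i,j}$ and $C_{i,j'}$ — since both pseudoedges emanate from $t_i$, the halves $P^{1/2}_{i \to j}$ and $P^{1/2}_{i \to j'}$ both live in $T_i$ and must share an initial segment before diverging. The vertex at which the divergence occurs on $P^{1/2}_{i \to j'}$ will furnish the required positive internal variable of $C_{i,j'}$, thanks to the "siblings positive" clause in Definition \ref{def:resp}.

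First I would argue that the two half-paths $P^{1/2}_{i \to j}$ and $P^{1/2}_{i \to j'}$ end at \emph{distinct} leaves of $T_i$: if $\ell_{i,j} = \ell_{i,j'}$, this leaf would belong to the three trees $T_i, T_j, T_{j'}$, violating condition~2 of Definition \ref{def:btb}. Hence, walking from $t_i$, the two half-paths share a (possibly empty) common prefix and then split. Let $w$ be the last common vertex and let $v$, $u$ be the next vertex on $P^{1/2}_{i \to j}$ and $P^{1/2}_{i \to j'}$ respectively. Then $v$ and $u$ are distinct children of $w$ in $T_i$, so they are siblings in $T_i$ — that is, siblings in $H$ in the sense defined just before Definition \ref{def:resp}.

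Next I would use the fact that $T_i$ is \emph{extended}: by definition, no leaf of $T_i$ has a sibling. Since $v$ and $u$ are siblings, neither can be a leaf of $T_i$. Consequently, $v$ is an internal vertex of $H$ lying on $P_{i,j}$, so $v$ is an internal variable of the clause $C_{i,j}$; symmetrically, $u$ is an internal variable of $C_{i,j'}$.

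Finally, I would apply the hypothesis that $S$ respects the pseudoedge $\{t_i,t_j\} \in M$. By clause~2 of Definition \ref{def:resp}, the sibling of every internal variable of $C_{i,j}$ occurs positively in $S$; in particular, $u$ (the sibling of $v$) occurs positively in $S$. Since $u$ is an internal variable of $C_{i,j'}$, this exhibits the desired positive internal variable and completes the proof. The only mild subtlety is the one handled above — ensuring that $u$ is genuinely internal rather than a leaf — which is exactly where the "extended" hypothesis in Definition \ref{def:btb} pulls its weight; beyond that, the argument is purely structural and requires no probabilistic reasoning.
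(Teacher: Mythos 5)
Your proof is correct and takes essentially the same route as the paper's: you identify the branching vertex (the paper phrases it as the lowest common ancestor $u$ of $\ell_{i,j}$ and $\ell_{i,j'}$ in $T_i$, you phrase it as the last common vertex of the two half-paths), observe that its two children on the respective half-paths are siblings, use the extended-tree property to rule out that either is a leaf (and that either is the root, since each has a parent), and invoke the siblings-positive clause of Definition \ref{def:resp}. Your explicit check that $\ell_{i,j}\neq\ell_{i,j'}$ via condition 2 of Definition \ref{def:btb} is a small clarification the paper leaves implicit, but the argument is otherwise identical.
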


{\bf Proof.}
The root $t_i$ of $T_i$ is a common ancestor of $\leaf_{i,j}$ and of $\leaf_{i,j'}$.
Therefore, we can identify the \emph{lowest} common ancestor $u$ of $\leaf_{i,j}$
and $\leaf_{i,j'}$ in $T_i$. Then $u$ has two children $v$ and $w$ (if it had one child
that child would be a common ancestor of $\leaf_{i,j}$ and $\leaf_{i,j'}$ in contradiction
to being $u$ the lowest one. Assume w.l.o.g. that $v$ is an ancestor of $l_{i,j}$.
Then $v$ belongs to $P_{i,j}$, hence $v$ is a variable of $C_{i,j}$. As $v$ is not a root
variable (due to having a parent), $v$ occurs negatively in $S$. Notice that $v \neq \leaf_{i,j}$
because $v$ has a sibling in $T_i$ while $\leaf_{i,j}$ does not. Therefore, $v$ is an internal
variable of $C_{i,j}$ and hence its sibling $w$ occurs positively in $S$.

Observe that $w$ is an ancestor of $\leaf_{i,j'}$. Indeed, otherwise, $v$ is an ancestor 
of $\leaf_{i,j'}$ in contradiction to $u$ being the lowest common ancestor of $\leaf_{i,j}$
and $\leaf_{i,j'}$. This means that $w$ belongs to $P_{i,j'}$ and hence $w$ is a variable
of $C_{i,j'}$. Furthermore, $w \neq \leaf_{i,j'}$ because $w$ has a sibling $v$ while $\leaf_{i,j'}$
does not have siblings. As $w \neq t_i$ due to having a parent, we conclude that $w$ is an internal
variable of $C_{i,j'}$ assigned positively by $S$, as required.
$\blacksquare$

\begin{lemma} \label{lem:sameprob}
Let $M$ be a pseudomatching and let $S$ be a partial assignment on $V(H) \setminus \bigcup M$
that is $1$-comfortable for $M$. Then for every $S' \in {\bf EC}(S)$, $Pr(\{S'\})=1/2^{|V(H) \setminus Fix(S)|}$
\end{lemma}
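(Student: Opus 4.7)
The plan is to reduce the claim to the set identity $Fix(S')=Fix(S)$. Once this is established, the definition of the probability space gives $Pr(\{S'\})=(1/2)^{|V(H)\setminus Fix(S')|}=(1/2)^{|V(H)\setminus Fix(S)|}$, which is exactly the conclusion. So the bulk of the work will be to inspect each pseudoedge $\{t_i,t_j\}$ of $H$ and verify that $\leaf_{i,j}\in Fix(S)$ if and only if $\leaf_{i,j}\in Fix(S')$, splitting into three cases according to how the pseudoedge sits relative to $M$.

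For $\{t_i,t_j\}\in M$, the roots $t_i,t_j$ lie outside $Vars(S)=V(H)\setminus\bigcup M$, so $\leaf_{i,j}\notin Fix(S)$ trivially; and since $1$-comfortability forces every non-root variable of $C_{i,j}$ to occur negatively in $S$, the only way $S'\in{\bf SAT}(H)$ can satisfy $C_{i,j}$ is to assign at least one of $t_i,t_j$ positively, so $\leaf_{i,j}\notin Fix(S')$ either. For a pseudoedge $\{t_a,t_b\}\notin M$ having an endpoint, say $t_a$, in $\bigcup M$, again $\leaf_{a,b}\notin Fix(S)$ since $t_a\notin Vars(S)$; and this is precisely the setting of Lemma \ref{lem:allpos}, which supplies an internal variable of $C_{a,b}$ assigned positively by $S$, hence also by $S'$, so $\leaf_{a,b}\notin Fix(S')$. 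For a pseudoedge $\{t_a,t_b\}$ with both endpoints outside $\bigcup M$, every variable appearing in the fixing conditions for $\leaf_{a,b}$ lies in $Vars(S)$ and is assigned identically in $S$ and $S'$, so the two fixing predicates are literally the same statement.

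The main obstacle, and the reason Lemma \ref{lem:allpos} was isolated just before, is the middle case: a priori, extending $S$ to $S'$ by assigning the matched roots could create brand-new fixed leaves $\leaf_{a,b}$ for clauses $C_{a,b}$ adjacent to $M$ in the pseudoedge sense, which would make $|Fix(S')|$ vary across ${\bf EC}(S)$ and destroy the desired uniform probability. Lemma \ref{lem:allpos} kills this possibility by guaranteeing that each such clause already carries a positive internal variable in $S$, so no additional fixings can arise regardless of how the roots of $\bigcup M$ are assigned in $S'$.
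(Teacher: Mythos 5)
Your proposal is correct and follows essentially the same route as the paper: reduce to showing $Fix(S')=Fix(S)$, then do a case analysis on pseudoedges according to the size of their intersection with $\bigcup M$, invoking Lemma \ref{lem:allpos} to handle the critical middle case. The only cosmetic difference is that the paper proves $Fix(S)\subseteq Fix(S')$ directly from monotonicity of the fixing conditions and then uses the case analysis only for the reverse inclusion, whereas you run the case analysis for both directions at once; this changes nothing substantive.
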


{\bf Proof.}
By definition, $Pr(\{S'\})=1/2^{|V(H) \setminus Fix(S')|}$. We are going to show that $Fix(S)=Fix(S')$.

Assume that $\leaf_{i,j} \in Fix(S)$. This means that $S$ occurs positively in $S$ and the rest of variables of $C_{i,j}$
occur negatively in $S$. As $S \subseteq S'$, all these occurrences are preserved in $S'$ and hence 
$\leaf_{i,j} \in Fix(S')$. That is $Fix(S) \subseteq Fix(S')$. 

Conversely, assume that $\leaf_{i,j} \in Fix(S')$. Then $\{t_i,t_j\} \notin M$. Indeed, otherwise, as $S$ respects $\{t_i,t_j\}$,
$\leaf_{i,j}$ is assigned negatively in $S$ and hence in $S'$ in contradiction to the assumption that $\leaf_{i,j} \in Fix(S')$.

Furthermore, it cannot happen that $|\{t_i,t_j\} \cap \bigcup M|=1$. Indeed, otherwise, there is a pseudoedge, say
$\{t_i,t_j'\} \in M$ such that $\{t_i,t_j\} \neq \{t_i,t_j'\}$.
Then, by Lemma \ref{lem:allpos}, at least one non-leaf variable of $C_{i,j}$ is assigned positively by $S$ and hence by $S'$.
However, this is a contradiction to $\leaf_{i,j} \in Fix(S')$ requiring all the non-leaf variables of $C_{i,j}$ to be assigned
negatively. 

It remains to assume that $\{t_i,t_j\} \cap \bigcup M=\emptyset$. But in this case all the variables of $C_{i,j}$ are
contained in $V(H) \setminus \bigcup M$ and hence occur in $S$. In particular, $\leaf_{i,j}$ 
occurs positively in $S$ and the rest of variables of $C_{i,j}$
occur negatively in $S$. It follows that $\leaf_{i,j}\in Fix(S)$. That is $Fix(S') \subseteq Fix(S)$.
$\blacksquare$

\begin{lemma} \label{lem:almcomf}
Let $M$ be a pseudomatching and let $S$ be a partial assignment on $V(H) \setminus \bigcup M$
that is $1$-comfortable for $M$. Let $\varphi \in {\bf CNF}(M)$. 
Then $Pr({\bf ES}(\varphi)|{\bf EC}(S))=(2/3)^{|M|}$.
\end{lemma}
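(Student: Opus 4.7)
The plan is to show that $|{\bf EC}(S)| = 3^{|M|}$, that $|{\bf ES}(\varphi) \cap {\bf EC}(S)| = 2^{|M|}$, and that all elements of ${\bf EC}(S)$ are equiprobable, after which the ratio $2^{|M|}/3^{|M|}=(2/3)^{|M|}$ gives the conditional probability directly.

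First I would count ${\bf EC}(S)$. Since $Vars(S) = V(H) \setminus \bigcup M$, every $S' \in {\bf EC}(S)$ corresponds to a choice of assignment for the $2|M|$ root variables in $\bigcup M$. I need to determine which such extensions actually satisfy $\phi_H$. For any clause $C_{k,l}$ with $\{t_k,t_l\} \notin M$, I claim $C_{k,l}$ is already satisfied by $S$ alone. If neither $t_k$ nor $t_l$ lies in $\bigcup M$, then $C_{k,l}$ is entirely evaluated by $S$, and by $1$-comfortability $S$ does not falsify it, so it is satisfied. If exactly one of them, say $t_l$, lies in $\bigcup M$, then $\{t_l, t_{l'}\} \in M$ for some $l'$, and applying Lemma~\ref{lem:allpos} to the pseudoedges $\{t_l,t_{l'}\} \in M$ and $\{t_l,t_k\}$ gives a positively assigned internal variable of $C_{k,l}$ in $S$, so $C_{k,l}$ is satisfied; the case where both $t_k, t_l$ lie in $\bigcup M$ is analogous. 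Hence the only clauses that genuinely constrain the extension are the $C_{i,j}$ for $\{t_i,t_j\} \in M$, and since $S$ respects each such pseudoedge all non-root variables of $C_{i,j}$ are negative in $S$, so a satisfying extension must assign at least one of $t_i, t_j$ positively. Because the pseudoedges of $M$ have pairwise disjoint endpoints, these $|M|$ constraints are independent, each admitting $3$ out of $4$ assignments, giving $|{\bf EC}(S)| = 3^{|M|}$.

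Next, I would count ${\bf ES}(\varphi) \cap {\bf EC}(S)$. By Definition~\ref{def:mcnf}, $\varphi$ has one clause $C^{1/2}_{i,j}$ per $\{t_i,t_j\} \in M$, equal to $V(P^{1/2}_{i \rightarrow j})$ or $V(P^{1/2}_{j \rightarrow i})$. All non-root variables on these half-paths lie in $V(H) \setminus \bigcup M$ and are non-root variables of $C_{i,j}$, hence occur negatively in $S$ (and in every extension $S'$). Therefore satisfying $C^{1/2}_{i,j}$ forces the corresponding root (say $t_i$ if the clause is $V(P^{1/2}_{i \rightarrow j})$) to be positive, which automatically satisfies $C_{i,j}$ as well, leaving $2$ free choices for the other root $t_j$. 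Again the choices across $\{t_i,t_j\} \in M$ are independent, so $|{\bf ES}(\varphi) \cap {\bf EC}(S)| = 2^{|M|}$.

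Finally, by Lemma~\ref{lem:sameprob}, every $S' \in {\bf EC}(S)$ has the same probability $1/2^{|V(H) \setminus Fix(S)|}$. Therefore $Pr({\bf EC}(S)) = 3^{|M|}/2^{|V(H) \setminus Fix(S)|}$ and $Pr({\bf ES}(\varphi) \cap {\bf EC}(S)) = 2^{|M|}/2^{|V(H) \setminus Fix(S)|}$, whence $Pr({\bf ES}(\varphi) \mid {\bf EC}(S)) = (2/3)^{|M|}$, as desired. The only delicate step is the counting of ${\bf EC}(S)$: I anticipate the main obstacle to be the careful case analysis verifying that every clause outside $M$ is already satisfied by $S$ for any extension of the roots in $\bigcup M$, which is precisely what Lemma~\ref{lem:allpos} was set up to handle.
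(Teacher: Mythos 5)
Your proof is correct and follows essentially the same route as the paper: you characterize ${\bf EC}(S)$ as extensions with at least one positive root per pseudoedge of $M$ (with size $3^{|M|}$), characterize ${\bf EC}(S)\cap {\bf ES}(\varphi)$ as those where the designated half-clause root is positive (size $2^{|M|}$), and conclude via Lemma~\ref{lem:sameprob} that all elements of ${\bf EC}(S)$ are equiprobable, yielding the ratio $(2/3)^{|M|}$. The paper packages the two characterizations as explicit set equalities ${\bf EC}(S)={\bf SA}$ and ${\bf EC}(S)\cap {\bf ES}(\varphi)={\bf SB}$; you do the same counting inline, and you are actually a bit more careful than the paper in explicitly handling the case of a clause $C_{k,l}\notin M$ with both roots in $\bigcup M$, which the paper's case split omits (though it is handled the same way via Lemma~\ref{lem:allpos}).
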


{\bf Proof.}
Let ${\bf SA}$ be the set of all extensions of $S$ to $V(H)$ such that for each $\{t_i,t_j\} \in M$,
at least one of $t_i,t_j$ occurs positively in $S$.
Observe that
\begin{equation} \label{eq:sa}
{\bf EC}(S)={\bf SA}
\end{equation}

Indeed, let $S' \in {\bf EC}(S)$. Then $S'$ is a satisfying assignment of $\phi_H$. In particular,
for each $\{t_i,t_j\} \in M$, the clause $C_{i,j}$ is satisfied. However, all the variables
of $C_{i,j}$, except the root ones occur negatively in $S$ and hence in $S'$. It follows that, to satisfy
$C_{i,j}$, at least one of root variables $t_i,$ or $t_j$ must occur positively in $S'$.
Consequently (taking into account that $S'$ is an extension of $S$), 
$S' \in {\bf SA}$ and hence ${\bf EC}(S) \subseteq {\bf SA}$. 

Conversely, assume that $S' \in {\bf SA}$. To show that $S' \in {\bf EC}(S)$, we need to demonstrate
that $S'$ is a satisfying assignment of $\phi_H$.

For each $\{t_i,t_j\} \in M$, the clause $C_{i,j}$ is satisfied because, by definition of ${\bf SA}$,
at least one of $t_i,t_j$ occurs positively in $S'$. Consider a clause $C_{i,j'}$ such that 
$i \in \bigcup M$ while $j' \notin \bigcup M$. By Lemma \ref{lem:allpos}, at least one variable of $C_{i,j'}$
occurs positively in $S$ and hence in $S'$, satisfying $C_{i,j'}$. Finally consider a clause
$C_{i,j}$ such that none of $t_i,t_j$ occur in $\bigcup M$. This means that all the variables of $C_{i,j}$
occur in $S$. As $S$ is $1$-comfortable for $M$, $S$ does not falsify $C_{i,j}$ and, since $S$ contains literals
of all the variables of $C_{i,j}$, $S$ must \emph{satisfy} $C_{i,j}$ and hence so is $S'$. 
Consequently, $S'\in {\bf EC}(S)$ and hence ${\bf SA} \subseteq {\bf EC}(S)$.

As ${\bf SA}$ allows three independent choices per clause corresponding to a pseudoedge of $M$, we conclude that:
\begin{equation} \label{eq:sasize}
|{\bf SA}|=3^{|M|}
\end{equation}
Now, we are ready to obtain an expression for the probability of ${\bf EC}(S)$.

\begin{multline} \label{eq:probsa}
Pr({\bf EC}(S))=\sum_{S' \in {\bf EC}(S)} Pr(\{S'\})=\sum_{S' \in {\bf EC}(S)}(1/2)^{|V(H) \setminus Fix(S)|}=\\
(1/2)^{|V(H) \setminus Fix(S)|}*|{\bf EC}(S)|=(1/2)^{|V(H) \setminus Fix(S)|}*|{\bf SA}|=\\
(1/2)^{|V(H) \setminus Fix(S)|}*3^{|M|}
\end{multline}

where the second equality follows from definition of the probability of an event, the second equality
follows from Lemma \ref{lem:sameprob}, the third equality a straightforward algebraic transformation,
the fourth equality follows from \eqref{eq:sa}, the fifth equality follows from \eqref{eq:sasize}. 

Recall, that by definition of ${\bf CNF}(M)$, the clauses $C^{1/2}_{i,j}$ of $\varphi$ correspond to the pseudoedges
$\{t_i,t_j\} \in M$. In particular, the clause $C^{1/2}_{i,j}$ 
is a subclause of $C_{i,j}$ whose variables are either $V(P^{1/2}_{i \leftarrow j})$ or
$V(P^{1/2}_{j \leftarrow i})$. That is, either the vertices of the path from $t_i$ to $\leaf_{i,j}$ in $T_i$
or of the path from $t_j$ to $\leaf_{i,j}$ in $T_j$. By construction, $C^{1/2}_{i,j}$ contains exactly one
vertex of $\{t_i,t_j\}$. Denote this vertex by $root(i,j)$.

Let ${\bf SB}$ be the set of all extensions $S'$ of $S$ to $V(H)$ such that for each $\{t_i,t_j\} \in M$,
$root(i,j)$ occurs positively in $S'$.

\begin{equation} \label{eq:sb}
{\bf EC}(S) \cap {\bf ES}(\varphi)={\bf SB}
\end{equation}

Indeed, let $S' \in {\bf EC}(S) \cap {\bf ES}(\varphi)$.
This means that $S'$ is an extension of $S$ and
satisfies $C^{1/2}_{i,j}$ for all $\{t_i,t_j\} \in M$.
For each $C^{1/2}_{i,j}$ all the non-root vairables occur negatively
in $S$ and hence so in $S'$. That is, the only way for $S'$ to satisfy
$C^{1/2}$ is to contain the positively occurrence of $root(i,j)$.
Hence, $S' \in {\bf SB}$, and therefore ${\bf EC}(S) \cap {\bf ES}(\varphi) \subseteq {\bf SB}$.
Conversely, suppose $S' \in {\bf SB}$. Then $S'$ is an extension of $S$ to $V(H)$ in which
for each $\{t_i,t_j\}$, at least one variable occurs positively, namely $root(i,j)$.
This means that $S' \in {\bf SA}$ and hence $S' \in {\bf EC}(S)$ according to \eqref{eq:sa}.
Moreover, because of the positive occurrence of $root(i,j)$, for each $\{t_i,t_j\} \in M$,
$S'$ satisfies all the clauses $C^{1/2}_{i,j}$ and hence $S'$ satisfies $\varphi$. Consequently,
$S' \in {\bf ES}(\varphi)$. That is, $S' \in {\bf EC}(S) \cap {\bf ES}(\varphi)$ and hence
${\bf SB} \subseteq {\bf EC}(S) \cap {\bf ES}(\varphi)$. The correctness proof for \eqref{eq:sb}
is now complete. 

Since an element of ${\bf SB}$ can be formed by two independent choices
per clause corresponding to an element of $M$, we conclude that
\begin{equation} \label{eq:sbsize}
|{\bf SB}|=2^{|M|} 
\end{equation}

\begin{multline} \label{eq:probsb}
Pr({\bf EC}(S) \cap {\bf ES}(\varphi))=\sum_{S' \in {\bf EC}(S) \cap {\bf ES}(\varphi)} Pr(\{S'\})=\sum_{S' \in {\bf EC}(S) \cap {\bf ES}(\varphi)}(1/2)^{|V(H) \setminus Fix(S)|}=\\
(1/2)^{|V(H) \setminus Fix(S)|}*|{\bf EC}(S) \cap {\bf ES}(\varphi)|=(1/2)^{|V(H) \setminus Fix(S)|}*|{\bf SB}|=\\
(1/2)^{|V(H) \setminus Fix(S)|}*2^{|M|}
\end{multline}

where the reasoning is analogous to \eqref{eq:probsa} with \eqref{eq:sb} is used instead \eqref{eq:sa} and \eqref{eq:sbsize}
is used instead of \eqref{eq:sasize}.

Now, with \eqref{eq:probsa} and \eqref{eq:probsb} in mind we obtain the following.
\begin{multline}
Pr({\bf ES}(\varphi)|{\bf EC}(S))=Pr({\bf EC}(S) \cap {\bf ES}(\varphi))/Pr({\bf EC}(S))=\\
((1/2)^{|V(H) \setminus Fix(S)|}*2^{|M|})/((1/2)^{|V(H) \setminus Fix(S)|}*3^{|M|})=(2/3)^{|M|}
\end{multline}
$\blacksquare$

{\bf Proof of Theorem \ref{theor:boundcomf}.}
Let $M'$ be a subset of $M$ consisting of at least $\eta*|M|$ pseudoedges respected by $S$.
Let $\varphi'$ be the sub-CNF of $\varphi$ consisting of the clauses $C^{1/2}_{i,j}$ of $\varphi$
corresponding to all the pseudoedges $\{t_i,t_j\} \in M'$.
Clearly $\varphi' \in {\bf CNF}(M')$. 

Let ${\bf S}$ be the set of all extensions of $S$ to $V(H) \setminus \bigcup M'$ that do not falsify 
any clause. Observe that
\begin{equation} \label{eq:simplext}
{\bf EC}(S)=\bigcup_{S^* \in {\bf S}} {\bf EC}(S^*)
\end{equation}

Indeed, assume that $S' \in {\bf EC}(S)$.
Then the projection $S^*$ of $S'$ to $V(H) \setminus \bigcup M'$ is an extension of 
$S$ to $V(H) \setminus \bigcup M'$ that does not falsify any clauses of $\phi_H$, because
$S'$ does not falsify any. 
Therefore, $S^* \in {\bf S}$. Clearly, $S' \in {\bf EC}(S^*) \subseteq 
\bigcup_{S^* \in {\bf S}} {\bf EC}(S^*)$

For the opposite direction, let $S' \in \bigcup_{S^* \in {\bf S}} {\bf EC}(S^*)$.
Then there is $S^* \in {\bf S}$ such that $S' \in {\bf EC}(S^*)$. As $S \subseteq S^*$,
clearly, ${\bf EC}(S^*) \subseteq {\bf ES}(S)$. Consequently, $S \in {\bf EC}(S)$.

Observe that each $S^* \in {\bf S}$ respects each pseudoedge $\{t_i,t_j\}$ of $M'$.
Indeed, $S$ respects $\{t_i,t_j\}$ by assumption and, since $S \subseteq S^*$,
all literal occurrences witnessing that $S$ respects $\{t_i,t_j\}$ are retained in $S^*$. Since $S^*$ does not falsify 
any clause of $\phi_H$, $S^*$ is $1$-comfortable w.r.t. $M'$. Consequently, by Lemma \ref{lem:almcomf},

\begin{equation} \label{eq:probext}
\forall S^* \in {\bf S}, Pr({\bf ES}(\varphi')|{\bf EC}(S^*))=(2/3)^{|M'|}
\end{equation}

Now we obtain the following.
\begin{multline} \label{eq:finext}
Pr({\bf ES}(\varphi) \cap {\bf EC}(S)) \leq Pr({\bf ES}(\varphi') \cap {\bf EC}(S))=Pr({\bf ES}(\varphi') \cap \bigcup_{S^* \in {\bf S}} {\bf EC}(S^*))=\\
Pr(\bigcup_{S^* \in {\bf S}} ({\bf ES}(\varphi') \cap {\bf EC}(S^*)))=\sum_{S^* \in {\bf S}} Pr({\bf ES}(\varphi') \cap {\bf EC}(S^*))=\\
\sum_{S^* \in {\bf S}} (Pr({\bf ES}(\varphi')|{\bf EC}(S^*))*Pr({\bf EC}(S^*))=\sum_{S^* \in {\bf S}} ((2/3)^{|M'|}*Pr({\bf EC}(S^*))=
(2/3)^{|M'|}*Pr(\bigcup_{S^* \in {\bf S}} {\bf EC}(S^*))=\\(2/3)^{|M'|}*Pr({\bf EC}(S)) \leq (2/3)^{\eta*|M|}*Pr({\bf EC}(S))
\end{multline}

where the first inequality is correct because, due to $\varphi'$ being a sub-CNF of $\varphi$, any satisfying assignment of $\varphi$
is also a satisfying assignment of $\varphi'$. The first equality follows from \eqref{eq:simplext}. 
The second equality is a standard set-theoretic transformation.
The third equality follows from Lemma \ref{lem:disjcont} applied to the elements of ${\bf S}$. 
The fourth equality follows from definition of conditional probability. The fifth equality follows from Lemma \ref{lem:almcomf}.
The sixth equality is a result of moving $(2/3)^{|M'|}$ outside the brackets and the replacement of $\sum_{S^* \in {\bf S}} Pr({\bf EC}(S^*))$
by $Pr(\bigcup_{S^* \in {\bf S}} {\bf EC}(S^*))$ allowed by Lemma \ref{lem:disjcont}. Finally, the second inequality follows from definition 
of $|M'|$.

Thus \eqref{eq:finext} derives that $Pr({\bf ES}(\varphi) \cap {\bf EC}(S)) \leq (2/3)^{\eta*|M|}*Pr({\bf EC}(S))$.
The theorem immediately follows from dividing both parts of this inequality by $Pr({\bf EC}(S))$.
$\blacksquare$

\subsection{Proof of Theorem \ref{theor:indi}}\label{subsec:indi}

Recall that Theorem \ref{theor:indi} consists of two statements.
The first one states that $Pr(X_{i,j}) \geq 3/128*m^{-4/4.9}$, 
for every pseudoedge $\{t_i,t_j\} \in H$. The second statement claims
that, for every pseudomatching $M$ of $H$, the variables $X_{i,j}$ for
pseudoedges $\{t_i,t_j\} \in M$ are mutually independent.

For a pseudoedge $\{t_i,t_j\}$, denote by $S_{i,j}$ the set of literals
consisting of negative literals of all the non-root variables of 
$C_{i,j}$ and positive literals for all the siblings of the internal variables
of $C_{i,j}$.

It follows from the definition of $X_{i,j}$ that
\begin{observation} \label{obs:ij}
$Pr(X_{i,j}=1)=Pr({\bf EC}(S_{i,j})$.
\end{observation}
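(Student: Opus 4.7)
The plan is to observe that this is essentially an unwinding of definitions, so the proof should be a direct set-theoretic equality between the events $\{X_{i,j} = 1\}$ and ${\bf EC}(S_{i,j})$; once that is established, equality of probabilities is immediate.

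First I would unpack what it means for $X_{i,j}(S) = 1$. By the definition of the indicator variable, this is equivalent to $S$ respecting the pseudoedge $\{t_i, t_j\}$. Next I would apply Definition \ref{def:resp}: $S$ respects $\{t_i, t_j\}$ precisely when (a) every non-root variable of $C_{i,j}$ occurs negatively in $S$, and (b) the sibling of every internal variable of $C_{i,j}$ occurs positively in $S$. By construction of $S_{i,j}$, these are exactly the literals comprising $S_{i,j}$, so $X_{i,j}(S) = 1$ if and only if $S_{i,j} \subseteq S$.

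Then I would note that the underlying sample space of the probability space is ${\bf SAT}(H)$, so any $S$ for which $X_{i,j}(S) = 1$ is a fortiori a satisfying assignment of $\phi_H$. Combining this with the previous step, the event $\{X_{i,j} = 1\}$ coincides with the set of $S \in {\bf SAT}(H)$ such that $S_{i,j} \subseteq S$, which is exactly ${\bf EC}(S_{i,j})$ by the definition of the containment event. Equality of the two events yields equality of their probabilities.

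Since the proof is purely a matter of chasing definitions, I do not anticipate any genuine obstacle; the only minor care required is to confirm that $S_{i,j}$ is indeed a well-defined set of literals (i.e., does not simultaneously demand a positive and a negative occurrence of the same variable). This holds because the non-root variables of $C_{i,j}$ lie on the path $P_{i,j}$ while the siblings of internal variables lie off that path, so the two sets of literals are over disjoint sets of variables.
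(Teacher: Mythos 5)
Your proof is correct and is essentially the same argument the paper has in mind: the paper simply asserts the observation "follows from the definition of $X_{i,j}$," and your unwinding — $X_{i,j}(S)=1$ iff $S$ respects $\{t_i,t_j\}$ iff $S_{i,j}\subseteq S$, so the two events coincide as subsets of ${\bf SAT}(H)$ — is exactly that chain of definitions. The extra remark verifying that $S_{i,j}$ is a consistent set of literals is a sensible sanity check and does not change the substance.
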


\begin{definition} [{\bf Guarded set of literals}] \label{def:guard}
A set $S$ of literals is \emph{guarded} from a clause $C_{i,j}$ of $\phi_H$
if one of the following three conditions holds.
\begin{enumerate}
\item $\leaf_{i,j}$ does not occur in $S$.
\item A non-leaf variable of $C_{i,j}$ occurs positively in $S$.
\item $\leaf_{i,j}$ occurs positively in $S$ and the rest of variables of $C_{i,j}$
occur negatively in $S$.
\end{enumerate}
\end{definition}

\begin{lemma} \label{lem:guardprob}
Let $S$ be a guarded set if literals.
Then $Pr({\bf EC}(S))=2^{-|S \setminus Fix(S)|}$.
\end{lemma}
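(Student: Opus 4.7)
The plan is to construct a map $\pi$ from the set of full extensions of $S$ to $V(H)$ onto ${\bf EC}(S)$, control its fibers using the guarded hypothesis, and extract the probability. Given an extension $S^*$ of $S$, I define $\pi(S^*)$ by flipping $\leaf_{i,j}$ from negative to positive for every clause $C_{i,j}$ that is falsified in $S^*$. Each leaf variable $\leaf_{i,j}$ is a leaf of $T_i$ and $T_j$ only, so it lies on a unique root-to-root path $P_{i,j}$ and appears in exactly one clause of $\phi_H$; hence the flips act on pairwise distinct variables, leave every already-satisfied clause satisfied, and turn each falsified $C_{i,j}$ into a clause satisfied by $\leaf_{i,j}$ alone with all other variables negative, so that $\leaf_{i,j} \in Fix(\pi(S^*))$.

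Next I verify that $\pi(S^*) \supseteq S$. A flip requires $\leaf_{i,j}$ to be negative in $S^*$, which would force $\leaf_{i,j}$ to be negative in $S$ whenever $\leaf_{i,j} \in Vars(S)$; but then guardedness for $C_{i,j}$ rules out conditions (1) and (3), forcing (2) to hold, which produces a non-leaf variable of $C_{i,j}$ positive in $S$ and hence in $S^*$, contradicting the assumption that $C_{i,j}$ is falsified. Therefore every flipped leaf lies outside $Vars(S)$ and $\pi(S^*) \in {\bf EC}(S)$.

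For the fiber count I aim to show $|\pi^{-1}(S')| = 2^{|Fix(S') \setminus Fix(S)|}$ for each $S' \in {\bf EC}(S)$. A preimage $S^*$ of $S'$ is obtained by flipping a subset of leaf literals in $S'$ from positive to negative subject to (i) each flipped leaf must produce a falsified clause, which forces it to lie in $Fix(S')$, and (ii) the result must still extend $S$, which forbids flips of any leaf in $Vars(S)$. The guarded hypothesis closes the gap between these two constraints: if $\leaf_{i,j} \in Fix(S') \setminus Fix(S)$ were positive in $S$, then conditions (1) and (3) would fail for $S$ and $C_{i,j}$, so (2) would supply a non-leaf variable of $C_{i,j}$ positive in $S$, contradicting the fact that $\leaf_{i,j} \in Fix(S')$ together with $S \subseteq S'$ forces every non-leaf variable of $C_{i,j}$ to be negative in $S$; and $\leaf_{i,j}$ negative in $S$ is incompatible with $\leaf_{i,j}$ positive in $S'$. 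Hence the flippable leaves for $S'$ are exactly $Fix(S') \setminus Fix(S)$, each independently togglable.

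Summing over fibers gives $\sum_{S' \in {\bf EC}(S)} 2^{|Fix(S') \setminus Fix(S)|} = 2^{|V(H) \setminus Vars(S)|}$, the total number of extensions of $S$. Combining this with the factoring $(1/2)^{|V(H) \setminus Fix(S')|} = (1/2)^{|V(H) \setminus Fix(S)|} \cdot 2^{|Fix(S') \setminus Fix(S)|}$ and the identities $|Vars(S)| = |S|$ and $|S \setminus Fix(S)| = |S| - |Fix(S)|$ yields $Pr({\bf EC}(S)) = 2^{-|S \setminus Fix(S)|}$. The main obstacle I anticipate is the fiber computation: each of the three clauses of the definition of ``guarded'' must be used precisely so that every leaf in $Fix(S') \setminus Fix(S)$ lies outside $Vars(S)$, ensuring no forbidden flips collapse the fiber below the claimed size.
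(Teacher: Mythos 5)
Your argument is correct but takes a genuinely different route from the paper's. The paper proves Lemma~\ref{lem:guardprob} by induction on the number of variables not assigned by $S$, peeling off one variable per step: Lemma~\ref{lem:fixfacts} guarantees that extending $S$ by a single literal preserves guardedness and has a predictable effect on $Fix$, and the induction splits into the cases of a free non-leaf variable, a free leaf whose clause already has a positively occurring variable, and a forced leaf. You instead argue globally: the ``repair'' map $\pi$ that flips $\leaf_{i,j}$ to positive for every clause $C_{i,j}$ falsified in a full extension $S^*$ sends all $2^{|V(H)\setminus Vars(S)|}$ extensions of $S$ onto ${\bf EC}(S)$ (well-defined because each leaf occurs in a unique clause and because $\phi_H$ is monotone, so flips to positive never destroy satisfaction), and guardedness of $S$ enters exactly twice --- to show that no flipped leaf can lie in $Vars(S)$, so $\pi(S^*)$ still extends $S$, and to identify the flippable leaves above a fixed $S' \in {\bf EC}(S)$ as precisely $Fix(S') \setminus Fix(S)$, giving the fiber count $|\pi^{-1}(S')| = 2^{|Fix(S') \setminus Fix(S)|}$. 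Combined with the weight identity $(1/2)^{|V(H)\setminus Fix(S')|} = (1/2)^{|V(H)\setminus Fix(S)|} \cdot 2^{|Fix(S')\setminus Fix(S)|}$ --- which uses $Fix(S) \subseteq Fix(S')$, a fact that holds because fixing occurrences persist under extension and that you should state explicitly since you treat $|Fix(S')\setminus Fix(S)|$ as $|Fix(S')|-|Fix(S)|$ --- the sum over ${\bf EC}(S)$ collapses to $2^{-|S\setminus Fix(S)|}$. Your route avoids Lemma~\ref{lem:fixfacts} entirely, gives the weight $(1/2)^{|V(H)\setminus Fix(S')|}$ a transparent combinatorial meaning as a reciprocal fiber size, and as a pleasant by-product the special case $S = \emptyset$ immediately yields Proposition~\ref{prop:validsp}. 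The one point to tighten in the write-up is the assertion that the flips above a given $S'$ are ``each independently togglable'': this again rests on distinct leaves belonging to distinct clauses, so that whether $C_{i,j}$ is falsified is unaffected by the other flips; you establish that fact for well-definedness of $\pi$ but should invoke it again here.
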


The proof of Lemma \ref{lem:guardprob} appears in Section~\ref{sec:guard}.

We show that $Pr({\bf EC}(S_{i,j})$ can be represented as the sum of probabilities of ${\bf EC}$-events
for guarded
set of literals. Denote the set $\{S_{i,j} \cup \{t_i,t_j\}, S_{i,j} \cup 
\{\neg t_i,t_j\}, S_{i,j} \cup \{t_i,\neg t_j\}\}$
by ${\bf POSX}_{i,j}$. 

\begin{lemma} \label{lem:pos}
\begin{enumerate}
\item Each $S \in {\bf POSX}_{i,j}$ is guarded with $Fix(S)=\emptyset$.
\item $Pr(X_{i,j}=1)=Pr(\cup_{S \in {\bf POSX}_{i,j}} {\bf EC}(S))=
\sum_{S \in {\bf POSX}_{i,j}} Pr({\bf EC}(S))$.
\end{enumerate}
\end{lemma}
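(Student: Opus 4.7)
The plan is to treat the two parts separately. Part 1 reduces to a case analysis using the geometry of extended trees; Part 2 is essentially an inclusion-exclusion on the assignments to $\{t_i,t_j\}$ needed to satisfy $C_{i,j}$, combined with Lemma \ref{lem:disjcont} to convert a union into a sum.

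For Part 1, fix $S \in {\bf POSX}_{i,j}$. To show $S$ is guarded from every clause $C_{i',j'}$ of $\phi_H$, I would split according to whether $\{i',j'\}=\{i,j\}$ or not. For $C_{i,j}$ itself, at least one of $t_i,t_j$ occurs positively in $S$ by construction, and $t_i,t_j$ are non-leaf variables of $C_{i,j}$, so condition 2 of Definition \ref{def:guard} applies. For any other clause $C_{i',j'}$, I would argue that $\leaf_{i',j'}$ does not occur in $S$, so condition 1 applies. Indeed, if $\{i',j'\}\cap\{i,j\}=\emptyset$ then $\leaf_{i',j'}$ lies outside $T_i\cup T_j$, which contains every variable appearing in $S$. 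Otherwise, up to symmetry, $i'=i$ and $j'\neq j$, so $\leaf_{i,j'}$ is a leaf of $T_i$ distinct from $\leaf_{i,j}$. Since $\leaf_{i,j}$ is the only leaf on the path $P_{i,j}$, the vertex $\leaf_{i,j'}$ is not a non-root variable of $C_{i,j}$; and since $T_i$ is an extended tree, $\leaf_{i,j'}$ has no sibling in $T_i$, so it cannot be the sibling of any internal variable of $C_{i,j}$ either. Hence $\leaf_{i,j'}\notin S$. For $Fix(S)=\emptyset$, observe that the only leaf variable appearing in $S$ is $\leaf_{i,j}$ (by the same case analysis), and it appears \emph{negatively} in $S_{i,j}\subseteq S$, which disqualifies it from membership in $Fix(S)$ by Definition \ref{def:fixset}.

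For Part 2, by Observation \ref{obs:ij} it suffices to show
\[
{\bf EC}(S_{i,j})=\bigcup_{S\in{\bf POSX}_{i,j}}{\bf EC}(S)
\]
and then apply Lemma \ref{lem:disjcont}. The inclusion $\supseteq$ is immediate because each $S\in{\bf POSX}_{i,j}$ contains $S_{i,j}$. For $\subseteq$, let $S'\in{\bf EC}(S_{i,j})$, so $S'$ is a satisfying assignment of $\phi_H$ extending $S_{i,j}$; since $S_{i,j}$ assigns every non-root variable of $C_{i,j}$ negatively, the only way for $S'$ to satisfy $C_{i,j}$ is that at least one of $t_i,t_j$ is in $S'$ positively, giving exactly the three patterns $\{t_i,t_j\},\{\neg t_i,t_j\},\{t_i,\neg t_j\}$ and hence $S'\in{\bf EC}(S)$ for a unique $S\in{\bf POSX}_{i,j}$. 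Finally, the three sets ${\bf EC}(S)$ with $S\in{\bf POSX}_{i,j}$ share the common variable set $V(S_{i,j})\cup\{t_i,t_j\}$ yet differ pairwise on $\{t_i,t_j\}$, so Lemma \ref{lem:disjcont} makes them pairwise disjoint, allowing the probability of the union to be rewritten as the sum of probabilities.

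The only subtle point, which I expect to be the main obstacle, is the verification in Part 1 that no leaf other than $\leaf_{i,j}$ can sneak into $S$; this is where the extended-tree hypothesis on ${\bf T}(H)$ is used crucially through the fact that leaves have no siblings. Everything else is a direct unwinding of definitions.
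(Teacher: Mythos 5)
Your proof is correct and follows essentially the same path as the paper's: condition 2 of Definition \ref{def:guard} for $C_{i,j}$, condition 1 for all other clauses, the observation that $\leaf_{i,j}$ is the only leaf in $S$ and occurs negatively to get $Fix(S)=\emptyset$, and then Observation \ref{obs:ij} plus Lemma \ref{lem:disjcont} for Part 2 via ${\bf EC}(S_{i,j})=\bigcup_{S\in{\bf POSX}_{i,j}}{\bf EC}(S)$. The only difference is that you spell out the case analysis (using both the extended-tree property and condition 3 of Definition \ref{def:btb}) justifying that no leaf other than $\leaf_{i,j}$ appears in $S$, which the paper simply asserts as evident.
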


{\bf Proof.}
$S$ is guarded from $C_{i,j}$ because
at least one of $t_i,t_j$ occurs positively and hence the second condition
of Definition \ref{def:guard} is satisfied. No leaf variable except $\leaf_{i,j}$
occurs in $S$ hence, $S$ is guarded from the rest of the clauses according to the
first condition of Definition \ref{def:guard}. As $S$ does not contain positive literals
of leaf variables, it does not contain fixed literals.

The second equality of the second statement immediately follows
from Lemma \ref{lem:disjcont}. In light of Observation \ref{obs:ij},
for the first equality it is enough to show that 
${\bf EC}(S_{i,j})=\cup_{S \in {\bf POSX}_{i,j}} {\bf EC}(S)$.
Indeed, let $S' \in {\bf EC}(S_{i,j})$. As $S'$ is a satisfying assignment
of $\phi_H$, at least one variable of $C_{i,j}$ must occur positively in $S'$.
As all the non-root variables of $C_{i,j}$ occur negatively in $S_{i,j}$ and hence in $S'$,
at least one root variable of $C_{i,j}$ (either $t_i$ or $t_j$) must occur positively in $S'$.  
Therefore, there is $S \in {\bf POSX}_{i,j}$ such that $S \subseteq S'$.
Conversely, if $S' \in \cup_{S \in {\bf POSX}_{i,j}} {\bf EC}(S)$, then $S' \in {\bf EC}(S)$
for some $S \in {\bf POSX}_{i,j}$ and therefore $S \subseteq S'$. Now as $S_{i,j} \subseteq S$, we conclude that $S_{i,j} \subseteq S'$.
$\blacksquare$

The combination of Lemma \ref{lem:guardprob} and Lemma \ref{lem:pos}
allow us to compute $Pr(X_{i,j}=1)$ in terms of $|S_{i,j}|$. It remains to impose an upper bound
on $|S_{i,j}|$.

\begin{lemma} \label{lem:witbound}
For each pseudoedge $\{t_i,t_j\}$ of $H$, $|S_{i,j}| \leq 4*\log m/4.9+5$.
\end{lemma}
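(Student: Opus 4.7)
The plan is to count the two contributions to $S_{i,j}$ separately, bounding each using the small height property of the pseudoexpander together with the (near-)binary structure of the extended trees that make up $H$ (as is evident from the construction in the proof of Theorem \ref{peinfprel} and is implicitly used in Lemma \ref{lem:allpos}, where the lowest common ancestor is asserted to have two children).

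First, I would bound $|V(P_{i,j})|$. Since $P^{1/2}_{i \rightarrow j}$ is a root-to-leaf path in $T_i$ and $P^{1/2}_{j \rightarrow i}$ is a root-to-leaf path in $T_j$, each contains at most $\log m/4.9 + 3$ vertices by the small height property of $H$. The two halves share exactly the vertex $\leaf_{i,j}$, so $|V(P_{i,j})| \leq 2(\log m/4.9 + 3) - 1 = 2\log m/4.9 + 5$. Consequently the number of non-root variables of $C_{i,j}$ (removing $t_i$ and $t_j$) is at most $2\log m/4.9 + 3$, and the number of internal variables of $C_{i,j}$ (removing also $\leaf_{i,j}$) is at most $2\log m/4.9 + 2$. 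This immediately bounds the negative-literal part of $S_{i,j}$.

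It then remains to show that the sibling contribution adds at most one positive literal per internal variable, and that no literal is double-counted. Because each non-root vertex in $T_k \in {\bf T}(H)$ has at most one sibling (this is where near-binarity is needed), each internal variable of $C_{i,j}$ contributes at most one sibling. These siblings are pairwise distinct: distinct internal vertices on the same half of $P_{i,j}$ have distinct parents (they sit at distinct depths along a simple root-to-leaf path), and internal vertices on the two halves lie in different trees meeting only at $\leaf_{i,j}$, which is itself not internal. Furthermore, no sibling of an internal path vertex lies on $P_{i,j}$ itself, because a simple root-to-leaf path has exactly one vertex per depth and therefore cannot contain two siblings; hence the positive and negative literals in $S_{i,j}$ involve disjoint sets of variables, so $S_{i,j}$ is a valid set of literals with size at most $(2\log m/4.9 + 3) + (2\log m/4.9 + 2) = 4\log m/4.9 + 5$, as required. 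The only real obstacle is the distinctness bookkeeping together with the appeal to near-binarity of the trees; the height arithmetic itself is routine.
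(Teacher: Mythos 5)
Your proof is correct and takes essentially the same route as the paper: both bound $|P^{1/2}_{i\to j}|$ by the small height property, allocate at most one sibling per internal variable, and arrive at $4\log m/4.9 + 5$ by the same arithmetic (the paper counts $4(\log m/4.9+1)$ for the four groups of internal vertices and their siblings, plus $1$ for $\leaf_{i,j}$; you fold the leaf into the non-root count, which is equivalent). Your version is slightly more careful than the paper's, which simply asserts the sibling count "clearly" matches the internal-vertex count without verifying that siblings are pairwise distinct, disjoint from $V(P_{i,j})$, and unique per internal vertex (the last relying on the near-binarity of the trees, a property used only implicitly in the paper, e.g.\ in the proof of Lemma \ref{lem:allneg} via Lemma \ref{lem:allpos}).
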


{\bf Proof.}
For the purpose of this lemma, the height of a rooted tree $T$ is the largest
number of vertices in a root-leaf path of $T$. We denote by $\ell(T)$ the number
of leaves of $T$.

\begin{claim} \label{cl:balpath}
The height of a full balanced binary tree $T$ is at most $\log(\ell(T))+2$.
\end{claim}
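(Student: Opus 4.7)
The plan is to prove the bound by strong induction on $\ell(T)$. The base case $\ell(T)=1$ is a single vertex of height $1 \leq 2 = \log 1 + 2$. For the inductive step with $\ell(T) \geq 2$, since $T$ is full its root has two children rooting subtrees $T_1, T_2$, each itself a full balanced binary tree, with $\ell(T_1)+\ell(T_2)=\ell(T)$ and $|\ell(T_1)-\ell(T_2)| \leq 1$, so the larger subtree has exactly $\lceil \ell(T)/2 \rceil$ leaves. Since the height of $T$ is one more than the maximum of the heights of $T_1$ and $T_2$, the induction reduces the claim to controlling $\lceil \ell(T)/2 \rceil$.

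The naive approach, plugging $\mathrm{height}(T_k) \leq \log \lceil \ell(T)/2 \rceil + 2$ directly, is too lossy when $\ell(T)$ is odd: it only yields $\mathrm{height}(T) \leq \log(\ell(T)+1)+2$, which is strictly worse than what is claimed. I would therefore instead prove the slightly sharper integer-valued bound $\mathrm{height}(T) \leq \lceil \log \ell(T) \rceil + 1$, which trivially implies the claim because $\lceil \log \ell(T) \rceil \leq \log \ell(T) + 1$. The inductive step for this sharper bound succeeds thanks to the arithmetic identity
\[
\lceil \log \lceil \ell/2 \rceil \rceil \;\leq\; \lceil \log \ell \rceil - 1 \qquad (\ell \geq 2),
\]
which is established by writing $\lceil \log \ell \rceil = k$ (so $2^{k-1} < \ell \leq 2^k$) and observing that $\lceil \ell/2 \rceil \leq 2^{k-1}$.

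The main (very mild) obstacle is just this bit of ceiling bookkeeping: one has to be careful that halving with a ceiling does not secretly push $\lceil \log \cdot \rceil$ back up, which is exactly what the short case analysis above rules out. Once that identity is in hand, the induction is immediate and the claim follows.
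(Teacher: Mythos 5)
Your proof is correct and takes essentially the same route as the paper's: both strengthen the inductive hypothesis to $\mathrm{height}(T)\le\lceil\log\ell(T)\rceil+1$ and then observe that halving the leaves (with a ceiling) drops $\lceil\log\cdot\rceil$ by one, i.e.\ both subtrees have at most $2^{\lceil\log\ell(T)\rceil-1}$ leaves. Minor nit: your base-case check is written for the weak bound $\log 1 + 2$, but it should (and trivially does) verify the sharpened bound $\lceil\log 1\rceil+1=1$ you actually induct on.
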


{\bf Proof.}
By definition of pseudoexpander, $|P^{1/2}_{i \rightarrow j}| \leq \log m/4.9+3$.
The variables of $S_{i,j}$ consist of internal variables of 
$P^{1/2}_{i \rightarrow j}$, the internal variables of 
$P^{1/2}_{j \rightarrow i}$, the siblings of these two, and $\ell_{i,j}$.
The number of internal vertices of  $P^{1/2}_{i \rightarrow j}$ 
is at most $\log m/4.9+1$. Clearly, the same upper bound holds for the number
of internal vertices of $P^{1/2}_{j \rightarrow i}$, for the number of siblings
of the internal vertices of $P^{1/2}_{i \rightarrow j}$, and the number of siblings
of the internal vertices of $P^{1/2}_{j \rightarrow i}$. Therefore, the total size
of $S_{i,j}$ is at most $4*(\log m/4.9+1)+1=4*\log m/4.9+5$.
$\blacksquare$

{\bf Proof of the first statement of Theorem \ref{theor:indi}.}
The lower bound on $Pr(X_{i,j}=1)$ is obtained by the following line of reasoning. 
\begin{multline}
Pr(X_{i,j}=1)=\sum_{S \in {\bf POSX}_{i,j}} Pr({\bf EC}(S))=
\sum_{S \in {\bf POSX}_{i,j}} (1/2)^{|S|}=\\3*(1/2)^{|S_{i,j}|+2} \geq
3*(1/2)^{4*\log m/4.9+7}=3/128*m^{-4/4.9}
\end{multline}
where the first equality follows from the second statement of Lemma \ref{lem:pos},
the second equality follows from the combination of the first statement of Lemma \ref{lem:pos}
and Lemma \ref{lem:guardprob}, the third equality follows from definition of ${\bf POSX}_{i,j}$,
the first inequality follows from Lemma \ref{lem:witbound}.
$\blacksquare$

In order to establish the second statement of Theorem \ref{theor:indi},
we represent $Pr(X_{i,j}=0)$ as the sum of probabilities 
of ${\bf EC}$-events of guarded sets of literals.
For this purpose, we define
the set ${\bf NEGX}_{i,j}$ consisting of all the set of literals
over the variables of $S_{i,j}$ plus $\{t_i,t_j\}$ 
such that each $S \in {\bf NEGX}_{i,j}$
does not falsify any clause and satisfies one of the following two conditions.
\begin{enumerate}
\item At least one non-root variable of $C_{i,j}$ is assigned positively.
\item At least one sibling of an internal variable of $C_{i,j}$ is assigned negatively.
\end{enumerate}

\begin{lemma} \label{lem:neg}
\begin{enumerate}
\item Each $S \in {\bf NEGX}_{i,j}$ is guarded.
\item $Pr(X_{i,j}=0)=Pr(\bigcup_{S \in {\bf NEGX}_{i,j}} {\bf EC}(S))$.
\end{enumerate}
\end{lemma}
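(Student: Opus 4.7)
The plan is to verify the two claims of the lemma in turn.

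For statement~1, I will check the three conditions of Definition~\ref{def:guard} for each clause of $\phi_H$ separately. The key observation is that, because the trees in ${\bf T}(H)$ are extended, no leaf has a sibling, and hence every sibling of an internal variable of $C_{i,j}$ is a non-leaf vertex. Consequently the only leaf variable in $Vars(S)$ for $S \in {\bf NEGX}_{i,j}$ is $\leaf_{i,j}$ itself, so for every clause $C_{i',j'} \neq C_{i,j}$ we have $\leaf_{i',j'} \notin Vars(S)$ and condition~1 of guardedness from $C_{i',j'}$ holds immediately. For $C_{i,j}$ itself, $Vars(S)$ contains every variable of $C_{i,j}$, and by assumption $S$ does not falsify $C_{i,j}$; hence at least one variable of $C_{i,j}$ is positive in $S$. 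If that positive variable is a non-leaf, condition~2 applies; otherwise the single positive variable of $C_{i,j}$ is $\leaf_{i,j}$ with all remaining variables of $C_{i,j}$ negative in $S$, giving condition~3.

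For statement~2, the plan is to prove the set equality $\{S' \in {\bf SAT}(H) : X_{i,j}(S')=0\} = \bigcup_{S \in {\bf NEGX}_{i,j}} {\bf EC}(S)$, from which the probability identity follows. For the forward inclusion, given a satisfying assignment $S'$ with $X_{i,j}(S')=0$, I would take $S$ to be the projection of $S'$ onto $V(S_{i,j}) \cup \{t_i,t_j\}$; any clause falsified by $S$ would also be falsified by its superset $S'$, so $S$ falsifies none, and since $S'$ violates one of the two requirements of respecting $\{t_i,t_j\}$ in Definition~\ref{def:resp}, the witnessing literal lies inside $V(S_{i,j})$ and so the corresponding condition (a) or (b) of ${\bf NEGX}_{i,j}$ is inherited by $S$. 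For the reverse inclusion, if $S' \in {\bf EC}(S)$ for some $S \in {\bf NEGX}_{i,j}$, then $S \subseteq S'$ and the witnessing positive non-root literal of $C_{i,j}$ (or negative sibling literal) transfers to $S'$, so $S'$ does not respect $\{t_i,t_j\}$, i.e., $X_{i,j}(S')=0$.

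I expect the only point requiring genuine care to be the guardedness argument for statement~1: the main obstacle is to recognise and use the extended-tree property at the right moment, since without it $V(S_{i,j})$ could in principle contain foreign leaf variables $\leaf_{i,j'}$ with $j' \neq j$, and guardedness from $C_{i,j'}$ would then demand an appreciably more intricate analysis of positive non-leaf occurrences. Statement~2 is a purely set-theoretic unwinding of the definitions of $X_{i,j}$, of respecting a pseudoedge, and of ${\bf NEGX}_{i,j}$, and should present no difficulty.
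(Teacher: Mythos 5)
Your proof is correct and follows essentially the same route as the paper: case analysis over the clauses of $\phi_H$ using the three conditions of Definition~\ref{def:guard} for statement~1, and the set equality $\{S'\in{\bf SAT}(H):X_{i,j}(S')=0\}=\bigcup_{S\in{\bf NEGX}_{i,j}}{\bf EC}(S)$ verified via projections for statement~2. The one place where you are more explicit than the paper is in justifying why $\leaf_{i,j}$ is the only leaf in $Vars(S)$ (the paper simply asserts it); your appeal to the extended-tree property is a valid and welcome clarification, though the paper's Definition of ``siblings in $H$'' already restricts siblings to internal vertices, so the same conclusion is also available directly from that definition.
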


{\bf Proof.}
$S \in {\bf NEGX}_{i,j}$ assigns all the variables of $C_{i,j}$ and does
not falsify $C_{i,j}$. This means that either one of non-leaf variables
of $C_{i,j}$ occur positively in $S$ (and in this case the second condition
of Definition \ref{def:guard} is satisfied) or, otherwise, the $\leaf_{i,j}$
occur positively in $C_{i,j}$ and the rest of variables occur negatively,
thus satisfying the third condition of Definition \ref{def:guard}.
Consequently, $S$ is guarded from $C_{i,j}$.
Except $\leaf_{i,j}$, no other leaf variable occurs in $S$, therefore
$S$ is guarded from the rest of clauses of $\phi_H$ according to the first
condition of Definition \ref{def:guard}, confirming the first statement.

For the second statement, it is enough to show that 
$\{S'| S' \in {\bf SAT}(H) \wedge X_{i,j}(S')=0\}=\bigcup_{S \in {\bf NEGX}_{i,j}} {\bf EC}(S)$.

Let $S' \in {\bf SAT}(H)$ such that
$X_{i,j}(S')=0$. By Observation \ref{obs:ij}, $S_{i,j} \nsubseteq S'$.
Then the projection $S''$ of $S'$ to the variables of $S_{i,j}$ plus $\{t_i,t_j\}$
does not contain $S_{i,j}$ either. Clearly, $S''$ satisfies one of the two
conditions for ${\bf NEGX}_{i,j}$ and, as $S''$ does not falsify any clause, 
$S'' \in {\bf NEGX}_{i,j}$. As $S' \in {\bf EC}(S'')$, we conclude that 
$S' \in \bigcup_{S \in {\bf NEGX}_{i,j}} {\bf EC}(S)$.

Conversely, let $S' \in \bigcup_{S \in {\bf NEGX}_{i,j}} {\bf EC}(S)$.
Then there is $S'' \in {\bf NEGX}_{i,j}$ such that $S' \in {\bf EC}(S'')$.
By the definition of ${\bf NEGX}_{i,j}$, all the variables of $S_{i,j}$ occur 
in $S''$ and at least one of them occurs oppositely to its occurrence in $S_{i,j}$.
Clearly, the same is preserved in $S'$. Hence, $S_{i,j} \nsubseteq S'$ and therefore
$X_{i,j}(S')=0$.
$\blacksquare$

We also need one more statement on guarded sets which will be proved in Subsection \ref{sec:guard}
\begin{lemma} \label{lem:guardtransf}
Let ${\bf S_1}, \dots, {\bf S_q}$
be guarded sets of literals over sets $V_1, \dots, V_q$
of variables such that $V_i \cap V_j=\emptyset$
for all $1 \leq i,j \leq q$.
Then $Pr(\bigcap_{i=1}^q \bigcup_{S \in {\bf S}_i} {\bf EC}(S))=
      \prod_{i=1}^q Pr(\bigcup_{S \in {\bf S}_i} {\bf EC}(S))$
\end{lemma}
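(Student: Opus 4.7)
The plan is to reduce the probability of the intersection of unions to a product by distributivity plus the explicit formula of Lemma~\ref{lem:guardprob}. The core of the argument will be to show that both guardedness and the fixed-literal set behave additively under disjoint unions: the disjointness of the $V_i$ propagates into a product structure on the probability space precisely because of how Definition~\ref{def:guard} is set up.

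First I would rewrite
\[
\bigcap_{i=1}^q \bigcup_{S \in {\bf S}_i} {\bf EC}(S) \;=\; \bigcup_{(S_1,\dots,S_q) \in \prod_i {\bf S}_i} {\bf EC}(S_1 \cup \cdots \cup S_q),
\]
which is legitimate because disjointness of the $V_i$ makes $S_1 \cup \cdots \cup S_q$ a consistent set of literals and ${\bf EC}(S_i) \cap {\bf EC}(S_j) = {\bf EC}(S_i \cup S_j)$. Assuming, as in the applications to ${\bf POSX}_{i,j}$ and ${\bf NEGX}_{i,j}$, that every $S \in {\bf S}_i$ has $Var(S) = V_i$, Lemma~\ref{lem:disjcont} turns the outer union into a disjoint union (distinct tuples yield distinct literal sets on the common support $V_1 \cup \cdots \cup V_q$) and also turns each $\bigcup_{S \in {\bf S}_i} {\bf EC}(S)$ into a disjoint union of ${\bf EC}(S)$'s.

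The heart of the proof is two combinatorial claims about $S := S_1 \cup \cdots \cup S_q$. \emph{Claim (A):} $S$ is guarded. For a clause $C_{a,b}$, if $\ell_{a,b} \notin S$ then condition~(1) of Definition~\ref{def:guard} holds; otherwise $\ell_{a,b}$ lies in a unique $S_{i^*}$ by disjointness of the $V_i$, and guardedness of $S_{i^*}$ yields condition~(2) (a positive non-leaf witness in $S_{i^*}$, which survives in $S$) or condition~(3), in which case all variables of $C_{a,b}$ sit in $V_{i^*}$ and no other $S_j$ can pollute the negative pattern. \emph{Claim (B):} $Fix(S) = \bigsqcup_i Fix(S_i)$. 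The containment $\supseteq$ and the pairwise disjointness of the $Fix(S_i)$ are immediate from disjointness of the $V_i$; for $\subseteq$, if $\ell_{a,b} \in Fix(S)$ then $\ell_{a,b}$ is positive in a unique $S_{i^*}$, condition~(2) for $S_{i^*}$ is killed by negativity of the rest of $C_{a,b}$ in $S$, so condition~(3) must hold for $S_{i^*}$ and hence $\ell_{a,b} \in Fix(S_{i^*})$.

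Combining (A) and (B) with $|S| = \sum_i |S_i|$ (disjointness of the $V_i$ again), Lemma~\ref{lem:guardprob} delivers $Pr({\bf EC}(S)) = 2^{-|S \setminus Fix(S)|} = \prod_i 2^{-|S_i \setminus Fix(S_i)|} = \prod_i Pr({\bf EC}(S_i))$. Summing over all tuples in $\prod_i {\bf S}_i$ and regrouping via Lemma~\ref{lem:disjcont} factorizes the sum into $\prod_i \sum_{S \in {\bf S}_i} Pr({\bf EC}(S)) = \prod_i Pr(\bigcup_{S \in {\bf S}_i} {\bf EC}(S))$, which is the required equality. The main obstacle will be Claim~(A) in the condition-(3) sub-case: it is essential that guardedness of $S_{i^*}$ via~(3) forces \emph{all} variables of $C_{a,b}$ into $V_{i^*}$, so that no other summand $S_j$ can inject a positive literal into $C_{a,b}$ and break the fixing pattern — once this observation is in place, the remainder is algebraic bookkeeping.
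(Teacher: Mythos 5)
Your proposal is correct and follows essentially the same route as the paper. The paper's proof of Lemma~\ref{lem:guardtransf} is the identical chain of equalities: distribute the intersection over the Cartesian product of tuples, identify $\bigcap_i {\bf EC}(S_i)$ with ${\bf EC}(\bigcup_i S_i)$, use Lemma~\ref{lem:disjcont} to convert the probability of the union into a sum, factor each $Pr({\bf EC}(\bigcup_i S_i))$ into $\prod_i Pr({\bf EC}(S_i))$, and regroup into a product of sums. The one organizational difference is that the paper factors out your Claims~(A) and~(B) together with the application of Lemma~\ref{lem:guardprob} as a separate auxiliary statement (Lemma~\ref{lem:guardcompose}), and your write-up inlines that argument; the content is the same, including the key observation about the condition-(3) sub-case, which the paper dispatches with the terse remark that conditions~(2) and~(3) are preserved in supersets. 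You also correctly surface the tacit hypothesis that every $S \in {\bf S}_i$ satisfies $Var(S)=V_i$ — the paper relies on this when invoking Lemma~\ref{lem:disjcont} (both for the per-${\bf S}_i$ unions and for the union over tuples) but does not state it explicitly in the lemma.
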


{\bf Proof of the second statement of Theorem \ref{theor:indi}.}
We need to show that for any $M' \subseteq M$
and for any set of numbers $a_{i,j} \in \{0,1\}$ for each $\{t_i,t_j\} \in M'$,
$Pr(\bigcap_{\{t_i,t_j\} \in M'} (X_{i,j}=a_{i,j}))=\prod_{\{t_i,t_j\} \in M'} Pr(X_{i,j}=a_{i,j})$.
Partition $M'$ into $M'_1$ and $M'_2$ such that $a_{i,j}=1$ for all $\{t_i,t_j\} \in M'_1$
and $a_{i,j}=0$ for all $\{t_i,t_j\} \in M'_2$. 
Then
\begin{multline}
Pr(\bigcap_{\{t_i,t_j\} \in M'} (X_{i,j}=a_{i,j}))=
Pr(\bigcap_{\{t_i,t_j\} \in M'_1} (X_{i,j}=1) \cap \bigcap_{\{t_i,t_j\} \in M'_2} (X_{i,j}=0))=\\
Pr(\bigcap_{\{t_i,t_j\} \in M'_1} \bigcup_{S \in {\bf POSX}_{i,j}} {\bf EC}(S) \cap
   \bigcap_{\{t_i,t_j\} \in M'_1} \bigcup_{S \in {\bf NEGX}_{i,j}} {\bf EC}(S))=\\
\prod_{\{t_i,t_j\} \in M'_1} Pr(\bigcup_{S \in {\bf POSX}_{i,j}} {\bf EC}(S))*
\prod_{\{t_i,t_j\} \in M'_2} Pr(\bigcup_{S \in {\bf NEGX}_{i,j}} {\bf EC}(S))=\\
\prod_{\{t_i,t_j\} \in M'_1} Pr(X_{i,j}=1)*
\prod_{\{t_i,t_j\} \in M'_2} Pr(X_{i,j}=0)=
\prod_{\{t_i,t_j\} \in M'} Pr(X_{i,j}=a_{i,j})
\end{multline}
where the second and the fourth equalities follow from the combination Lemma \ref{lem:pos}
and Lemma \ref{lem:neg},
the third equality follows from Lemma \ref{lem:guardtransf} together with Lemma \ref{lem:pos} and 
Lemma \ref{lem:neg}.
$\blacksquare$

\subsection{Proof of Lemmas for guarded sets} \label{sec:guard}
\begin{lemma} \label{lem:fixfacts}
Let $S$ be a guarded set of literals. Then the following statements hold.
\begin{enumerate}
\item Let $x$ be a literal of a non-leaf variable that does not occur in $S$.
Then $S \cup \{x\}$ is a guarded set if literals with $Fix(S \cup \{x\})=Fix(S)$.
\item Let $x$ be a literal of $\leaf_{i,j}$ such that $\leaf_{i,j}$ does not occur
in $S$ and at least one variable of $C_{i,j}$ occurs positively in $S$.

Then $S \cup \{x\}$ is a guarded set if literals with $Fix(S \cup \{x\})=Fix(S)$.
\item Assume that $\leaf_{i,j}$ does not occur in $S$ and the rest of variables of
$C_{i,j}$ occur negatively in $S$. Then $S \cup \{\leaf_{i,j}\}$ is 
a guarded set if literals with $Fix(S \cup \{\leaf_{i,j}\})=Fix(S) \cup \{\leaf_{i,j}\}$.
\end{enumerate}
\end{lemma}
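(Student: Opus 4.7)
The plan is to prove each of the three statements by a short case analysis on which of the three guardedness conditions of Definition \ref{def:guard} holds for $S$ from each clause $C_{k,\ell}$, exploiting the structural fact that a leaf variable $\ell_{i,j}$ appears as a variable of a single clause, namely $C_{i,j}$, since each joint leaf of $H$ belongs to exactly two trees. A non-leaf variable may belong to several clauses but never plays the role of a leaf in any of them; this asymmetry is what drives the bookkeeping.

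For statement 1, I would verify guardedness of $S\cup\{x\}$ from each clause $C_{k,\ell}$ by splitting on the three conditions. Condition 1 depends only on leaf variables, so it is preserved since $x$ is non-leaf. Condition 2 is monotone under adding literals and therefore is also preserved. The crucial step is condition 3: if it holds for $S$ at $C_{k,\ell}$, every non-leaf variable of $C_{k,\ell}$ already occurs (negatively) in $S$, so the variable of $x$ cannot lie in $C_{k,\ell}$, and condition 3 survives into $S\cup\{x\}$. For the fixed-set equality, the inclusion $Fix(S)\subseteq Fix(S\cup\{x\})$ is immediate. For the converse, assuming $\ell_{k,\ell}\in Fix(S\cup\{x\})$, one first notes that $\ell_{k,\ell}\in S$ because $x$ is non-leaf, and then shows that the variable of $x$ cannot lie in $C_{k,\ell}$: otherwise none of the three guardedness conditions could hold for $S$ at $C_{k,\ell}$ (condition 1 fails since $\ell_{k,\ell}\in S$; condition 2 fails since every non-leaf variable of $C_{k,\ell}$ present in $S$ is negative by hypothesis on $Fix(S\cup\{x\})$; condition 3 fails since the variable of $x$ is absent from $S$), contradicting the guardedness of $S$. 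Hence $Fix(S\cup\{x\})\subseteq Fix(S)$.

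For statements 2 and 3 the work localizes to $C_{i,j}$, since $\ell_{i,j}$ is a variable only of that clause. In statement 2, the hypothesis supplying a positive occurrence of some non-leaf variable of $C_{i,j}$ in $S$ certifies condition 2 for $S\cup\{x\}$ at $C_{i,j}$ regardless of the sign of $x$; the same positive witness prevents $\ell_{i,j}$ from entering $Fix(S\cup\{x\})$, and since no other clause involves $\ell_{i,j}$ the fixed set is unchanged. In statement 3, the hypotheses are precisely the assertion that $S\cup\{\ell_{i,j}\}$ satisfies condition 3 from $C_{i,j}$, which simultaneously yields guardedness from $C_{i,j}$ and makes $\ell_{i,j}$ a new fixed literal, while all other clauses and all other fixed leaves are untouched.

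I do not anticipate any serious obstacle. The only delicate point is the condition-3 case inside statement 1, where one must exclude that the variable of $x$ is a non-leaf variable of $C_{k,\ell}$; the resolution, used in both directions of the fixed-set identity, is the same observation that every non-leaf variable of $C_{k,\ell}$ is already present in $S$ whenever condition 3 guards $S$ at that clause, whereas $x$ is by hypothesis a literal of a variable absent from $S$.
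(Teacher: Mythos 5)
Your proof is correct and follows essentially the same approach as the paper: a case analysis on the three guardedness conditions per clause, driven by the observation that $\ell_{i,j}$ is a variable of $C_{i,j}$ only (and that adding a literal of a variable outside a clause $C_{k,\ell}$ cannot affect guardedness at $C_{k,\ell}$ or the $Fix$-status of $\ell_{k,\ell}$). The only spot worth spelling out is the implicit step in statement~2 that the positive variable of $C_{i,j}$ supplied by the hypothesis is necessarily non-leaf: this holds because $\ell_{i,j}$ is the unique leaf variable of $C_{i,j}$ and by hypothesis $\ell_{i,j}\notin S$.
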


{\bf Proof.}
Suppose that $x \notin S$ is a literal of a variable $y$ 
such that $S \cup \{x\}$ is not guarded while $S$ is guarded. 
Hence, there is a clause $C_{i,j}$ such that
$S$ is guarded from $C_{i,j}$ while $S \cup \{x\}$ is not guarded
from $C_{i,j}$. Clearly, this means that $y$ occurs in $C_{i,j}$,
and hence not all variables of $C_{i,j}$ occur in $S$
Consequently, $S$ does not satisfy the third condition of Definition \ref{def:guard} regarding
$C_{i,j}$ because this condition clearly requires that all the variables of $C_{i,j}$ occur in $S$.
Also, note that if $S$ satisfies the second condition of Definition \ref{def:guard} 
regarding $C_{i,j}$, this condition cannot be violated by introduction of a new literal.
Consequently, $S$ can satisfy the first condition only, that is some non-leaf
variables of $C_{i,j}$ occur in $S$ and all their occurrences are \emph{negative}.
The first condition clearly cannot be violated if $y$ is a non-leaf variable.
Thus $y=\leaf_{i,j}$. 

The reasoning in the previous paragraph immediately implies that $S \cup \{x\}$,
as in the first statement, is guarded because $x$ is a literal of an internal variable
and that $S \cup \{x\}$, as in the second statement, is guarded because  the second 
condition is satisfied regarding $C_{i,j}$. For the third statement, note 
since \emph{all} the non-leaf variables of $C_{i,j}$ occur negatively in $S$,
$S \cup \{\leaf_{i,j}\}$ satisfies the third condition of Definition \ref{def:guard}
regarding $C_{i,j}$. Therefore, $S \cup \{\leaf_{i,j}\}$ remains guarded. 

Now, let us prove the equations concerning the fixed sets.
Observe that in all the considered cases $Fix(S) \subseteq Fix(S \cup \{x\})$.
Indeed, suppose $\leaf_{i,j} \in Fix(S)$. This means that $\leaf_{i,j} \in S$
and the rest of variables of $C_{i,j}$ occur negatively in $S$. Clearly,
these occurrences are preserved in any superset of $S$.

Now, suppose that $\leaf_{i,j} \in Fix(S \cup \{x\}) \setminus Fix(S)$, and hence $x=\leaf_{i,j}$. 

Then $\leaf_{i,j} \in (S \cup \{x\}) \setminus S$. Indeed, if $\leaf_{i,j} \in S$
then to keep $S$ guarded from $C_{i,j}$, the second condition of Definition
\ref{def:guard} must be satisfied (the third condition cannot be satisfied because
of the assumption that $\leaf_{i,j} \notin Fix(S)$). But then one of non-leaf 
variables of $C_{i,j}$ occurs in $S$ (and hence also in $S \cup \{x\}$) positively
blocking the possibility of $\leaf_{i,j}$ being fixed. Moreover, all the internal
variables of $C_{i,j}$ occur in $S$ negatively. This immediately implies that
$Fix(S \cup \{x\}) \subseteq Fix(S)$ in the first case (because $x$ is not a leaf variable)
and in the second case (because $S$ contains a positive occurrence of a non-leaf variable
of $C_{i,j}$).

For the third case, the above reasoning implies
that $Fix(S \cup \{x\}) \subseteq Fix(S \cup \{\leaf_{i,j}\})$ and 
since by definition, $\leaf_{i,j} \in Fix(S \cup \{x\})$,
the opposite containment follows as well.
$\blacksquare$ 

{\bf Proof of Lemma \ref{lem:guardprob}.}
The proof is by induction on the number of variables that \emph{do not} occur in $S$.
Assume first that all the variables are assigned.
Note that $S$ does not falsify any clause. Indeed, if clause $C_{i,j}$
is falsified then none of the conditions of Definition \ref{def:guard}
can be met. By definition of the probability space, 
$Pr({\bf EC}(S))=2^{-|V(H) \setminus Fix(S)|}=2^{-|S \setminus Fix(S)|}$.

Assume now that $S$ does not assign all the variables.
Suppose first that some non-leaf variable $y$ does not occur in
$S$. 

As in any $S' \in {\bf EC}(S)$ $y$ occurs either positively
or negatively, ${\bf EC}(S)={\bf EC}(S \cup \{y\}) \cup {\bf EC}(S \cup \{\neg y\})$.
In fact, by Lemma \ref{lem:disjcont}, the union is disjoint.
Also, note that for each $x \in \{y, \neg y\}$, 
the following is true. 

\begin{multline} \label{eq:minortransf}
|S \cup \{x\} \setminus Fix(S \cup \{x\})|=|S \cup \{x\}|-|Fix(S \cup \{x\})|=\\
|S+1|-|Fix(S)|=|S \setminus Fix(S)|+1
\end{multline}
where the first equality is correct because $Fix(S \cup \{x\}) \subseteq S \cup \{x\}$,
$|S \cup \{x\}|=|S|+1$ since $x$ does not occur in $S$, $Fix(S \cup \{x\})=Fix(S)$
according to the first statement of Lemma \ref{lem:fixfacts}, and the last equality
is correct because $Fix(S) \subseteq S$.

Therefore,
\begin{multline}
Pr({\bf EC}(S))=Pr({\bf EC}(S \cup \{y\}))+Pr({\bf EC}(S \cup \{\neg y\}))=\\
2^{-|(S \cup \{y\}) \setminus Fix(S \cup \{y\}|}+
2^{-|(S \cup \{\neg y\}) \setminus Fix(S \cup \{\neg y\}|}=
2^{-(|S \setminus Fix(S)|+1)}+2^{-(|S \setminus Fix(S)|+1)}=\\
2^{-|S \setminus Fix(S)|} 
\end{multline}
where the second equality follows because, according to the first statement of 
Lemma \ref{lem:fixfacts}, both $S \cup \{y\}$ and $S \cup \{\neg y\}$ are
guarded and they assigned more variables than $S$, therefore the statement 
of this lemma holds for them according to the induction assumption.
The third equality follows from \eqref{eq:minortransf}.

Suppose now that $S$ assigns all the non-leaf variables and let $\leaf_{i,j}$
be a variable not assigned by $S$. If at least one of variables of $C_{i,j}$
occurs positively in $S$, then the reasoning is analogous to the above with
the second statement of Lemma \ref{lem:fixfacts} used instead of the first one.

Otherwise, all the non-leaf variables of $C_{i,j}$ occur in $S$ negatively.
In this case $\leaf_{i,j}$ occurs positively in any element of ${\bf EC}(S)$
(for otherwise $C_{i,j}$ is falsified). That is, ${\bf EC}(S)={\bf EC}(S \cup \{\leaf_{i,j}\})$.
Therefore
\begin{multline}
Pr({\bf EC}(S))=Pr({\bf EC}(S \cup \{\leaf_{i,j}\}))=
(1/2)^{|(S \cup \{\leaf_{i,j}\}) \setminus Fix(S \cup \{\leaf_{i,j}\})|}=\\
(1/2)^{|(S \cup \{\leaf_{i,j}\})|-|Fix(S \cup \{\leaf_{i,j}\})|}=
(1/2)^{|S|+1-(|Fix(S)|+1)}=\\
(1/2)^{|S|-|Fix(S)|}=(1/2)^{|S \setminus Fix(S)|}
\end{multline}
where the second equality follows from the third statement of Lemma \ref{lem:fixfacts}
and the induction assumption, the third equality follows because 
$Fix(S \cup \{\leaf_{i,j}\}) \subseteq S \cup \{\leaf_{i,j}\}$.
For the fourth equality, note that $|(S \cup \{\leaf_{i,j}\})|=|S|+1$  
as $\leaf_{i,j} \notin S$ and that by the third statement of Lemma \ref{lem:fixfacts},
$Fix(S \cup \{\leaf_{i,j}\})=Fix(S) \cup \{\leaf_{i,j}\}$ and hence
$|Fix(S \cup \{\leaf_{i,j}\})|=|Fix(S)|+1$. The last equality follows 
because $Fix(S) \subseteq S$.
$\blacksquare$

\begin{lemma} \label{lem:guardcompose}
Let $S_1, \dots, S_q$ be guarded sets of literals over pairwise disjoint sets of variables.
Then $Pr({\bf EC}(\bigcup_{i=1}^q S_i))=\prod_{i=1}^q Pr({\bf EC}(S_i))$.
\end{lemma}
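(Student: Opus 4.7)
The strategy is to apply Lemma \ref{lem:guardprob} both to each $S_i$ and to their union $S := \bigcup_{i=1}^q S_i$. Since the variable sets $V_1,\dots,V_q$ are pairwise disjoint, $S$ is a well-defined set of literals with $|S|=\sum_i |S_i|$, and the sets $Fix(S_i)$ are automatically pairwise disjoint. If I can establish (a) $S$ is guarded and (b) $Fix(S) = \bigcup_i Fix(S_i)$, then
\[
Pr({\bf EC}(S)) = 2^{-|S\setminus Fix(S)|} = 2^{-\sum_i (|S_i|-|Fix(S_i)|)} = \prod_i 2^{-|S_i\setminus Fix(S_i)|} = \prod_i Pr({\bf EC}(S_i)),
\]
which is the desired identity.

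\textbf{Guardedness of $S$.} I check the three conditions of Definition \ref{def:guard} one clause $C_{i',j'}$ at a time. If $\leaf_{i',j'}\notin S$, condition 1 holds for $S$. Otherwise, disjointness of the $V_i$'s forces $\leaf_{i',j'}$ to lie in a unique $S_k$, positively. Because $S_k$ is guarded from $C_{i',j'}$ and condition 1 fails for $S_k$, either some non-leaf variable of $C_{i',j'}$ occurs positively in $S_k$ (so condition 2 lifts to $S$), or all non-leaf variables of $C_{i',j'}$ lie in $S_k$ and occur negatively there (so condition 3 lifts to $S$).

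\textbf{Equality of fixed sets.} The inclusion $\bigcup_i Fix(S_i) \subseteq Fix(S)$ is immediate: the literals witnessing fixedness of $\leaf_{i',j'}$ in some $S_i$ survive verbatim in the superset $S$. For the reverse inclusion, suppose $\leaf_{i',j'}\in Fix(S)$. Then $\leaf_{i',j'}$ occurs positively in a unique $S_k$, and every other variable of $C_{i',j'}$ occurs negatively in $S$. In particular, no non-leaf variable of $C_{i',j'}$ is positive anywhere in $S$, so condition 2 of Definition \ref{def:guard} fails for $S_k$; combined with the failure of condition 1 for $S_k$, guardedness of $S_k$ forces condition 3 to hold for $S_k$, which says exactly that $\leaf_{i',j'}\in Fix(S_k)$.

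\textbf{Main obstacle.} Guardedness of the union is routine; the delicate step is the reverse inclusion $Fix(S)\subseteq \bigcup_i Fix(S_i)$, since a priori the union might generate ``new'' fixed literals whose positive leaf sits in one $S_k$ while the required negative non-leaves are scattered across several other $S_l$'s. The argument above rules this out by using guardedness of $S_k$ itself to force all the supporting negative non-leaves into $S_k$, thereby placing $\leaf_{i',j'}$ into $Fix(S_k)$ as required.
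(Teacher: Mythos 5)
Your proof is correct and follows the same route as the paper: verify that $S=\bigcup_i S_i$ is guarded, establish $Fix(S)=\bigcup_i Fix(S_i)$ by eliminating conditions 1 and 2 for the unique $S_k$ containing $\leaf_{i',j'}$, and then apply Lemma~\ref{lem:guardprob} together with disjointness to split the exponent. One phrasing slip: disjointness of the $V_i$'s forces $\leaf_{i',j'}$ into a \emph{unique} $S_k$ but not a \emph{positive} occurrence there; your subsequent case split handles both signs anyway, so the argument is unaffected.
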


{\bf Proof.}
Let us show that $S=\bigcup_{i=1}^q S_i$ is guarded.
Indeed, consider a clause $C_{j,k}$. If $\leaf_{j,k}$
does not occur in $S$, then $S$ satisfies the first 
condition of Definition \ref{def:guard}.
Otherwise, there is $S_i$ such that $\leaf_{j,k}$ occurs in
$S_i$. As $S_i$ is guarded from $C_{j,k}$ either the second or
the third condition is satisfied for $S_i$ regarding $C_{j,k}$.
It is not hard to see that both these conditions are preserved in every
superset of $S_i$.

By Lemma \ref{lem:guardprob} and also taking into account that 
$Fix(S) \subseteq S$ and that $S_1, \dots, S_q$ are mutually disjoint,
we obtain the following.
\begin{multline} \label{eq:comp1}
Pr({\bf EC}(S))=(1/2)^{|S \setminus Fix(S)|}=(1/2)^{|S|-|Fix(S)|}=\\
(1/2)^{(\sum_{i=1}^q |S_i|)-|Fix(S)|}
\end{multline}

We next show that $Fix(S)=\bigcup_{i=1}^q Fix(S_i)$.
As each fixed element of $F_i$ remains fixed in a superset of $F_i$,
$\bigcup_{i=1}^q Fix(S_i) \subseteq Fix(S)$.

Conversely, suppose that there is $\leaf_{j,k} \in Fix(S) \setminus \bigcup_{i=1}^q Fix(S_i)$.
As $\leaf_{j,k} \in S$, there is $i$ such that $\leaf_{j,k} \in S_i$.
Since $\leaf_{j,k} \notin Fix(S_i)$, $S_i$ does not satisfy the third condition
of Definition \ref{def:guard} regarding $C_{j,k}$. 
To remain guarded from $C_{j,k}$,
$S_i$ must satisfy the second condition of Definition \ref{def:guard} regarding
$C_{j,k}$. 
That is, a non-leaf variable of $C_{j,k}$ occurs positively in $S_i$ and hence also in 
$S$, in contradiction to $\leaf_{j,k} \in Fix(S)$, thus confirming that 
$Fix(S) \subseteq \bigcup_{i=1}^q Fix(S_i)$.

Due to the pairwise disjointness of $S_1, \dots, S_q$, $Fix(S_1), \dots, Fix(S_q)$
are pairwise disjoint (as being subsets of $S_1, \dots, S_q$ respectively.)
Therefore, $|Fix(S)|=\sum_{i=1}^q |Fix(S_i)|$. Substituting this equality into the
right-most item of \eqref{eq:comp1}, we obtain

\begin{multline}
Pr({\bf EC}(S))=(1/2)^{(\sum_{i=1}^q |S_i|)-\sum_{i=1}^q |Fix(S_i)|}=
(1/2)^{\sum_{i=1}^q (|S_i|-|Fix(S_i)|)}=\\(1/2)^{\sum_{i=1}^q |S_i \setminus Fix(S_i)|}=
\prod_{i=1}^q (1/2)^{|S_i \setminus Fix(S_i)|}=\prod_{i=1}^q Pr({\bf EC}(S_i))
\end{multline}
where the second equality follows because $Fix(S_i) \subseteq S_i$ and the last equality
follows from Lemma \ref{lem:guardprob}.
$\blacksquare$

{\bf Proof of Lemma \ref{lem:guardtransf}.}
The lemma is proved through the following line of reasoning.
\begin{multline}
Pr(\bigcap_{i=1}^q \bigcup_{S_i \in {\bf S}_i} {\bf EC}(S_i))=
Pr(\bigcup_{(S_1, \dots, S_q) \in {\bf S}_1 \times \dots \times {\bf S}_q} \bigcap_{i=1}^q {\bf EC}(S_i))=\\
Pr(\bigcup_{(S_1, \dots, S_q) \in {\bf S}_1 \times \dots \times {\bf S}_q} {\bf EC}(\bigcup_{i=1}^q S_i))=
\sum_{(S_1, \dots, S_q) \in {\bf S}_1 \times \dots \times {\bf S}_q} Pr({\bf EC}(\bigcup_{i=1}^q S_i))=\\
\sum_{(S_1, \dots, S_q) \in {\bf S}_1 \times \dots \times {\bf S}_q} \prod_{i=1}^q Pr({\bf EC}(S_i))=
\prod_{i=1}^q \sum_{S_i \in {\bf S}_i} Pr({\bf EC}(S_i))=\\
\prod_{i=1}^q Pr(\bigcup_{S_i \in {\bf S}_i} {\bf EC}(S_i))
\end{multline}

For the second equality, we demonstrate that for any set $S_1, \dots, S_q$ of literals,
${\bf EC}(\bigcup_{i=1}^q S_i)=\bigcap_{i=1}^q {\bf EC}(S_i)$. 
If $S \in {\bf EC}(\bigcup_{i=1}^q S_i)$ then $S_i \subseteq S$ for $1 \leq i \leq q$,
thus $S \in {\bf EC}(S_i)$ for $1 \leq i \leq q$. 
Conversely,
if $S \in \bigcap_{i=1}^q {\bf EC}(S_i)$, then $S_i \subseteq S$ for all $1 \leq i \leq q$.
That is $\bigcup_{i=1}^q S_i \subseteq S$ and hence $S \in {\bf EC}(\bigcup_{i=1}^q S_i)$.
Note that since the elements of \emph{both} sets are subsets
of ${\bf SAT}(H)$, there is no need to explicitly mention that the $S$ being considered
is a satisfying assignment of $\phi_H$.  

The third equality follows from Lemma \ref{lem:disjcont} because for two distinct 
tuples $(S_1, \dots, S_q)$, $\bigcup_{i=1}^q S_i$ are distinct sets of literals over
the same set of variables. The fourth equality follows from Lemma \ref{lem:guardcompose}.
The last equality follows from Lemma \ref{lem:disjcont} applied individually to
each ${\bf S}_i$. 
$\blacksquare$

\appendix
\section{Simulation of a $d$-{\sc nbp} by a uniform $d$-{\sc nbp}}
Let $Z$ be a $d$-{\sc nbp} and let $v_1, \dots, v_n$ be its vertices listed in the topological order.
For a variable $x$ of $F(Z)$ and a node $v$ of $Z$, denote by $n_Z(v,x)$ the largest number of occurrences 
of variable $x$ on a path of $Z$ from the source to $v$.
We introduce a sequence of {\sc nbp}s $Z_1, \dots, Z_n$ defined as follows.
$Z_1=Z$. 

For $i>1$, $Z_i=Z_{i-1}$ if $v$ has only one in-coming edge.
Otherwise, $Z_{i-1}$ is transformed into $Z_i$ by the following process.
First, whenever there is an edge $(u,v)$ of $Z_{i-1}$ that is labelled subdivide
this edge by introducing a new vertex $w$ and replacing $(u,v)$ with edges
$(u,w)$ and $(w,v)$ with $(u,w)$ being labelled by the label of $(u,v)$ and
$(w,v)$ being unlabelled. This way we obtain an {\sc nbp} $Z'_i$ where $v$
has the same number of in-coming edges as in $Z_{i-1}$ however all in-coming
edges of $v$ are unlabelled.

The second stage of transformation ensures that 
for each incoming edge $(u,v)$ of $Z'_i$ and for each variable $x$,
the largest number of occurrences of $x$ on a path from the sink to $v$ that
goes through $(u,v)$ is $n_{Z_{i-1}}(v,x)$.
To this end, for each edge $(u,v)$ of $Z'_i$, let $x_1, \dots, x_q$ be the set of variables
such that for each $1 \leq i \leq q$, $n_{Z'_{i}}(u,x_i)<n_{Z_{i-1}}(v,x_i)$.
Let $m_1, \dots, m_q$ be such that $m_i=n_{Z_{i-1}}(v,x_i)-n_{Z'_i}(u,x_i)$ for $ 1 \leq i \leq q$.
Introduce new vertices $v_{1,1}, \dots, v_{1,m_1}, \dots v_{q,1}, \dots, v_{q,m_q}$.
Replace the edge $(u,v)$ by the following edges.
\begin{itemize}
\item Two edges from $u$ to $v_{1,1}$ one labelled with $x_1$, the other labelled with $\neg x_1$.
\item For $1<j \leq m_1$, two edges from $v_{1,j-1}$ to $v_{1,j}$, 
one labelled with $x_1$, the other labelled with $\neg x_1$.
\item For each $1<i \leq q$, two edges from $v_{i-1,m_{i-1}}$ to $v_{i,1}$ one labelled
with $x_i$, the other labelled with $\neg x_i$.
\item For each $1 <i \leq q$ ad each $1<j \leq m_i$, two edges from $v_{i,j-1}$ to $v_{i,j}$
one labelled with $x_i$, the other labelled with $\neg x_i$.
\item Two edges from $v_{q,m_q}$ to $v$ one labelled with $x_q$, the other labelled with $\neg x_q$.
\end{itemize}

Arguing by induction on $v_1, \dots, v_n$, the following statements can be made.
\begin{enumerate}
\item Each $Z_i$ is a $d$-{\sc nbp}.
\item For each $Z_i$ and each variable $x$, all the paths from the source to $v_i$
have the same number of occurrences of $x$ which is $n_{Z_{i-1}}(v_i,x)$.
\item All the $Z_i$ compute the same function.
\end{enumerate}

For $i=1$, all three statements follow from definition of $Z$ so we assume $i>1$.

Next, we are going to show that for each in-neighbour $u$ of $v$ in $Z'_i$ and for each variable
$x$, the number of occurrences of $x$ on each path of $Z'_i$ from the soruce to $u$ is exactly
$n_{Z'_i}(u,x_i)$. 

Indeed, assume first that $u$ is a vertex of $Z_{i-1}$ (that is, it has
not been introduced in $Z'_i$ as a result of subdivision). Then $u=v_j$ for some $j<i$.
Indeed, otherwise, $u$ is a new vertex that has been introduced during construction of some $Z_j$
for $j<i$. Then only vertex of $Z_j$ such a new vertex can be an in-neighbour at the stage of its introducing
is $v_j$. Moreover, such a vertex cannot have an out-going edge to $v_i$ introduced at a latter stage 
because one end of such edge must be a new vertex introduced at a latter stage (which is not $u$
that has been introduced during construction of $Z_j$ and not $v_i$ that exists in $Z_1$).
Thus, in this case, the desired statement is correct by the induction assumption.
The last statement requires further explanation. The induction assumption, as such,
states that the number of occurrences of $x$ on each path of $Z_j$ from the source to
$v_j$ is $n_{Z_j}(v_j,x)$. In particular, the induction assumption says nothing about the paths
from the surce to $v_j$ in $Z_{i'}$ for $i'>j$. 
However, the paths from the source to $v_j$ 
are not affected by the subsequence transformations of $Z_{j}$ into $Z_{j+1}$ and so on, hence
the induction assumption remains invariant and, moreover, $n_{Z_j}(v_j,x)=n_{Z_{i'}}(v_j,x)$
for all $i'>j$. As paths from the source to $u$ are not affected by a transformation
from $Z_{i-1}$ to $Z'_i$, $n_{Z_j}(v_j,x)=n_{Z'_i}(v_j,x)$ as required. 

Otherwise, if $u$ is not a vertex of $Z_{i-1}$, 
there is an in-neighbour $u'$ of $v_i$ such that $u$ is introduced as a result of
subdivision of an edge $e$ from $u'$ to $v_i$. By the previous paragraph $u'=v_j$ for some 
$j<i$. Then, by the induction assumption, the number of occurrences of $x$ on any path
of $Z'_i$ from the source to $u'$ is $n_{Z_{i-1}}(v_j,x)$ (the reasoning is as in the previous
paragraph). Any path of $Z'_i$ from the source to $u$ is a path from the source to $u'$ plus
the edge $(u',u)$. If the variable labelling $(u',u)$ is not $x$ then the number of occurrences
on any such path is $n_{Z_{i-1}}(v_j,x)$, otherwise it is $n_{Z_{i-1}}(v_j,x)+1$. In both cases,
the number of occurrences on any path of $Z'_i$ from the source to $u$ is invariant and hence
equals $n_{Z'_i}(u,x)$.

Note that $n_{Z'_i}(u,x) \leq n_{Z_{i-1}}(v_i,x)$ because it is just the largest number of 
occurrences of $x$ on a path of $Z_{i-1}$ from the source to $v_i$ going through
a particular edge ($(u,v)$ or $(u',v)$ depending on whether this edge is subdividied
by transformation from $Z_{i-1}$ to $Z'_i$). If $n_{Z'_i}(u,x)=n_{Z_{i-1}}(v_i)$ 
then $x$ does not occur on a path between $u$ and $v_i$ in $Z_i$. Therefore
the number of occurrences of $x$ on a path of $Z_i$ from the soruce to $v_i$ is
as in $Z'_i$ which is $n_{Z'_i}(u,x)=n_{Z_{i-1}}(v_i,x)$.
Otherwise, there is $1 \leq j \leq q$ such that $x=x_j$ and on each path
of $Z_i$ from $u$ to $v$ there are $m_j=n_{Z_{i-1}}(v_i,x)-n_{Z'_i}(u,x)$ occurrences of $x$.
That is the total number of occurrences on each path of $Z_i$ from the source to $v_i$ passing
through $u$ is $n_{Z'_i}(u,x)+m_j=n_{Z_{i-1}}(v_i,x)$. Since $u$ is selected as an arbitrary in-neighbour
of $v_i$ in $Z'_i$, there are $n_{Z_{i-1}}(v_i,x)$ occurrences of $x$ on any path from the source to $v_i$.
This confirms the second statement. 

For the first statement, it is sufficient to verify that on any source-sink path of $Z_i$ going through
$v_i$ each variable $x$ occurs at most $d$ times. Indeed, on each $v_i$-sink path $Q$ of 
$Z_{i-1}$ each variable $x$ occurs at most $d-n_{Z_{i-1}}(v_i,x)$ times. Otherwise, take
a path $P$ of $Z_{i-1}$ from the source to $v_i$ on which there are $n_{Z_{i-1}}(v_i,x)$
occurrences of $Z_i$ (such a path exists by definition of $n_{Z_{i-1}}(v_i,x)$) and let
$Q$ be a path of $Z_{i-1}$ from $v_i$ to the sink having more that $d-n_{Z_{i-1}}(v_i,x)$
occurrences of $x$. Then $P+Q$ is source-sink path of $Z_{i-1}$ with more than $d$ occurrences 
of $x$ in contradiction to the first statement holding regarding $Z_{i-1}$ by the induction
assumption. Since the paths from $v_i$ to sink are not affected by transformation from
$Z_{i-1}$ to $Z_i$, wee conclude that on each $v_i$-sink path $Q$ of 
$Z_{i}$, each variable $x$ occurs at most $d-n_{Z_{i-1}}(v_i,x)$ times.
Together with the second statement, this implies that on each source sink path of $Z_i$ going 
through $v_i$ each variable occurs at most $d$ times thus confirming the first statement.

Let us verify the third statement, in particular, let us show that $Z_{i-1}$ and $Z_i$ 
compute he same function. Let $S$ be a satisfying assignment of $F(Z_{i-1})$. This means
that $Z_{i-1}$ has a computational path $P$ with $A(P) \subseteq S$. If $P$ does not contain
$v_i$ then $P$ is a computational path of $Z_i$ and hence $S$ is a satisfying assignment
of $Z_i$. Otherwise, let $u$ be the vertex preceding $v_i$ in $P$. If $u$ is an in-neighbour
of $v_i$ in $Z_i$ then, again, $P$ is a computational path of $Z_i$. Otherwise, let 
$P_1$ be the prefix of $P$ ending at $u$ and let $P_2$ be the suffix of $P$ beginning at $v_i$.
By contruction, both $P_1$ and $P_2$ are paths of $Z_i$.
Then choose a path $P_0$ between $u$ and $v$ in $Z_i$ so that each variable labelling 
the path has the same occurrence as in $S$ (this possible to do due to the `paraller edges'
construction as described above). Then $P_1+P_0+P_2$ is a path of $Z_i$ such that
$A(P_1+P_0+P_2) \subseteq S$. 

Conversely, let $S$ be a satisfying assignment of $Z_i$ and let $P$ be a computational
path of $Z_i$ with $A(P) \subseteq S$. It is sufficient to consider the case
where $P$ is not a computational path of $Z_{i-1}$.
Then $P$ contains new vertices of $Z_i$ and hence contains $v_i$
(because any path from a new vertex to the sink goes through $v_i$).
Moreover, the immediate predecessor of $v_i$ in $P$ is a new vertex of $Z_i$.
Let $u$ be the last in-neighbour of $v_i$ in $Z_{i-1}$ preceding $v_i$ in $P$. 
Let $P'$ be obtained from $P$ by replacing the $u \longrightarrow v_i$ subpath
of $P$ with the $(u,v_i)$ edge of $Z_{i-1}$ carrying the same label as in $Z_{i-1}$.
Then $P'$ is a computational path 
By construction $A(P') \subseteq A(P)$ (in particular, the process of transformation of edge
$(u,v_i)$ into a path retains the label of $(u,v_i)$ on any resulting $u \longrightarrow v_i$
path and hence $A(P') \subseteq S$

It follows that $Z_n$ computes the same function as $Z$ and each variable $x$ occurs the
same number of times $n_x \leq k$ on each computational path of $Z_n$.
For those variables $x$ where $n_x<k$ add $k-n_x$ entries on each in-coming edge of the
sink using the same subdivision technique as was used for transformation from $Z_{i-1}$ to $Z_i$.
Then, by the analogous reasoning it can be verified that the resulting {\sc nbp} $Z^*$ becomes 
uniform $k$-{\sc nbp} computing the same function as $Z$.

Let us calculate the size of $Z^*$ in terms of the size of $Z$.
Note that for the purpose of this proof $|Z|$ is denoted by $n$.
The number of variables cannot be more than the number of edges of $Z$
therefore is bounded by $O(n^2)$. 
On each of the $n$ iterations, at most $n$ edges are transformed
(an in-degree of a node cannot be larger than $n$) and on each node
the number of copies added is bounded by the number of variables multiplied
by $k$ that is $O(n^2)*k$. 
Therefore, the size of $Z^*$ is $O(n^4*k)$. 
(Note that the size increase of the final transformation from $Z_n$ to $Z^*$ is dominated by this
complexity and so can be safely ignored).

\section{Proof of Theorem \ref{nbpfinal}} \label{sec:detconst}
\subsection{Constructively created CNFs} \label{sec:mainconst}
We will next define a set of graphs $G_m$.
Then we will define a set of {\sc cnf}s $\varphi(G_m)$ associated with graphs $G_m$.
The {\sc cnf}s $\varphi(G_m)$ is the class for which we will prove the
lower bound stated in the theorem.

Let $T^*_1, \dots, T^*_m$ be a set binary rooted trees with $\lfloor m^{1/4.9} \rfloor$ leaves each
and height at most $\log m/4.9+2$. Such trees can be created as follows.
Let $a \geq m^{1/4.9}$ be the nearest to $m^{1/4.9}$ integer which is a power of two.
Take a complete binary tree $T'$ with $a$ leaves. Then choose an arbitrary set $L$ of 
$\lfloor m^{1/4.9} \rfloor$ leaves and let each $T^*_i$ be obtained from $T'$
by taking the union of all root-leaf trees ending at $L$.
The height of $T'$ is $\log a+1$ and, as $a \leq 2*m^{1/4.9}$, the height
does not exceed $\log m/4.9+2$.

Then for each $1 \leq i \neq j \leq m$ and each pair of leaves $\ell_1$ of $T_i$
and $\ell_2$ of $T_j$, introduce a new vertex $v$ and connect it by edges to $\ell_1$
and $\ell_2$. We call these new vertices $v$ \emph{subdivision vertices}.
The graph $G_m$ consists of the union of $T^*_1, \dots T^*_m$ plus the subdivision
vertices together with the edges incident to them.

The variables of $\varphi(G_m)$ are the vertices of $G_m$.
The clauses correspond to the subdivision vertices.
Let $v$ be a subdivision vertex. Then the clause $C_v$ corresponding to it
is created as follows. Let $\ell_1$ and $\ell_2$ be two leaves of trees
$T^*_i$ and $T^*_j$ incident to $v$. Let $P_1$ and $P_2$ be respective root-leaf paths
of $T^*_i$ and $T^*_j$ ending with $\ell_1$ and $\ell_2$.
The literals of $C_v$ are $V(P_1) \cup \{v\} \cup V(P_2)$.

It is not hard to see that $\varphi(G_m)$ can be created by a deterministic procedure
taking polynomial time in the number of variables.

In the next subsection we formally prove that for a sufficiently large $m$,
it is possible to assign a subset of subdivision vertices with $true$ so that
the remaining {\sc cnf} is $\phi_H$ where $H$ is a \pex\ with $m$ roots.
Then, in Section B.3. we will show that the lower bound for $\phi_H$ implies
that lower bound for $\varphi(G_m)$.

\subsection{Extraction of {\sc cnf}s based on pseudoexpanders}
Let $F$ be a Boolean function and let $S$ be a partial assignment to a subset
of its variables. Then $F|_S$ is a function on variables not assigned by 
$S$ and $S'$ is a satisfying assignment of $F|_S$ if and only if $S \cup S'$
is a satisfying assignment of $F$.

If $\varphi$ is a monotone {\sc cnf} and $S$ consists of positive literals of
a subset of variables of $\varphi$ then $\varphi|_S$ is obtained by removal
of clauses where the variables of $S$ occur. In this section we are going to
prove the following lemma.

\begin{lemma} \label{lem:redpe}
Let $m$ be an integer and suppose that there exist a pseudoexpander $H$ with $m$
roots and max pseudodegree $m^{1/4.9}$. 
Then there is a set $S$ of positive literals of subdivision vertices of $\varphi(G_m)$
such that $\varphi(G_m)|_S$ is $\phi_{H'}$ where $H'$ is a subgraph of $G_m$ which is
a pseudoexpander with $m$ roots and with $U(H)$ isomorphic to $U(H')$. 
\end{lemma}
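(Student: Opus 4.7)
The strategy is to overlay the pseudoexpander structure of $H$ on top of $G_m$ by selecting a subset of subdivision vertices to serve as joint leaves, and assigning all the others to $true$ via $S$. Fix an arbitrary bijection identifying the \tverts\ of $H$, say $t_1,\dots,t_m$, with the trees $T^*_1,\dots,T^*_m$ of $G_m$, and for each $i$ fix an injection from the leaves of $T_i$ into the leaves of $T^*_i$ (possible because each $T^*_i$ has $\lfloor m^{1/4.9}\rfloor$ leaves while the pseudodegree of $t_i$ is at most $m^{1/4.9}$). Each pseudoedge $\{t_i,t_j\}$ of $H$ then singles out a pair of leaves $\ell_i\in T^*_i$, $\ell_j\in T^*_j$; I mark the unique subdivision vertex of $G_m$ incident to both. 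To guarantee that every leaf of $T^*_i$ eventually acquires a child in the new structure, I pair up any still-unused leaves across distinct trees by an arbitrary auxiliary matching and mark their subdivision vertices as well.

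Set $S$ to be the positive literals of all unmarked subdivision vertices, and define $H'$ as the subgraph of $G_m$ spanned by all tree vertices together with the marked subdivision vertices. The extended tree $T'_i$ of $H'$ is obtained from $T^*_i$ by attaching, at each leaf of $T^*_i$, the unique marked subdivision vertex incident to it. Since each leaf of $T^*_i$ acquires exactly one such child, $T'_i$ is an extended tree; each marked subdivision vertex is a leaf of exactly two trees $T'_i,T'_j$; any two distinct trees share at most this one vertex; and the height of $T'_i$ is at most $\log m/4.9+3$. Thus $H'\in\btb$ and satisfies the small height property of \pex s.

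To verify $\varphi(G_m)|_S=\phi_{H'}$, observe that the variables surviving the application of $S$ are exactly the vertices of all $T^*_i$ together with the marked subdivision vertices, which is precisely $V(H')$. The surviving clauses are $C_v$ for marked $v$; if such a $v$ lies between $\ell_i\in T^*_i$ and $\ell_j\in T^*_j$, then the literals of $C_v$ are the root-to-$\ell_i$ path of $T^*_i$, the vertex $v$, and the root-to-$\ell_j$ path of $T^*_j$, which coincides with the vertex set of $P_{i,j}$ in $T'_i\cup T'_j$ and hence with the literals of the pseudoedge clause $C_{i,j}$ of $\phi_{H'}$. Finally, $H'$ inherits the large pseudomatching property from $H$: the root bijection carries any pseudomatching of $H$ between two disjoint sets of \tverts\ of size at least $m^{0.999}$ to a pseudomatching of $H'$ of the same size between the corresponding sets, so $H'$ is a \pex.

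I expect the main obstacle to be the claim $U(H)\cong U(H')$ exactly, since the auxiliary matching step may introduce pseudoedges in $H'$ that are not present in $H$ whenever $H$ has a root of pseudodegree strictly below $\lfloor m^{1/4.9}\rfloor$. The cleanest way to handle this is to strengthen Theorem~\ref{peinfprel} so that it yields a \pex\ whose underlying graph is already $\lfloor m^{1/4.9}\rfloor$-regular, for instance by running the probabilistic argument of Lemma~\ref{lemma:AlmostPEExist} on the uniform random regular graph model rather than on the random graph with independent edges, or by pruning and augmenting the output to regularize it while preserving the expansion property used in Lemma~\ref{lem:PEreduction}. Under such a strengthening every leaf of $T^*_i$ is already assigned to a pseudoedge of $H$, no auxiliary matching is needed, and the root bijection lifts to the desired isomorphism $U(H)\cong U(H')$.
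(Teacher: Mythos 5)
Your proof follows the same initial idea as the paper — identify the roots of $H$ with the trees $T^*_1,\dots,T^*_m$, mark a subdivision vertex for each pseudoedge of $H$, and take $S$ to be the positive literals of the unmarked subdivision vertices — but you then introduce a step that the paper avoids, and this step creates the gap you yourself flag at the end.

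You require every leaf of $T^*_i$ to acquire a marked subdivision-vertex child, and to achieve this you pair the leftover leaves via an auxiliary matching. As you correctly observe, this introduces pseudoedges in $H'$ that do not exist in $H$, so $U(H')\not\cong U(H)$ whenever $H$ is not $\lfloor m^{1/4.9}\rfloor$-regular. Your proposed repair (regularizing Theorem~\ref{peinfprel}) changes the hypothesis of the lemma rather than proving it, and is not free: the probabilistic construction in Lemma~\ref{lemma:AlmostPEExist} is over $G(n,p)$, and porting it to the random regular graph model (or pruning and re-augmenting to force regularity) would require a new analysis to re-establish the pseudo-random bipartite property used in Lemma~\ref{lem:PEreduction}. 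There is also a more mundane difficulty with the auxiliary matching itself: the unmatched leaves sit in different trees with a potentially awkward multiplicity distribution (pseudodegrees are only bounded \emph{above} by $m^{1/4.9}$), and a perfect cross-tree matching of the leftovers need not exist — you would at least have to argue this separately.

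The missing idea is that there is no requirement for $H'$ to contain all of $T^*_i$. The paper instead \emph{prunes}: from $T^*_i$ it keeps only the union of root-to-marked-leaf paths, obtaining a smaller extended tree $T'_i$ with exactly as many leaves as $t_i$ has neighbours in $U(H)$, and then appends one meaningful subdivision vertex to each leaf of $T'_i$ to form $T''_i$. The unmarked leaves and the interior vertices lying only on paths to them simply drop out of $V(H')$, which is fine since $H'$ only needs to be a subgraph of $G_m$. With this pruning, every leaf of $T''_i$ corresponds to a pseudoedge of $H$, so the auxiliary matching is unnecessary and the root bijection lifts directly to an isomorphism $U(H')\cong U(H)$. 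The remainder of your argument — verifying the $\btb$ axioms, the height bound, the inheritance of the large pseudomatching property, and the identity of clause sets — is sound and matches the paper once this pruning step replaces your auxiliary matching.
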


{\bf Proof.}
Let ${\bf Roots}(H)=\{t_1, \dots, t_m\}$ and let $t'_1, \dots, t'_m$
be the respective roots of the trees $T^*_1, \dots T^*_m$ of $G_m$ (let 
us call them the \emph{roots} of $G_m$).
Recall that by $U(H)$ we denote the underlying graph of $H$.
For $1 \leq i \leq m$, denote by $I_i$ the set of $j$ such that
$t_i$ is adjacent to $t_j$ in $U(H)$.
Note that since the max-degree of $U(H)$ is at most $m^{1/4.9}$,
the number of leaves of $T^*_i$ is at least $|I_i|$.
For each $i$, fix an arbitrary subset of $|I_i|$ leaves of $T^*_i$
and mark them with with the indices of $I_i$, each index marking one
leaf.

Now consider a subdivision vertex $v$ of $G_m$ connecting a leaf $\ell_1$
of $T_i$ and a leaf $\ell_2$ of $T_j$. Let us call $v$ \emph{meaningful}
if $\ell_1$ is marked with $j$ and $T_j$ is marked with $i$.
Let $S$ be the set consisting of all non-meaningful subdivision vertices
and let $\varphi'=\varphi(G_m)|_S$.

For $1 \leq i \leq m$, let $T'_i$ be the rooted tree obtained from
$T^*_i$ by taking the union of all root-leaf paths of $T^*_i$ that
end with marked leaves. We prove the following statements about trees
$T'_i$.

\begin{enumerate}
\item Each leaf of each $T'_i$ is incident to exactly one meaningful vertex.
\item For each edge $\{t_i,t_j\}$ of $U(H)$, there is exactly one meaningful vertex adjacent
to both $T'_i$ and $T'_j$.
\item If $t_i$ and $t_j$ are not adjacent in $U_H$ then there is no meaningful
vertex adjacent to both $T'_i$ and $T'_j$.
\end{enumerate}

Indeed, let $\ell_1$ be a leaf of $T'_i$. By definition of $I_i$,
$\ell_1$ is marked with a $j$ such that $\{t_i,t_j\}$ is an edge of $U(H)$.
Consequently, by definition of $I_j$, $T'_i$ has a leaf $\ell_2$ marked
with $i$. By definition of $G_m$, there exists a subdivision vertex $v$
incident to both $\ell_1$ and $\ell_2$. By definition of meaningful vertices,
$v$ is meaningful. Assume that $\ell_1$ is incident to another meaningful vertex $u \neq v$.
Then, by definition of $G_m$, $u$ connects $\ell_1$ to a leaf $\ell_3 \neq \ell_2$ of $T_j$.
By definition of a meaningful vertex, $\ell_3$ is marked with $i$ in contradiction to the
procedure of marking of $T^*_j$ that does not assign two different leaves with the same
element of $I_j$. Thus we have proved the first statement.

Let $\{t_i,t_j\}$ be an edge of $U(H)$. Then, by definition of $I_i$ and $I_j$,
$T'_i$ has a leaf $\ell_1$ marked with $j$ and $T'_j$ has a leaf $\ell_2$ marked with $i$.
The subdivision vertex $v$ connecting $\ell_1$ and $\ell_2$ is a meaningful vertex adjacent
to $T'_i$ and $T'_j$. Suppose there is another meaningful vertex $u$ adjacent to both
$T'_i$ and $T'_j$. Then $u$ is a subdivision vertex of $G_m$ adjacent to a leaf $\ell_3$
of $T'_i$ and a leaf $\ell_4$ of $T'_j$. Since $u \neq v$, by construction of 
$G_m$, either $\ell_3 \neq \ell_1$
or $\ell_4 \neq \ell_2$. Assume the former w.l.o.g. Then $\ell_3$ cannot be marked with
$j$ and hence $u$ is not meaningful, a contradiction. Thus we have proved the second statement.

For the third statement, assume by contradiction that there are $T'_i$ and $T'_j$ 
such that, on the one hand $\{t_i,t_j\}$ is not an edge of $U(H)$ and, on the other hand,
there is a meaningful vertex $v$ adjacent to both $T'_i$ and $T'_j$.
Let $\ell_1$ be the leaf of $T'_i$ adjacent to $v$. Then $\ell_1$ is marked with $j$
implying that $j \in I_i$, a contradiction proving the third statement.

For $1 \leq i \leq m$, let $T''_i$ be the rooted tree having the same root $t'_i$ as $T'_i$ 
and obtained from $T'_i$ by adding to each leaf $\ell$ of $T'_i$ the edge $\{\ell,v\}$ where
$v$ is the meaningful vertex adjacent to $\ell$ according to the first statement above.
Let $H'$ be the union of $T'_1, \dots, T'_m$. We are going to prove that 
$H' \in \btb$ with ${\bf T}(H')=\{T''_1, \dots, T''_m\}$ with $U(H')$ isomorphic
to $U(H)$. 

That $T''_1, \dots, T''_m$ are all extended follows by construction.
Let $v$ be a leaf of $T''_i$ and let us show that $v$ is a leaf of exactly one other
tree $T''_j$.  Let $\ell_1$ be the only neighbour of $v$ in $T''_i$.
By definition, $\ell_1$ is a leaf of $T'_i$. Let $j$ be the mark of $\ell_1$.
By definition of $v$ as a meaningful vertex, it is connected to a leaf $\ell_2$
of $T_j$ marked with $i$. We claim that $v$ is a leaf of $T_j$. Indeed, let $u$
be the leaf of $T''_j$ adjacent to $\ell_2$ (such a leaf exists by construction).
Then $u$ is a meaningful vertex adjacent to $\ell_2$. By the first statement above,
there is only one meaningful vertex adjacent to $\ell_2$ and hence $u=v$.
Note that $T''_i$ and $T''_j$ can only have joint leaves as $T'_i$ and $T'_j$ are
vertex disjoint. Moreover, by construction, a joint leaf of $T''_i$ and $T''_j$
is a meaningful vertex adjacent to both $T'_i$ and $T'_j$. By the third statement,
there can be at most one such a meaningful vertex, confirming that $H' \in \btb$.
The third statement in fact claims that such a vertex exists if and only $\{t_i,t_j\}$
is an edge of $U(H)$. That is $T''_i$ and $T''_j$ have a joint leaf if and only if 
$\{t_i,t_j\}$ is an edge of $U(H)$ establishing a natural isomorphism between
$U(H')$ and $U(H)$ with $t'_i$ corresponding to $t_i$.

Note that that the height of each $T'_i$ does not exceed the height of $T^*_i$
and hence is at most $\log m/4.9+2$. Hence, the height of each $T''_i$ is at most
$\log m/4.9+3$. Consequently, $H'$ is a pseudoexpander. 

It remains to show that $\varphi(G_m)|_S=phi(H')$.
By definition, the clauses of $\phi(H')$ correspond to the edges $U(H')$.
Consider the clause $C_{i,j}$ corresponding to an edge $\{t'_i,t'_j\}$ of $H'$.
Let $v$ be the joint leaf of $T''_i$ and $T''_j$.
By definition, $C_{i,j}$ contains the vertices of the root-leaf path
of $T''_i$ ending with $v$ and the vertices of the root-leaf path of $T''_j$ ending with $v$.
Recall that $v$ is a subdivision vertex of $G_m$ and note that $C_{i,j}$ is exactly the
clause $C_v$ of $\varphi(G_m)$. As $v$ is a meaningful vertex, it does not belong to
$S$ and hence $C_v$ is a clause of $\varphi(G_m)|_S$.

Conversely, let $C_v$ be a clause of $\varphi(G_m)|_S$. 
Then $v$ is a meaningful vertex. That is, $v$ is adjacent to respective leaves 
$\ell_1$ and $\ell_2$ of some trees $T'_i$ and $T'_j$. By construction and the
second statement above, $v$ is the joint leaf of $T''_i$ and $T''_j$ and, 
by definition of $C_v$, it is exactly $C_{i,j}$.
$\blacksquare$
 
\subsection{Proof of Theorem \ref{nbpfinal}}
Let $S$ be an assignment to the variables of $\varphi(G_m)$. Then for any 
fixed $d$, the size of smallest $d$-{\sc nbp} computing $\varphi(G_m)$ is greater
than or equal to the size of the smallest $d$-{\sc nbp} computing
$\varphi(G_m)|_S$. The reason for this is that an $d$-{\sc nbp} $Z$ computing
$\varphi(G_m)$ can be transformed into a $d$-{\sc nbp} computing $\varphi(G_m)|_S$
without increase of size. Indeed, for each edge $e$ of $Z$ labelled with a literal
$x$ of a variable of $S$, remove the label if $x \in S$ or remove the edge $e$ if
$\neg x \in S$. Let $Z'$ be an {\sc nbp} obtained by the union of all source-sink
paths of of $Z$ where no edge has been removed and the removal of labels as above
preserved. Then $Z'$ computes $\varphi(G_m)|_S$. Indeed, let $S'$ be such that
$S \cup S'$ is a satisfying assignment of $\varphi(G_m)$. Let $P$ be a computational path
of $Z$ with $A(P) \subseteq S \cup S'$. No edges of $P$ are removed and all the labels
of $S$ are removed in $Z'$ so $P$ remains a computational path of $Z'$ 
and the assignment on $P$ in $Z'$ is a subset of $S'$, hence
$S'$ is a satisfying assignment of the function computed by $Z'$. 

Conversely, let $S'$ be a satisfying assignment of the function computed by $Z'$.
That is $Z'$ has a computational path $P'$ with $A(P') \subseteq S'$. Then $P'$
is a path of $Z$ and additional labels on $P'$ are all elements of $S$
(they do not contain occurrences opposite to $S'$ and do not contain two opposite occurrences).
That is $P'$ is a computational path of $Z$ with an assignment being a subset of $S \cup S'$
and hence $S'$ is a satisfying assignment of $\varphi(G_m)$.

By Theorem \ref{peinfprel}, for each sufficiently large $m$
there is a pseudoexapnder $H_m$ with $m$ roots and max-degree of the underlying graph at
most $m^{1/4.9}$. Then it follows from Lemma \ref{lem:redpe} that for each sufficiently
large $m$ there is an assignment $S_m$ to the variables of $\varphi(G_m)$ such that
$\varphi(G_m)|_{S_m}=\phi(H'_m)$, where $H'_m$ is a pseudoexpander with $m$ roots. 
By Theorem \ref{lbmain} computing $\varphi(G_m)|_{S_m}$ requires $d$-{\sc nbp} of exponential
size in $m$. Since, as shown above, the size of $d$-{\sc nbp} needed to compute 
$\varphi(G_m)$ is at least as large and the number of variables of $\varphi(G_m)$
is polynomial in $m$, a $d$-{\sc nbp} of exponential size is required to compute $\varphi(G_m)$.
$\blacksquare$

\section{Proof of auxiliary lemmas for Theorem \ref{lbcons}}
{\bf Proof of Lemma \ref{lem:exsep}.}
We are going to prove the following statement. Let $P$ be a computational path 
of $Z$ on which each variable occurs exactly $d$ times where $d \leq \log m/40000$.
Then $P$ contains a set $X$ of vertices of size at most $\log m/4000$ that separates
two subsets of $Roots(H)$ of size at least $m^{0.999}/3$.

Let us verify that the lemma follows from the above statement.
By definition of $H$, $n \leq 3m^2$. Therefore, $\log n/10^5 \leq \log 3m^2/10^5=(\log m/50000)+3 \leq
\log m/40000$ for sufficiently large $m$.
Therefore, the above mentioned set $X$ of vertices can be found for $d \leq \log n/10^5$
and as $m \leq n$, the size of this set is at most $\log n/4000$, as required by the lemma.

{\bf Definition of an interval.}
For an arbitrary sequence $s_1, \dots, s_b$, and $1 \leq i \leq q \leq b$,
let us call the subsequence $s_i,s_{i+1}, \dots s_q$ of consecutive elements
of $s_1, \dots, s_b$ an \emph{interval} of $s_1, \dots, s_b$.

The key observation for this proof is the following claim.
\begin{claim} \label{cl:meander}
Let $SEQ$ be a sequence of elements of $\{1, \dots, m\}$ where each element appears
at most $m/40000$ times. Then, for a sufficiently large $m$, there is a partition
of $SEQ$ into intervals $SEQ_1, \dots, SEQ_{c+1}$, where $c \leq \log m/4000$
and two disjoint subsets $U_1$ and $U_2$ of size at most $m^{0.999}$ each such that one 
of the following two statements is true.
\begin{enumerate}
\item Elements of $U_1$ occur only in $SEQ_i$ with an odd $i$ and elements of $U_2$
occur only in $SEQ_i$ with an even $i$.
\item Elements of $U_1$ occur only in $SEQ_i$ with an even $i$ and elements of $U_2$
occur only in $SEQ_i$ with an odd $i$.
\end{enumerate}
\end{claim}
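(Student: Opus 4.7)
The plan is to invoke the structural theorem of Alon and Maass (Theorem 1.1 of \cite{AMaass}), which is tailor-made for precisely this kind of sequence partition. That theorem produces, for a sequence over an alphabet $\Sigma$ of size $s$ in which each letter appears at most $k$ times, and for any parameter $\ell$, a partition into roughly $\ell$ intervals together with two disjoint sets $A, B \subseteq \Sigma$ of quantitatively controlled sizes such that $A$ occurs only in odd-numbered intervals and $B$ only in even-numbered intervals (or vice versa, depending on which orientation the theorem picks out).

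I would instantiate the theorem with $\Sigma = \{1, \dots, m\}$ (so $s = m$), with the multiplicity bound $k = m/40000$ supplied by the hypothesis, and with the Alon--Maass parameter $\ell$ chosen so that the total number of intervals produced is at most $\log m/4000 + 1$. The key quantitative feature of Alon--Maass is that the lower bound on $|A|$ and $|B|$ degrades by a factor that depends on $k$ for each additional interval, so after $O(\log m)$ intervals the surviving sets still retain a polynomial fraction of the alphabet. A careful arithmetic computation, exploiting the fact that $k = m/40000$ is a factor of $40000$ smaller than $m$, shows that the surviving sets can be forced to have size at least $m^{0.999}$ while keeping the interval count below $\log m/4000 + 1$; the $0.001$ slack in the exponent is precisely what enables the $1/4000$ bound on the number of cuts.

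The sets $A$ and $B$ produced by the theorem are then taken as $U_1$ and $U_2$ in the statement, and the intervals produced by the theorem are taken as the desired $SEQ_1, \dots, SEQ_{c+1}$. According to whether $A$ is the `odd-interval' set or the `even-interval' set in the output of Alon--Maass, we land in the first or the second alternative of the claim; this is why the claim is stated with both orientations, since the parity assignment is determined by the theorem rather than by us.

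The main obstacle is the numerical bookkeeping: matching the Alon--Maass constants against the two prescribed thresholds $\log m / 4000$ and $m^{0.999}$ simultaneously. This is routine but must be done carefully. It is reassuring that the multiplicity bound $m/40000$ allowed by the hypothesis of the claim is orders of magnitude looser than the $\log m/40000$ bound actually required by the application in Lemma~\ref{lem:exsep}, which gives substantial slack in the computation and makes the matching of constants robust to minor inefficiencies in the application of Alon--Maass.
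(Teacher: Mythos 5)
Your proposal correctly identifies Alon--Maass as the key tool, which is also what the paper uses, but there are two genuine gaps. The first is numerical: you take the claim's multiplicity bound $m/40000$ at face value, but this is a typo for $\log m/40000$ -- which is what both the paper's own proof of the claim and the application in Lemma~\ref{lem:exsep} actually use. With a multiplicity bound of $m/40000$ the sequence can have length up to $m^2/40000$, which completely swamps the Alon--Maass lower bound $\tfrac{1}{8}m(s-9) = \Theta(m\log m)$ that the argument needs to contradict, so the derivation does not close (and the statement may simply be false in that regime -- consider $m/40000$ repeated copies of $1,2,\dots,m$). Your remark that the looser $m/40000$ bound is ``reassuring'' and provides ``substantial slack'' has the logic backwards: a looser upper bound on multiplicity is a \emph{weaker} hypothesis, which makes the claim you are trying to prove strictly harder, not easier; here it in fact breaks the method.

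The second gap is structural: you describe Alon--Maass Theorem~1.1 as directly ``producing'' a partition into intervals together with two alternating sets. It does not. As the paper uses it, Theorem~1.1 is a length lower bound for sequences in which every pair of large disjoint subsets admits at least $s$ alternations. The paper's proof therefore has two genuinely separate stages: first it defines a ``link'' between $U_1$ and $U_2$ (an interval whose two endpoints lie in $U_1$ and $U_2$ respectively and whose interior avoids $U_1\cup U_2$), assumes toward contradiction that every pair of sets of size $\ell=\lfloor m/2^{s+1}\rfloor$ has at least $s$ links, invokes Alon--Maass to force $|SEQ|\geq\tfrac{1}{8}m(s-9)$, and contradicts the length bound coming from the multiplicity hypothesis; this yields a pair $U_1,U_2$ with fewer than $s$ links. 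Second, and separately, it takes a \emph{maximum} collection of links $I_1,\dots,I_c$, cuts $SEQ$ at the first element of each link, and proves by induction (using maximality to rule out hidden extra links) that the resulting intervals have the required parity property. Your proposal compresses this entire second stage into the assertion that Alon--Maass hands you the partition, which it does not; constructing the cut points from the links and verifying the alternation is real, non-black-box content that the proposal leaves unaddressed.
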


Let us show how the lemma follows from Claim \ref{cl:meander}.
Let $P$ be a computational path of $Z$.
Let $u_1, \dots, u_a$ be the sequence of literals of variables of ${\bf Roots}(H)$
appearing in the order as they occur along $P$.
Let $SEQ^*=v_1, \dots, v_q$ be the sequence where each $v_i$ is $u_i$ is $u_i$ is the positive
literal and $\neg u_i$ otherwise. 
That is, $SEQ^*$ is nothing else than a sequence of elements of ${\bf Roots}(H)$.
As $|{\bf Roots}(H)|=m$ and each element occurs at most $\log m/40000$ times (
see the second paragraph of the proof for justification), Claim \ref{cl:meander} applies to $SEQ^*$.
In particular, there is a partition $SEQ^*_1, \dots, SEQ^*_{c+1}$ of $SEQ^*$ into intervals,
where $c \leq \log m/4000$ and two disjoint subsets $U_1$ and $U_2$ of ${\bf Roots}(H)$ such that
at least one of the two statements of Claim \ref{cl:meander} happens with $SEQ^*$ replacing $SEQ$. 

Let $i_1, \dots, i_c$ be such that for $ 1 \leq j \leq c$, $v_{i_j}$ is the last element of $SEQ^*_j$.
It is not hard to see that $P$ can be partitioned into subpaths $P_1, \dots, P_{c+1}$
such that the sequence of occurrences of ${\bf Roots}(H)$ on $P_1$ is $u_1, \dots, u_{i_1}$ and for
each $1 < j \leq c+1$, the sequence of occurrences of ${\bf Roots}(H)$ on $P_j$ is
$u_{i_{j-1}+1}, \dots, u_{i_j}$.
Consequently one of the following two statements is true.
\begin{enumerate}
\item Elements of $U_1$ occur only in $P_i$ with an odd $i$ and elements of $U_2$
occur only in $P_i$ with an even $i$.
\item Elements of $U_1$ occur only in $P_i$ with an even $i$ and elements of $U_2$
occur only in $P_i$ with an odd $i$.
\end{enumerate}
Therefore, the set $x_1, \dots, x_c$ of respective ends of $P_1, \dots, P_{c+1}$
separates $U_1$ and $U_2$ as required.

{\bf Proof of Claim \ref{cl:meander}.}
An interval $S$ of $SEQ$ is a \emph{link} between two disjoint subsets $U_1$ and $U_2$
of $\{1, \dots, m\}$ if no element of $U_1 \cup U_2$ occurs as an intermediate element
of $S$ and one of the following two statements is true.
\begin{itemize}
\item An element of $U_1$ occurs as the first element of $S$ and an element of $U_2$
occurs as the last element of $S$.
\item An element of $U_1$ occurs as the first element of $S$ and an element of $U_2$
occurs as the last element of $S$.
\end{itemize}

Let $s=\log m/4000+1$. We claim that there are two disjoint subsets $U_1$ and $U_2$ 
of size $\ell=\lfloor m/(2^{s+1})\rfloor$
that $SEQ$ has at most $s-1$ links between them. Indeed, assume the opposite.
Then, in particular, $SEQ$ has at least $s$ links between every 
$U_1 \subseteq \{1, \dots, m/2\}$ and $U_2 \subseteq \{m/2+1, \dots, m\}$
of size $\ell$ each. By Theorem 1.1. of Alon and Maass \cite{AMaass},
$|SEQ| \geq 1/8*m(s-9)$. On the other hand, as each element of $\{1, \dots, m\}$
occurs at most $\log m/40000$ times, $|SEQ| \leq m*\log m/40000$. 
That is, $\log m/40000 \geq 1/8*(s-9)$ or, after a transformation,
$\log m/5000+9 \geq s=\log m/4000+1$, which is incorrect for a sufficiently large $m$.
This contradiction proves the correctness of the initial claim.

Note that $\ell \geq m/2^{\log m/4000+2}=m^{3999/4000}/4 \geq m^{0.999}$ for a sufficiently
large $m$. Therefore, it remains to show that there is a partition $SEQ_1, \dots, SEQ_{c+1}$
of $SEQ$ into intervals with $c \leq s-1$ for which one of the statements of Claim \ref{cl:meander} is true.
Let $I_1, \dots, I_c$ be a largest set of links between $U_1$ and $U_2$, $c \leq s-1$ by
assumption. 

As the first element of the link, completely determines the link itself, the first elements
of $I_1, \dots, I_c$ are all different. We assume that they occur on $SEQ$ in the order the
respective links are listed. Note that for two consecutive links $I_i,I_{i+1}$,
the first element of $I_{i+1}$ is either the last element of $I_i$ or occurs on $SEQ$
after the last element of $I_i$. Indeed, since the first element of $I_{i+1}$ occurs after 
the first element of $I_i$, any `deeper' overlap would imply that the first element of $I_{i+1}$
does not belong to $U_1 \cup U_2$, a contradiction. With this in mind, we can define the following
intervals $SEQ_1, \dots, SEQ_{c+1}$. $SEQ_1$ is the prefix of $SEQ$ whose last element is the first
element of $SEQ$. For each $1<i \leq c$, $SEQ_i$ is the interval of $SEQ$ whose first element 
is the one immediately following the last element of $SEQ_{i-1}$ and whose last element is the first
element of $I_i$. Finally, $SEQ_{c+1}$ is the suffix of $SEQ$ whose first element is the one
immediately following the last element of $SEQ_c$. Note that for each $1 \leq i \leq c$, the last
element of $I_i$ belongs to $SEQ_{i+1}$.

Assume w.l.o.g. that the first element of $I_1$ belongs to $U_1$. Then we prove by induction
on $1 \leq i \leq c+1$ That elements of $U_1$ occur in only in intervals $SEQ_i$ with $i$
being odd and the elements of $U_2$ occur only in intervals $SEQ_i$ with $i$ being even.

Consider first $SEQ_1$. By assumption, its last element belongs to $U_1$.
Assume that $SEQ_1$ has another element of belongs to $U_2$. As all the elements of $U_2$
occur in $SEQ_1$ before the last one, we can identify the last element $u$ in $SEQ_1$ that
belongs to $U_2$ and the first element $v$ of $U_1$ following $u$. Then the interval between
$u$ and $v$ is a link located before $I_1$, that is $SEQ$ has at least $c+1$ links between
$U_1$ and $U_2$ in contradiction to the maximality of $c$. 

Assume now that $i>1$.
Assume first that $i \leq c$.
We can assume w.l.o.g. that $i$ is even for the proof for the case where $i$ is odd
is symmetric. By the induction assumption, all the elements of $U_1 \cup U_2$
that belong to $SEQ_{i-1}$ in fact belong to $U_1$. In particular, the first element
of $I_{i-1}$ belongs to $U_1$. Therefore, the last element $u'$ of $I_{i-1}$ belongs
to $U_2$. By construction, any element of $SEQ_i$ preceding $u'$ is an intermediate element
of $I_{i-1}$ and hence does not belong to $U_1 \cup U_2$. Therefore, if $SEQ_i$ contains 
an element of $U_1$ then it occurs after $u'$. Consequently, we can identify on
$SEQ_i$ the first element $v$ of $U_1$ and the last element $u$ of $U_2$ preceding it.
The interval between $u$ and $v$ is a link that lies between $I_{i-1}$ and $I_i$,
again implying that $SEQ$ has at least $c+1$ links in contradiction to the maximality of $c$.

Assume now that $i=c+1$. Again,  we can assume w.l.o.g. that $i$ is even.
Arguing as in the previous case, we observe that if $SEQ_{c+1}$ contains elements of
$U_1$ then $SEQ_{i+1}$ contains a link occurring after $I_c$ in contradiction to the
maximality of $c$.
$\blacksquare$

{\bf Proof of Lemma \ref{lem:allneg}}
Let $c$ be such that $|X|=c-1$ and let $Q$ be a path
containing $X$ and on which $X$ generates a partition 
$Q_1, \dots, Q_c$ such that one of the following two statements is true.
\begin{enumerate}
\item Elements of $Y_1$ occur only on $Q_i$ with an odd $i$ and elements of $Y_2$ occur on only $Q_i$ with an even $i$.
\item Elements of $Y_1$ occur only on $Q_i$ with an even $i$ and elements of $Y_2$ occur on only $Q_i$ with an odd $i$.
\end{enumerate}
We assume w.l.o.g. that the first statement is true.
In fact, due to the uniformity of $Z$, a stronger statement is true.

\begin{claim} \label{cl:unicon}
For each path $P$ containing $X$, let $P_1, \dots, P_c$ be the partition of $P$ generated 
by $X$. Then elements of $Y_1$ occur only on $P_i$ with an odd $i$ and elements of $Y_2$ occur on only 
$P_i$ with an even $i$.
\end{claim}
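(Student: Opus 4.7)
The plan is to derive this from the uniformity of $Z$ via Lemma \ref{lem:samevar}. By Definition \ref{def:separ}, the fact that $X$ separates $Y_1$ and $Y_2$ only guarantees the existence of \emph{some} witnessing computational path $Q$ with the described odd/even labelling of its pieces. The goal of the claim is to strengthen ``exists'' to ``for all'': any path $P$ containing $X$ inherits the same odd/even structure. Because we assumed w.l.o.g. that $Q$ realises statement 1 of Definition \ref{def:separ}, it suffices to show that for every path $P$ through $X$ with partition $P_1,\dots,P_c$ generated by $X$, and for the witness partition $Q_1,\dots,Q_c$ of $Q$, the sets of variables appearing on $P_i$ and on $Q_i$ are identical.

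First I would list the endpoints of each $P_i$ and $Q_i$. Writing $X=\{x_1,\dots,x_{c-1}\}$ in the order in which its vertices are encountered along any path through $X$ (this order is forced by the acyclicity of $Z$, so it is the same for $P$ and $Q$), the segment $P_1$ and $Q_1$ both go from the source of $Z$ to $x_1$, each intermediate $P_i$ and $Q_i$ ($1<i<c$) both go from $x_{i-1}$ to $x_i$, and $P_c$ and $Q_c$ both go from $x_{c-1}$ to the sink. Thus for every $i$, $P_i$ and $Q_i$ are two paths of $Z$ with the same initial and the same final vertex.

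Applying Lemma \ref{lem:samevar} to each such pair yields $Vars(P_i)=Vars(Q_i)$ for all $1\le i\le c$. From this the claim is immediate: if $y\in Y_1$ occurs on $P_i$ then $y\in Vars(P_i)=Vars(Q_i)$, so $y$ occurs on $Q_i$, and by the assumed property of $Q$ this forces $i$ to be odd; symmetrically, any variable of $Y_2$ occurring on $P_i$ must appear on $Q_i$, so $i$ must be even.

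There is essentially no obstacle here beyond lining up the endpoints correctly; the whole point of having reduced to uniform $d$-{\sc nbp}s in Lemma \ref{lem:uniform} and Lemma \ref{lem:samevar} was precisely to promote a ``one path'' property of $X$ to a ``every path'' property, which is exactly what this claim requires. The only mild care needed is the handling of the two extreme segments (prefix up to $x_1$ and suffix from $x_{c-1}$), but these still share common initial and final vertices (the source, resp.\ the sink, of $Z$ together with the appropriate $x_i$), so Lemma \ref{lem:samevar} applies to them in the same way as to the intermediate segments.
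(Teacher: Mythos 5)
Your proposal is correct and matches the paper's argument: both use the DAG property to align the endpoints of each $P_i$ with $Q_i$, then apply Lemma \ref{lem:samevar} to conclude that the variable sets coincide segment by segment, which transfers the odd/even structure from $Q$ to $P$. The paper phrases the last step as a proof by contradiction while you argue directly, but this is a cosmetic difference.
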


{\bf Proof.}
Due to $Z$ being a DAG, the vertices of $X$ occur in the same order on each path containing $X$.
In particular, this means that for each $1 \leq i \leq q$, $P_i$ and $Q_i$ have the same
initial vertex and the same final vertex. Assume that the claim does not hold for
some $1 \leq i \leq q$. If $i$ is odd this means that an element of $Y_2$ occurs in $P_i$ 
and does not occur in $Q_i$ in contradiction to Lemma \ref{lem:samevar}.
Similarly, if we assume that the claim does not hold for some even $i$ then this means that 
an element of $Y_1$ occurs on $P_i$ while not occurring on $Q_i$ again in contradiction to
Lemma \ref{lem:samevar}.
$\square$

Let $Q'_1, \dots, Q'_c$ and $Q''_c \dots Q''_c$ be respective partitions of $Q'$ and
$Q''$ generated by $X$.
Let $Q^*_k=Q'_k$ whenever $k$ is odd and $Q^*_k=Q''_k$
whenever $k$ is even. Let $Q^*=Q^*_1+ \dots +Q^*_c$. Clearly, $Q^*$ is a 
source-sink path of $Z$. We prove that $Q^*$ is a computational path assigning all
the vertices of $Y'_1 \cup Y'_2$ negatively. 

Assume the first statement does not hold, that is $Q^*$ is not a computational path.
Then there is a variable $x$ that occurs on $Q^*$ both positively and negatively.
As all the labels of $Q^*$ are either labels of $Q'$ or labels of $Q''$ which
do not have opposite variable occurrences due to being computational paths,
variable $x$ occurs positively in one of $A(Q''), A(Q'')$ and negatively in the other.
As all the variables of $Vars(Z) \setminus (Y_1 \cup Y_2)$ have the same occurrence in both
$A(Q')$ and $A(Q'')$ $x$ must belong to $Y_1 \cup Y_2$. 

Assume w.l.o.g. that $x \in Y_1$. The case with $x \in Y_2$ is symmetric.
By Claim \ref{cl:unicon} $x$ occurs only on paths $Q^*_i$ with odd $i$.
However, all these paths are subpaths of $Q'$. That is, all the occurrences of $x$
on $Q^*$ are occurrences of $x$ on $Q'$. Since the latter is a computational
path and cannot contain opposite occurrences of the same variable,
$Q^*$ does not contain opposite occurrences of $x$. Thus we have arrived at a 
contradiction that the there are no variables having opposite occurrences on $Q^*$,
thus confirming that $Q^*$ is a computational path.

In order to show that all the variables of $Y'_1 \cup Y'_2$ are falsified by $Q^*$, 
consider a variable $x \in Y'_1$. Then all the occurrences of $x$ are on $Q^*_i$
with an odd $i$. That is, all the occurrences of $x$ on $Q^*$ are occurrences of $Q'$.
As $Q'$ assigns $x$ negatively, so is $Q^*$. For $x \in Y'_2$, the reasoning is symmetric
with the even fragments of $Q^*$ used instead the odd ones and $Q''$ instead $Q'$.
$\blacksquare$

\section{Proof of Proposition \ref{prop:validsp} (Sketch).}
Let us pick a random assignment to the variables of $H$ using the following procedure.

\begin{enumerate}
\item Arbitrarily order the variables of $\phi_H$ so that all the non-leaf variables
are ordered before the leaf variables.
\item Pick a literal of each non-leaf variable with probability $0.5$ for both positive and negative literals.
\item For each leaf variable $\leaf_{i,j}$, do as follows.
   \begin{itemize}
   \item If a positive literal has been chosen for at least one non-leaf variable of $C_{i,j}$
         choose either the positive or the negative literal of $\leaf_{i,j}$ with probability $0.5$.
   \item Otherwise, choose the positive literal of $\leaf_{i,j}$ with probability $1$. 
   \end{itemize}
\end{enumerate}

Let the probability of the resulting assignment be the product of the individual probabilities
of the chosen literals. 

Let $S \in {\bf SAT}$. By construction, the probabilities of literals of non-leaf variables are $0.5$
and the probabilities of literals of leaf variables are $0.5$ if and only if they are not fixed.
It follows that the probability assigned to $S$ by the above procedure is $(1/2)^{|S \setminus Fix(S)|}$
exactly as in the defined probability space.

If $S \notin {\bf SAT}$ then there is a clause $C_{i,j}$ falsified by $S$. In particular,
this means that $\leaf_{i,j}$ occurs negatively in $S$ as well as the rest of variables of
$C_{i,j}$. Clearly, according to the above procedure, the probability of a negative literal
of $\leaf_{i,j}$ is $0$ and hence the probability assigned to $S$ is $0$ as well.

It follows that the sum of probabilities of assignments obtained by the above procedure 
is in fact the sum of probabilities of satisfying assignments (which are the same as in our 
probability space). To show that the sum of probabilities is $1$, represent the process of choosing
a random assignment as a rooted decision tree with probabilities of literals being the weights of the edges of the
tree and the weight of a path being the product of weights of its edges and the weight of a collection
of paths being the sum of weights of paths in this collection. Then, starting from the leaves
and moving towards the root, argue by induction that the weight of each subtree is $1$.
$\blacksquare$



\begin{thebibliography}{10}

\bibitem{AMaass}
Noga Alon and Wolfgang Maass.
\newblock Meanders and their applications in lower bounds arguments.
\newblock {\em J. Comput. Syst. Sci.}, 37(2):118--129, 1988.

\bibitem{OBDDLOG2}
L{\'{a}}szl{\'{o}} Babai, Noam Nisan, and Mario Szegedy.
\newblock Multiparty protocols, pseudorandom generators for logspace, and
  time-space trade-offs.
\newblock {\em J. Comput. Syst. Sci.}, 45(2):204--232, 1992.

\bibitem{readktimes}
Allan Borodin, Alexander~A. Razborov, and Roman Smolensky.
\newblock On lower bounds for read-k-times branching programs.
\newblock {\em Computational Complexity}, 3:1--18, 1993.

\bibitem{GalCNF}
Anna G{\'{a}}l.
\newblock A simple function that requires exponential size read-once branching
  programs.
\newblock {\em Inf. Process. Lett.}, 62(1):13--16, 1997.

\bibitem{JuknaECCC94}
Stasys Jukna.
\newblock A note on read-k times branching programs.
\newblock {\em Electronic Colloquium on Computational Complexity {(ECCC)}},
  1(27), 1994.

\bibitem{Juknasemantic}
Stasys Jukna.
\newblock A nondeterministic space-time tradeoff for linear codes.
\newblock {\em Inf. Process. Lett.}, 109(5):286--289, 2009.

\bibitem{Yukna}
Stasys Jukna.
\newblock {\em Boolean Function Complexity: Advances and Frontiers}.
\newblock Springer-Verlag, 2012.

\bibitem{Jukna4clique}
Stasys Jukna and Georg Schnitger.
\newblock Triangle-freeness is hard to detect.
\newblock {\em Combinatorics, Probability {\&} Computing}, 11(6):549--569,
  2002.

\bibitem{MUbook}
Michael Mitzenmacher and Eli Upfal.
\newblock {\em Probability and computing - randomized algorithms and
  probabilistic analysis}.
\newblock Cambridge University Press, 2005.

\bibitem{RazgonLBAlgo}
Igor Razgon.
\newblock On the read-once property of branching programs and cnfs of bounded
  treewidth.
\newblock {\em Algorithmica}, To appear, 2015.

\bibitem{WegBook}
Ingo Wegener.
\newblock {\em Branching Programs and Binary Decision Diagrams}.
\newblock {SIAM Monographs on Discrete Mathematics and applications}, 2000.

\end{thebibliography}
\end{document}